\newcommand{\setdiff}{\! \setminus \!}
\newcommand{\set}[1]{ [{#1}] }
\newcommand{\m}[1]{\mathcal{#1}}
\newcommand{\mb}[1]{\mathbb {#1}}
\theoremstyle{plain}
\newtheorem{thm}{Theorem}[section]
\newtheorem*{thm*}{\bf Theorem }
\newtheorem{lem}[thm]{Lemma}
\newtheorem{prop}[thm]{Proposition}
\newtheorem*{prop*}{\bf Proposition}
\newtheorem{cor}[thm]{Corollary}
\theoremstyle{definition}
\newtheorem{defn}[thm]{Definition}
\newtheorem{eg}[thm]{Example}
\theoremstyle{remark}
\newtheorem{rem}[thm]{Remark}
\numberwithin{equation}{section}
\newcommand{\be}{\begin{equation}}
\newcommand{\ee}{\end{equation}}
\newcommand{\bea}{\begin{eqnarray}}
\newcommand{\eea}{\end{eqnarray}}
\newcommand{\beas}{\begin{eqnarray*}}
\newcommand{\eeas}{\end{eqnarray*}}
\theoremstyle{plain}
\theoremstyle{definition}
\numberwithin{thm}{section}
\numberwithin{equation}{section}
\def\One{\mathbb{I}}
\title{Graph Complexes and Feynman Rules}
\author{Marko Berghoff and Dirk Kreimer}
\address{Humboldt U.\ Berlin, Unter den Linden 6, 10099 Berlin, Germany}
\begin{document}
\maketitle
\begin{abstract}
We investigate Feynman graphs and their Feynman rules from the viewpoint of graph complexes.
We focus on the interplay between graph homology, Hopf-algebraic structures on Feynman graphs and the analytic structure of their associated integrals.  Furthermore, we discuss the appearance of cubical complexes where the differential is formed by reducing internal edges and by putting edge-propagators on the mass-shell.
\end{abstract}

\tableofcontents

Mathematics Subject Classification (MSC2020): 
81T15, 81Q30, 18G85, 57T05, 14D21.

\section{Introduction}

\subsection{Motivation}
Feynman integrals and graph complexes both belong arguably to the most mysterious objects populating modern mathematical physics. They have rather simple definitions, yet we only have a very limited understanding of the general structures underlying these objects. These structures appear to be very fundamental as both graph complexes and Feynman integrals are connected to many different areas of mathematics. For graph complexes these areas include the study of embedding spaces, the deformation theory of operads, the cohomology of various groups and Lie algebras, and the topology of moduli spaces, just to give a few examples. We refer to the original work of Kontsevich \cite{Kont1,Kont2} as well as \cite{Willwacher,ConantVogtmann,KarenV_MSRI,ChanGalatiusPayne} for further reading.\footnote{It is difficult to give a concise survey on graph complexes as they come in many variants; graphs may be decorated with additional data, satisfy certain relations/symmetries etc.} Feynman integrals on the other hand, apart from being the central objects in perturbative quantum field theory, are connected to the study of periods and special functions in number theory \cite{FBrown, FrancisI, FrancisII, PanzerHepp} as well as fundamental questions in modern algebraic geometry \cite{BEK,BlKrLMHS}. In addition, the discrete shadows of these integrals, Feynman graphs or diagrams, have a rich combinatorial structure which reaches into the fields of Hopf algebras \cite{core,BV,Kr-Y} (with plenty of applications from combinatorics to stochastic analysis) and even as far as category theory \cite{Kaufmann}.

In the present paper we aim at drawing a connection between the two fields, that is, we investigate the role graph complexes play in the study of Feynman integrals in perturbative quantum field theory.\footnote{For the opposite direction, see \cite{Francis-gc,Francis-hgc}.}
In the following we write $\Phi(G)$ for the Feynman integral associated to a Feynman graph $G$. Our goal is to study the analytic structure of $\Phi(G)$, viewed as a function of its kinematic variables, and clarify the role two particular graph complexes play in this endeavour,
\begin{itemize}
    \item a ``traditional" graph complex, generated by Feynman graphs $G$, the differential defined by a (signed) sum over all possible edge-collapses, \begin{equation*}                                                                                                                                      
        d:G \longmapsto \sum_{e \in E(G)} \pm G/e,                                                                                                                                           \end{equation*}
    \item a cubical chain complex whose generators are pairs $(G,F)$ where $F\subset G$ is a spanning forest and the differential is the (signed) sum of two maps, summing over all ways of collapsing or removing edges in $F$,
    \begin{equation*}
     d=d_0 + d_1:G \longmapsto \sum_{e\in E(F)} \pm \Big( (G,F-e) - (G/e,F/e) \Big).
    \end{equation*}
\end{itemize}  

It is important to note that in the case of \textit{topological} quantum field theories there is a direct link between graph complexes and the Feynman diagrams of their perturbative expansions. However, for ``real'' physical theories there appears to be no variant of Stokes' theorem which would allow to transfer constructions from the former to the latter case. We therefore propose here a different approach to draw a connection between the two fields.

\subsection{Philosophy}

To connect graph complexes to the study of Feynman integrals we pursue two main ideas. Our first approach continues a program initiated in \cite{BlKrCut}. It is based on the observation that the defining operations of the above mentioned complexes, collapsing or removing edges (from a spanning forest of $G$), have a natural interpretation in physics, a fact which so far has only been partially appreciated. In this regard we view edge-collapses as a means of relating the analytic structures of different -- ``neighboring" -- Feynman integrals, while removing an edge amounts to putting it \emph{on the mass-shell}, that is, to replace the corresponding propagator by its (positive energy) residue. We hence obtain applications in the study of Landau varieties of graphs and their associated monodromies -- see the next section for a list of precise results. An underlying thread is the comparison of two approaches to Feynman graphs and their analytic evaluation, the direct integration of quadrics in momentum space and the parametric approach.

Our second approach is more of an indirect nature. It is based on the observation that both graph complexes and Feynman integrals are related to various moduli spaces (of graphs). For moduli spaces of curves this is a well-known story, originating with the very work of Kontsevich that introduced graph complexes \cite{Kont1,Kont2}. For moduli spaces of graphs such complexes appear quite naturally as chain complexes associated to their cell structure \cite{rational,CoHaKaVo}. These two pictures are not unrelated, see \cite{ConantVogtmann,KarenV_MSRI} as well as \cite{ChanGalatiusPayne} which uses a moduli space of \textit{tropical curves}. 

In the world of Feynman integrals, a direct connection to moduli spaces of curves was established by the work of Francis Brown \cite{FBrown} (see also their role in the study of string scattering amplitudes \cite{francis-clement}).
Furthermore, Brown introduced canonical differential forms on moduli spaces
of metric graphs in \cite{Francis-gc}, and showed how they allow to study the cohomology of the commutative graph complex. The corresponding canonical integrals look tantalizingly similar to parametric Feynman integrals. Moreover, examples suggest that their respective periods are related by integration-by-parts methods. 
 He extended this connection to graphs with masses and kinematics recently in \cite{Francis-hgc}.

In addition, the works \cite{Marko,MaxMarko,MaxMaster} introduced moduli spaces of Feynman graphs, tailor-made to the study of Feynman amplitudes. On these spaces parametric Feynman integrals can be understood as evaluations of certain volume forms (or cochains). It is therefore natural to ask what the topology and geometry of such moduli spaces can tell us about Feynman amplitudes. 

Roughly speaking, a moduli space of graphs $\m {MG}_{n,s}$ is built as a disjoint union of cells, one for each (isomorphism class of) Feynman graph with $n$ loops and $s$ legs, glued together along face relations induced by edge collapses. This cell structure gives then rise to a graph complex via its associated chain complex $C_*(\m {MG}_{n,s})$ on which the boundary map transforms into a sum of edge-collapses (this is not quite a graph complex of Feynman diagrams, but closely related to it). Furthermore, inside this moduli space sits a homotopy equivalent subspace, called its \emph{spine}, a simplicial complex whose simplices assemble into a cube complex, parametrized by pairs $(G,F)$ where $G$ is a (Feynman) graph and $F$ a spanning forest of $G$. Its associated chain complex is the cubical chain complex described above.  
In topological terms, the former complex computes certain relative homology groups of $\m {MG}_{n,s}$ while the cubical chain complex computes its full homology. In the case of one loop graphs this relation simplifies; the spine is merely a subdivision of $\m {MG}_{1,s}$ and the two complexes are quasi-isomorphic.
\newline

The present work is to be understood as a first approximation to building a bridge between the lands of graph complexes and Feynman rules. We believe that eventually a moduli space of appropriately decorated graphs (in the sense of Culler-Vogtmann's \emph{Outer space} \cite{V}) and/or local systems on it to be the right setting to investigate the analytic structure of Feynman integrals from a geometric/topological point of view. However, already on the combinatorial level we observe how graph complexes have interesting and fruitful applications to the study of Feynman integrals. 

\subsection{Outline and results}

After setting up some notation in Sec.(\ref{Notation}) we introduce various Hopf algebras of Feynman graphs,
\begin{itemize}
 \item $H_{core}$, the Hopf algebra of core/1PI Feynman graphs, 
 \item $H_C$, a Hopf algebra of \textit{Cutkosky} graphs,
 \item $H_{GF}$, a generalization of $H_{core}$ to pairs $(G,F)$ of graphs and spanning forests.
\end{itemize}
Our first goal in Sec.(\ref{HopfAlgebras}) and (\ref{Flags}) is to define and study various maps and structures on these algebras, and to investigate how they interact with each other. To switch to the analytic side of things we recall then in Sec.(\ref{FeynmanRules}) the definition of (renormalized) Feynman rules $\Phi$ ($\Phi_R$) and in Sec.(\ref{Landau}) the notion of Landau singularities of a Feynman graph (or rather of the function defined by the integral associated to $G$ via $\Phi_R$). This sets the ground to derive the following results:

\subsubsection*{Core Graphs}
In Sec.(\ref{partialfractions}) we show that the computation $\Phi_R(G)$ of a core Feynman graph $G\in H_{core}$ can be obtained as a sum of evaluations of pairs $(G,T)$ where $T$ runs over all spanning trees of $G$ and edges not in the spanning tree are evaluated on-shell, 
\[
\Phi_R((G,T))=\sum_{\sigma\in S_{|G|}}\int_{0<s_{\sigma(|G|)}<\cdots<s_{\sigma(1)}<\infty}\left(\prod_{e\in E_T}
\frac{1}{Q_e}\right)^R_{|k(j)_0^2=s_j+m_j^2,\,j\not\in E_T}\prod_{j\not\in E_T}ds(j).
\]
See Thm.(\ref{phiGT}) for the notation. In terms of generalized Feynman rules on $H_{GF}$ this reads
\[
\Phi_R(G)=\sum_T \Phi_R((G,T)).
\]
The distinction of spanning trees upon integrating the $0$-component of loop momenta 
is also familiar in particular for one-loop graphs as a loop-tree duality, see \cite{Tomboulis} and references there. 
We use invariance properties of dimensional regularization under affine transformations of loop momenta for a systematic multi-loop approach.
We follow \cite{KreimerThesis} where a separation into parallel and orthogonal components was utilized. This separation is now systematically used by Baikov \cite{Baikov}
and leads to an interesting approach via intersection numbers \cite{Mastrolia}.
In future work we hope to connect the structure of graph complexes to 
these intersection numbers.

If one interprets Feynman amplitudes as (generalized) volumes on the moduli spaces $\m {MG}_{n,s}$ as explained above (cf.\ \cite{Marko}), then Thm.(\ref{phiGT}) shows that this point of view can also be established on the spine of $\m {MG}_{n,s}$ (recall its description as a cube complex, parametrized by pairs $(G,F)$). In other words, the moduli space is the total space of a fibration over its spine and Thm(\ref{phiGT}) is the result of integrating along its fibers (if translated into the parametric formulation). We comment on this point of view and discuss an example, leaving a detailed study to future work \cite{marko-ltd}.

\subsubsection*{Co-actions for $H_C$}
The core Hopf algebra $H_{core}$ co-acts 
\[
\bar{\Delta}_{core}: H_C\to H_{core} \otimes H_C,
\] 
on proper Cutkosky graphs $G\in H_C$ such that the computation of Feynman graphs can be reduced to a computation in $H_C^{(0)}$ and a computation in $H_{core}$. There is a direct sum decomposition 
\[
H_C=\oplus_{j=0}^\infty H_C^{(j)}
\]
where $H_C^{(j)}$ are $j$-loop graphs (and similarly for $H_{core}$), such that 
\[
\bar{\Delta}_{core}(G)=\sum_{i=0}^{j} G_{(i)}^\prime\otimes G_{(i)}^{\prime\prime}, 
\]
$G_{(i)}^\prime\in H_{core}^{(i)}$ and $G_{(i)}^{\prime\prime}\in H_C^{(j-i)}$ for $G\in H_C^{(j)}$.

From this we derive Eq.(\ref{Fubini}):
\[
\Phi_R(G)=\int\prod_{i=1}^{|{G_{(j)}^{\prime\prime}}|}d^Dk_i \left(\frac{\Phi_R(G^\prime_{(j)})}{\prod_{e\in E_{F}}Q_e}\right)_{\Big(\cap_{f\in E_{on}({G_{(j)}^{\prime\prime}})}\Big)_{(Q_f=0)}}.
\]
Here the graph $G_{(j)}^{\prime\prime}$ has edges which are off-shell ($e\in E_F$)
and their inverse product is evaluated at the loci determined by the simultaneous on-shell conditions  $Q_f=0$ for its on-shell edges. 


For the choice of a spanning tree and an ordering of edges 
$\mathfrak{o}$ we then get a sequence of such evaluations. See Sec.(\ref{usingcoaction}) for details on the co-action of the renormalization algebra.

\subsubsection*{Vanishing of the commutator $[\Delta_{GF},d_0+d_1]$}
Following the Feynman rules used in Thm.(\ref{phiGT}) only two types of edges appear:
Edges in a spanning forest $F$ remaining off-shell and edges $\not\in F$ which are evaluated on-shell.
This result implies that the co-product $\Delta_{GF}$ and pre-Lie structure  of pairs $(G,F)$ are compatible and hence commute with the boundary $d=d_0+d_1$ of the cubical chain complex, see Thm.(\ref{dCCC}).

\subsubsection*{A one-loop example}
In Sec.(\ref{doneloop}) we analyze the one-loop triangle graph and explain how it relates to a generator for the homology of the cubical chain complex furnished by the boundary $d=d_0+d_1$. Recall our interpretation, on the analytic side: $d_0$ \textit{reduces} a graph, $d_1$ puts edges on the mass-shell.

\subsubsection*{Graph homology}
In Sec.(\ref{Marko}) we consider a variant of Kontsevich's graph complex that is defined by collapsing edges in Feynman graphs. 

We show how its differential encodes which Feynman integrals share subsets of their Landau singularities. More precisely, we show that cycles represent families of graphs/integrals that ``exhaust a set of common singularities'': Each graph in the family maps under the Feynman rules to a function whose singularities are contained in a minimal common Landau variety  (cf.\ Thm.(\ref{prop:cycles}) for a precise definition of this property). 

For a theory with cubic interaction this gives a direct connection between the top dimensional graph homology group and the analytic structure of Feynman amplitudes. In the one loop case the elements of the homology classes induce a nice partition of the set of graphs contributing to the full amplitude. Each subset of this partition satisfies the above mentioned property of sharing singularities while also obeying certain symmetry relations. We prove this and comment on extensions  in Sec.(\ref{ss:partitiongreensfctn}).

\subsection*{Acknowledgments} We thank Karen Vogtmann for many valuable comments on a first draft of this paper. DK thanks Spencer Bloch 
for an uncountable number of insightful conversations 
on the mathematical structure of Cutkosky rules. He also thanks Karen Yeats for a longstanding collaboration on combinatorial aspects of Feynman amplitudes and Michael Borinsky for discussions on Feynman amplitudes.
Finally, DK wants to thank Adrian Roosch
and Patricia Schr\"oder for exercising miracles through physiotherapy. MB thanks Paul Balduf and Erik Panzer for helpful comments. In addition, he thanks Max M\"uhlbauer for numerous valuable discussions while sending very different types of problems.

\section{Graphs, spanning trees, refinements}\label{Notation}
Note that our definition of graphs closely follows the set-up of \cite{Kr-Y}.
We first settle the notion of a partition.
\begin{defn}
Given a set $S$ a \emph{partition} (or \emph{set partition}) $\mathcal{P}$ of $S$ is a decomposition of $S$ into disjoint nonempty subsets whose union is $S$.  The subsets forming this decomposition are the \emph{parts} of $\mathcal{P}$.  The parts of a partition are unordered, but it is often convenient to write a partition with $k$ parts as $\dot{\cup}_{i=1}^k S_i = S$ with the understanding that permuting the $S_i$ still gives the same partition.  A partition $\mathcal{P}$ with $k$ parts is called a $k$-partition and we write $k=|\mathcal{P}|$.
\end{defn}
Now we can define a Feynman graph.
\begin{defn}
A \emph{Feynman graph} $G$ is a tuple $G=(H_G, \mathcal{V}_G, \mathcal{E}_G)$ consisting of 
\begin{itemize}
\item $H_G$, the set of half-edges of $G$,
\item $\mathcal{V}_G$, a partition of $H_G$ with parts of cardinality at least 3 giving the vertices of $G$,
\item $\mathcal{E}_G$, a partition of $H_G$ with parts of cardinality at most 2 giving the edges of $G$.
\end{itemize}
\end{defn}
From now on when we say graph we mean a Feynman graph.

We do not require all parts of $\mathcal{E}_G$ to be of  cardinality 2.
We identify the parts of cardinality 2 with the set of edges $E_G$ of the graph and set $e_G:=|E_G|$.  We identify the sets of cardinality 1 with the set of external edges $L_G$ of the graph and set $l_G:=|L_G|$. Also we set $v_G:=|\mathcal{V}_G|$.

We say that a graph $G$ is connected if there is no partition of the parts of $\mathcal{V}_G$ into two sets $H_G(1),H_G(2)$ such that the parts of cardinality two of $\mathcal{E}_G$
are either in $H_G(1)$ or $H_G(2)$. If it is not connected it has $|H^0(G)|>1$ components.

The partition $\mathcal{V}_G$ collects half-edges of $G$ into vertices.  This formulation of graphs does not distinguish between a vertex and the corolla of half-edges giving that vertex.  However, it is sometime useful to have notation to distinguish when one should think of vertices as vertices and when one should think of them as corollas.  Consequently let $V_G$, the set of vertices of $G$, be a set in bijection with the parts of $\mathcal{V}_G$, $|V_G|=v_G=|\mathcal{V}_G|$.  This bijection can be extended to a map $\nu_G:H_G\rightarrow V_G$ by taking each half edge to the vertex corresponding to the part of $\mathcal{V}_G$ containing that vertex.
For $v\in V_G$ define 
\[
C_v:=\nu_G^{-1}(v)\subset H_G,
\]
to be the corolla at $v$, that is the part of $\mathcal{V}_G$ corresponding to $v$. 

A graph $G$ as above can be regarded as a set of corollas determined by $\mathcal{V}_G$ glued together according to $\mathcal{E}_G$.

If $|\nu_G(e)|=1$, we say $e$ is a self-loop at $v$, with $\nu_G(e)=\{v\}$.

We frequently have cause to make an arbitrary choice of an orientation on the edges. 
If $|\nu_G(e)|=2$, with $e=\{l,m\}$ and $\nu(l)=v,\nu(m)=w$ say, $e$ is an edge 
$e_{vw}$ from $v$ to $w$ or $e_{wv}$ vice versa for the opposite orientation. This choice of an edge orientation corresponds to a choice of an order of $e$ as a set of half-edges.

If we orient an edge $e$, we also write $v_+(e_{vw})=w$ and $v_-(e_{vw})=v$ for the source and target vertices.

We emphasize that we allow multiple edges between vertices and allow self-loops as well. 

We write $h_1(G)\equiv |G|:=|H^1(G)|=e_G-v_G+|H^0(G)|$ for the number of independent loops (cycles), or the dimension of the cycle space of the graph $G$.
Note that for disjoint unions of graphs $h_1,h_2$, we have $|h_1\cup h_2|=|h_1|+|h_2|$. We write $h_0(G):=|H^0(G)|$.

A graph is bridgeless if $(G-e)$ has the same number of connected components as $G$ for any $e\in E_G$.  A graph is 1PI or 2-edge-connected if it is both bridgeless and connected, equivalently if $(G-e)$ is connected for any $e\in E_G$. 
Here,
for $G=(H_G, \mathcal{V}_G, \mathcal{E}_G)$, we define
\[
(G-e):= (H_G, \mathcal{V}_G, \mathcal{E}'_G)
\]
where $\mathcal{E}'_G$ is the partition which is the same as $\mathcal{E}_G$ except that the part corresponding to $e$ is split into two parts of size $1$.

The removal $G-X$ of edges forming a subgraph $X\subset G$ is defined similarly 
by splitting the parts of $\mathcal{E}_G$ corresponding to edges of $X$.
$G-X$ can contain isolated corollas.

Note that this definition is different from graph theoretic edge deletion as all the half-edges of the graph remain and the corollas are unchanged.  
We neither lose vertices nor half-edges when removing an internal edge. We just unglue
the two corollas connected by that edge.

The graph resulting from the contraction of edge $e$, denoted $G/e$ for $e\in E_G$, is defined to be
\be\label{edgecont}
G/e = (H_G-e, \mathcal{V}'_G, \mathcal{E}_G-e)
\ee
where $\mathcal{V}'_G$ is the partition which is the same as $\mathcal{V}_G$ except that in place of the parts $C_v$ and $C_w$ for $e=\{\nu^{-1}(v),\nu^{-1}(w)\}$, $\mathcal{V}'$ has a single part $(C_v\cup C_w) - e$.\footnote{We often use $-$ for the set difference, e.g.\ $H_G-e=H_G\setminus e$.}

Likewise we define $G/X$, for $X\subseteq G$ a (not necessarily connected) graph, to be the graph obtained from $G$ by contracting all internal edges of $X\subseteq G$.

Intuitively we can think of $G/X$ as the graph resulting by shrinking all internal edges of $X$ to zero length:
\be\label{lengthcont}
G/X=G|_{\mathrm{length}(e)=0,e\in E_X}.
\ee
This intuitive definition can be made into a precise definition if we add the notion of edge lengths to our graphs, but doing so is not to the point at present.

Note that restricting $\mathcal{V}_G$ to $L_G$ we also obtain a partition of $L_G$ into the sets $L_G\cap \nu_G^{-1}(v)$:\footnote{Technically we must discard any subsets which are now empty in order to obtain a partition.}
\[
L_G=\dot{\cup}_{v\in V_G} \underbrace{\left(L_G\cap \nu_G^{-1}(v)\right)}_{=:L_v}.
\]
We let $\mathbf{val}(v):=|C_v|$ the degree or valence of $v$ and $\mathbf{eval}(v):=|L_v|$
the number of external edges at $v$, and $\mathbf{ival}(v):=\mathbf{val}(v)-\mathbf{eval}(v)$ the number of internal edges at $v$.

Summarizing, for a graph $G$ we have an  internal edge set $E_G$, vertex set $V_G$ and  set of external edges $L_G$.

A simply connected subset of edges $T$ which contains $V_G$
we call a spanning tree of $G$. For any proper subset $f$ of
edges of  $T$ we call $F=T-f$ a spanning forest of $G$. Note that a spanning forest of $G$ contains all vertices of $G$.

It induces a graph $(H_G,\mathcal{V},\mathcal{F})$ on the same set of half-edges and vertices as $G$, and with a refined edge partition $\mathcal{F}$ defined by retaining as parts of cardinality two only the edges of $F$.

We often notate this as a pair $(G,F)$. We also write $G_F$
for such a pair. By a Cutkosky graph we mean such a pair \cite{Kr-Y}. 

The set of edges $e\in E_G$ such that $e\not\in E_F$ forms the set $E_{on}$ of $G$, the set of edges $e\in E_F$ the set
$E_{off}$. Note that $G_T$ has a non-empty set $E_{on}$,
$|E_{on}|=|G|$.

Any spanning tree $T$ is also a forest with $F=\emptyset$ such that $|E_{on}|=|G|$ as $E_{on}$ provides a basis for the loops $l_e\in\mathcal{L}$ of $G$: for any $e\in E_{on}$, there is a path $p_e\subset T$ such that $l_e=e\dot{\cup} p_e$ is a loop.

\begin{defn}
Given two partitions $\mathcal{P}$ and $\mathcal{P}'$ of a set $S$, we say $\mathcal{P}'$ is a \emph{refinement} of $\mathcal{P}$ if every part of $\mathcal{P}'$ is a subset of a part of $\mathcal{P}$.  Intuitively $\mathcal{P}'$ can be made from $\mathcal{P}$ by splitting some parts.  The set of all partitions of $S$ with the refinement relation gives a lattice called the \emph{partition lattice}. The covering relation in this lattice is the special case of refinement where exactly one part of $\mathcal{P}$ is split into two parts to give $\mathcal{P}'$.

We will need more than just the refinements of partitions as defined above.  Given a refinement $\mathcal{P}'$ of $\mathcal{P}$ it will often be useful that we additionally pick a maximal chain from $\mathcal{P}$ to $\mathcal{P}'$ in the partition lattice.  Concretely this means we keep track of a way to build $\mathcal{P}'$ from $\mathcal{P}$ by a linear sequence of steps, each of which splits exactly one part into two.  Unless otherwise specified our refinements always come with this sequence building them, and we will let a \emph{$j$-refinement} be such a refinement where the sequence $\mathcal{P}(i),0\leq i\leq j$ of partitions has length $j$ (including both ends). $\mathcal{P}(0)=S$ is the trivial partition.
\end{defn}

An ordering $\mathfrak{o}$ of the edges in a spanning tree defines a $v_G$-refinement of $G$ with corresponding refinement of $L_G$.\footnote{A removal of edges from the spanning tree in any order induces a removal of edges from the graph which connect different
components of the resulting  spanning forest giving a corresponding refinement of the graph.} 

We define the vectorspace $H_{core}$ as the $\mathbb{Q}$ vectorspace generated by (disjoint unions of) bridgeless connected (core)  graphs $G$.

Similarly we define the $\mathbb{Q}$ vectorspace $H_{GF}$ generated by (disjoint unions of) pairs of a core graph $G$ and spanning tree $F$ of $G$.

Finally we define the $\mathbb{Q}$ vectorspace $H_{C}$ generated by 
(disjoint unions of) Cutkosky graphs.

\section{Hopf algebras}\label{HopfAlgebras}
Again, our set-up is closely related to \cite{Kr-Y}. We first define the Hopf algebras $H_{core}$. It will co-act on $H_C$ defined above. $H_{core}$ is central in studying the relation between quantum fields and the structure of Outer Space, see 
\cite{Kr-Y} and also \cite{BorVogtmann}.
\subsection{The core Hopf algebra $H_{core}$}
The core Hopf algebra $H_{core}$ \cite{core,BV} is based on the $\mathbb{Q}$-vectorspace generated by connected bridgeless Feynman graphs and their disjoint unions.

We define  a commutative product
\[
m: H_{core}\otimes H_{core}\to H_{core},\, m(G_1,G_2)=G_1\dot{\cup} G_2,
\]
by disjoint union. The unit $\One$ is provided by the empty set so that we get a free commutative $\mathbb{Q}$-algebra with bridgeless connected graphs as generators.

We define a co-product by
\[
\Delta_{core}(G)=G\otimes\One+\One\otimes G+\sum_{g\subsetneq G}g\otimes G/g,
\]
where the sum is over all $g\in H_{core}$ such that $g\subsetneq G$.
Hence there are bridgeless graphs $g_i$ such that $g=\dot{\cup}_i g_i$, and $G/g$ denotes the co-graph in which all internal edges of all $g_i$ shrink to zero length in $G$. We define the reduced co-product to be
\[
\tilde{\Delta}_{core}(G)=\sum_{g\subsetneq G}g\otimes G/g,
\]

We have a co-unit $\hat{\One}:H_{core}\to\mathbb{Q}$ which annihilates any non-empty graph and $\hat{\One}(\One)=1$ and we have the antipode
$S:H_{core}\to H_{core}$, $S(\One)=\One$
\[
S(G)=-G-\sum_{g\subsetneq G}S(g) G/g.
\]
Furthermore our Hopf algebras are graded,
\[
H_{core}=\oplus_{j=0}^\infty H_{core}^{(j)},\,H_{core}^{(0)}\cong \mathbb{Q}\One,\, \mathrm{Aug}_{core}=\oplus_{j=1}^\infty H_{core}^{(j)},
\]
and $h\in H_{core}^{(j)}\Leftrightarrow |h|=j$.
The core Hopf algebra has various quotient Hopf algebras amongst them the Hopf algebra for renormalization $H_{ren}$, see \cite{BV}. 
\subsection{The Hopf algebra $H_{GF}$}
The Hopf algebra $H_{core}$ has a generalization $H_{GF}$ operating on pairs $(G,F)$ of a graph $G$ and a spanning forest $F$ \cite{Kr-Y}.

 Let $\mathcal{F}_G$ be the set of all
spanning forests of $G$. It includes the set $\mathcal{T}_G$ of all spanning trees of $G$.  The empty graph $\One$ has an empty spanning forest also denoted by $\One$.

Each spanning tree $T$ of $G$ gives rise to a set of cycles $\mathcal{L}=\mathcal{L}(T)$.

The powerset $\dot{\cup}_T \mathcal{P}_T$ of these cycles can be identified with the set of
all subgraphs of $(G,T)$. 
 
Each forest $F$ defines a partition $L_G(F)$ of the set of external edges of $G$. In fact for two pairs $(G;F),(G^\prime,F^\prime)$ with the same set of external edges $L_G=L_{G^\prime}$
we say $F\sim F^\prime$ if they define the same partition: 
\[
L_G(F)=L_{G^\prime}(F^\prime).
\]  

 We define a $\mathbb{Q}$-Hopf algebra $H_{GF}$ for such pairs $(G,F)$, $F\in \mathcal{F}_G$ by setting 
\bea\label{HopfPairs}
\Delta_{GF}(G,F) & = & (G,F)\otimes (\One,\One)+(\One,\One)\otimes (G,F)+\nonumber\\
 &  & +\sum_{{g\subsetneq G \atop F-(F\cap g)\in \mathcal{F}_{G/g}} \atop
 F\sim F-(F\cap g)} (g,g\cap F)\otimes (G/g, F-(F\cap g)),
\eea
 where $\mathcal{F_G}$ is the set of all forests of $G$.  Additionally, by $F-(F\cap g)\in \mathcal{F}_{G/g}$ we mean to interpret the edges of $F-(F\cap g)$ as a subgraph of $G/g$ and then check if that subgraph is an element of $\mathcal{F}_{G/g}$.  This ensures that 
only terms contribute such that $G/g$ has a valid spanning forest.  Finally, 
by $F\sim F-(F\cap g)$ we mean that the partition of external legs 
of $(G,F)$ and $(G/g,F-(F\cap g))$ are identical.

We define the commutative product to be
\[
m_{GF}((G_1,F_1),(G_2,F_2))=(G_1\dot{\cup} G_2,F_1\dot{\cup} F_2),
\]
whilst $\One_{GF}=(\One,\One)$ serves as the obvious unit which induces a co-unit 
through $\hat{\One}_{GF}(\One_{GF})=1$ and $\hat{\One}_{GF}((G,F))=0$. 

\begin{thm}
This is a graded commutative bi-algebra graded by $|G|$
and therefore a Hopf algebra $H_{GF}(\One_{GF},\hat{\One}_{GF},m_{GF},\Delta_{GF},S_{GF})$.
\end{thm}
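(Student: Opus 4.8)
The plan is to verify the bialgebra axioms directly and then invoke the standard antipode recursion valid for any connected graded bialgebra. The algebra axioms are immediate: disjoint union of pairs is associative and commutative and $\One_{GF}=(\One,\One)$ is a two-sided unit, so $(H_{GF},m_{GF},\One_{GF})$ is the free commutative $\mathbb{Q}$-algebra on the connected pairs $(G,F)$. The counit axiom is equally routine: applying $\hat{\One}_{GF}$ to the left tensor factor of $\Delta_{GF}(G,F)$ annihilates every term except $(\One,\One)\otimes(G,F)$, since $\hat{\One}_{GF}$ kills every nonempty graph while $G$ and each interior $g$ are nonempty, leaving $(G,F)$ under the identification $\mathbb{Q}\otimes H_{GF}\cong H_{GF}$; symmetrically on the right. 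Hence $(\hat{\One}_{GF}\otimes\mathrm{id})\Delta_{GF}=\mathrm{id}=(\mathrm{id}\otimes\hat{\One}_{GF})\Delta_{GF}$. For the grading I would check that each interior term of $\Delta_{GF}(G,F)$ lies in bidegree $(|g|,|G|-|g|)$, which rests on the loop-number identity $|G/g|=|G|-|g|$ for a subgraph $g\subseteq G$: this follows from $h_1(G)=e_G-v_G+h_0(G)$ together with the fact that contracting the internal edges of $g$ removes exactly $|g|$ independent cycles. Thus $\Delta_{GF}$ is a graded map and, with $H_{GF}^{(0)}\cong\mathbb{Q}\One_{GF}$, the bialgebra will be connected graded.

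The main work, and the principal obstacle, is coassociativity of $\Delta_{GF}$. I would set up the standard bijection between the two ways of iterating the coproduct: $(\Delta_{GF}\otimes\mathrm{id})\Delta_{GF}(G,F)$ ranges over nested subgraphs $g'\subseteq g\subseteq G$ (split off $g\subseteq G$, then split $g'\subseteq g$), while $(\mathrm{id}\otimes\Delta_{GF})\Delta_{GF}(G,F)$ ranges over a subgraph $g'\subseteq G$ together with a subgraph $\bar g\subseteq G/g'$, the correspondence being $\bar g=g/g'$. The geometric fact making this a bijection is $(G/g')/(g/g')=G/g$ for $g'\subseteq g\subseteq G$, and the trivial and total cases account for the boundary terms. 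The delicate part is tracking the forest data through both routes and checking that the three resulting tensor factors agree; concretely the common middle factor must come out as $\bigl(g/g',\,(g\cap F)-(g'\cap F)\bigr)$ whether computed as the cograph-part of $\Delta_{GF}(g,g\cap F)$ or as the subgraph-part of $\Delta_{GF}(G/g',F-(F\cap g'))$. This reduces to showing that the operations ``intersect with a subgraph'' and ``restrict a forest to a contraction'' ($F\mapsto F-(F\cap g)$) commute, which I would argue at the level of the underlying half-edge partitions, using that both operations only split or merge the edge-partition along disjoint edge sets.

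I would then verify that the two side conditions appearing in the summation range, namely $F-(F\cap g)\in\mathcal{F}_{G/g}$ (the contracted forest is still a spanning forest) and $F\sim F-(F\cap g)$ (the induced external-leg partition is preserved), are themselves transitive under the nesting $g'\subseteq g\subseteq G$, together with the analogous membership requirement that $g,g/g'\in H_{core}$ and that the relevant cographs stay core. This transitivity is exactly what guarantees that a triple survives on the left precisely when its image survives on the right, so the two sums match term by term and $\Delta_{GF}$ is coassociative. Compatibility of $\Delta_{GF}$ with $m_{GF}$ is then comparatively soft: a subgraph of a disjoint union $G_1\dot{\cup}G_2$ is a disjoint union $g_1\dot{\cup}g_2$ of subgraphs, with $F\cap(g_1\dot{\cup}g_2)=(F_1\cap g_1)\dot{\cup}(F_2\cap g_2)$ and $(G_1\dot{\cup}G_2)/(g_1\dot{\cup}g_2)=(G_1/g_1)\dot{\cup}(G_2/g_2)$, and the side conditions factor over the two components, yielding $\Delta_{GF}\circ m_{GF}=(m_{GF}\otimes m_{GF})\circ(\mathrm{id}\otimes\tau\otimes\mathrm{id})\circ(\Delta_{GF}\otimes\Delta_{GF})$, where $\tau$ is the flip.

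Having established that $H_{GF}$ is a connected graded commutative bialgebra, the antipode exists and is unique. I would set $S_{GF}(\One_{GF})=\One_{GF}$ and define recursively
\[
S_{GF}(G,F)=-(G,F)-\sum_{\substack{g\subsetneq G\\ F-(F\cap g)\in\mathcal{F}_{G/g}\\ F\sim F-(F\cap g)}} S_{GF}(g,g\cap F)\,(G/g,\,F-(F\cap g)),
\]
which is well defined because the grading is positive and $H_{GF}^{(0)}$ is one-dimensional, so the recursion terminates on $|G|$. A routine induction on $|G|$ shows this $S_{GF}$ satisfies $m_{GF}\circ(S_{GF}\otimes\mathrm{id})\circ\Delta_{GF}=\One_{GF}\hat{\One}_{GF}=m_{GF}\circ(\mathrm{id}\otimes S_{GF})\circ\Delta_{GF}$, identifying $H_{GF}(\One_{GF},\hat{\One}_{GF},m_{GF},\Delta_{GF},S_{GF})$ as a Hopf algebra. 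The whole argument runs parallel to the proof for $H_{core}$, the only genuinely new content being the forest bookkeeping carried out in the second and third paragraphs.
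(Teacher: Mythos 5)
Your overall architecture (algebra axioms, counit, grading, coassociativity via the nested-subgraph bijection, multiplicativity, then the standard antipode recursion for a connected graded bialgebra) is sound and more explicit than the paper's, but it diverges from the paper at the one place where the real content sits, and that is also where your argument is still a promissory note. The paper does not re-derive coassociativity from scratch: it takes the coassociativity of $\Delta_{core}$ on edge-labeled graphs as given and observes that $\Delta_{GF}$ is obtained from it by discarding exactly those terms in which edges of $E_F$ close a loop in some tensor factor. Coassociativity of $\Delta_{GF}$ then reduces to checking that the \emph{same} terms are discarded on both sides of the coassociativity identity, which the paper settles with the two ``iff $\ldots$ contains a loop'' implications: if $G'_{(i)}/(G'_{(i)})'_{(j)}$ acquires a loop from $E_F$ then so does the corresponding factor on the other side, and if $G''_{(i)}$ does then one of $(G''_{(i)})'_{(j)}$, $(G''_{(i)})''_{(j)}$ does. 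That is a genuinely shorter route than yours, at the price of leaning on the labeled-edge version of $H_{core}$.

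In your version, the statement that the side conditions $F-(F\cap g)\in\mathcal{F}_{G/g}$ and $F\sim F-(F\cap g)$ are ``transitive under the nesting $g'\subseteq g\subseteq G$'' is asserted but not proved, and it is precisely the nontrivial claim of the theorem. Everything before it (the bijection $(g'\subseteq g\subseteq G)\leftrightarrow(g'\subseteq G,\ \bar g\subseteq G/g')$ and the identity $(G/g')/(g/g')=G/g$) is inherited from $H_{core}$ and would prove nothing about $H_{GF}$ on its own; the theorem lives or dies on whether a triple that is excluded on one side of coassociativity (because some intermediate quotient turns forest edges into a cycle, or breaks the external-leg partition) is excluded on the other. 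You should either supply that argument directly — which in substance forces you to reprove the paper's loop-tracking observation — or restructure the proof as the paper does, deducing the matching of excluded terms from coassociativity of the labeled core coproduct. As written, the proposal correctly isolates the obstacle but does not overcome it.
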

\begin{proof}
We rely on the co-associativity of $H_{core}$ which holds for graphs with labeled edges.
Using Sweedler's notation this amounts to
\bea
& & \sum_{i,j}(G^\prime_{(i)})^\prime_{(j)}\otimes
(G^\prime_{(i)})^{\prime\prime}_{(j)}\otimes
G^{\prime\prime}_{(i)}\nonumber\\
& = &  \sum_{i,j} G^{\prime}_{(i)} \otimes
(G^{\prime\prime}_{(i)})^{\prime}_{(j)}\otimes
(G^{\prime\prime}_{(i)})^{\prime\prime}_{(j)}
\eea
for any graph $G$.
Consider all edges $e\in E_F$ as labeled. The core co-product 
generates loops in these labeled edges in its first application only in the right slot, and when applying it again at most in the two slots to the right. 
We have to show that the same terms are eliminated when we abandon terms with loops from eges in $E_F$ respecting co-associativity.

The assertion follows:\\
iff $G^\prime_{(i)}/(G^\prime_{(i)})^\prime_{(j)}$ contains a loop
then  $G^{\prime\prime}_{(i)}/(G^\prime_{(i)})^\prime_{(j)}$
contains that loop and\\
iff $G^{\prime\prime}_{(i)}$ contains a loop then
either $(G^{\prime\prime}_{(i)})^\prime_{(j)}$ or
$(G^{\prime\prime}_{(i)})^{\prime\prime}_{(j)}$ will.
\end{proof}
We have $H_{GF}=\oplus_{j=0}^\infty H_{GF}^{(j)}$ with $H_{GF}^{(0)}\sim\mathbb{Q}\One_{GF}$ and $\mathrm{Aug}_{GF}=\oplus_{j=1}^\infty H_{GF}^{(j)}$.
$(G,F)\in H_{GF}^{(j)}\Leftrightarrow |G|=j,\, F\in\mathcal{F}_G$.
\subsection{The vectorspace $H_C$}
Consider a Cutkosky graph $G$ with a corresponding $v_G$-refinement $P$ of its set  of external edges $L_G$. It is a maximal refinement of $V_G$ corresponding to the choice of an ordered spanning tree.

The core Hopf algebra co-acts on the vector-space of Cutkosky graphs $H_{C}$.
\be
\bar{\Delta}_{core}: H_C\to H_{core}\otimes H_C,\,\bar{\Delta}_{core}(G)=\One\otimes G+\sum_{g\subsetneq G,\,g\in H_{core}}g\otimes G/g. 
\ee
We set $G\in H_C^{(n)}\Leftrightarrow |G|=n$ and decompose $H_C=\oplus_{i=0}^\infty H_C^{(i)}$. 

Note that the sub-vectorspace $H_C^{(0)}$ is rather large: it contains all Cutkosky graphs $G=((H_G,\mathcal{V}_G,\mathcal{F}_G)$ such that $|G|=0$. These are the graphs where the cuts leave no loop intact.

For any $G\in H_C$ there exists a largest  integer $\mathrm{cor}_C(G)\geq 0$ such that 
\[
\tilde{\bar{\Delta}}_{core}^{\mathrm{cor}_C(G)}(G)\not= 0,\, \tilde{\bar{\Delta}}_{core}^{\mathrm{cor}_C(G)}(G):H_C\to H_{core}^{\otimes cor_C(G)}\otimes H_C^{(0)}, 
\]
whilst $\tilde{\Delta}_{core}^{\mathrm{cor}_C(G)+1}(G)= 0$.
\begin{prop}
\[
\mathrm{cor}_C(G)=|G|.
\]
\end{prop}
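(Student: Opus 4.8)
The plan is to obtain both inequalities $\mathrm{cor}_C(G)\le|G|$ and $\mathrm{cor}_C(G)\ge|G|$ from a single bookkeeping principle: the reduced co-action $\tilde{\bar{\Delta}}_{core}$ peels off core subgraphs one contraction at a time while \emph{conserving the loop grading}. The one computation I would isolate first is the behaviour of $|G|$ under contraction. If $g\in H_{core}$ is a core subgraph of (the underlying graph of) a Cutkosky graph $G$, with connected components $g=\dot{\cup}_i g_i$, then contracting each $g_i$ to a point gives $|G/g|=|G|-|g|$; this follows from $|G|=e_G-v_G+h_0(G)$ together with $|g|=\sum_i|g_i|$, since contracting a connected $g_i$ removes $e_{g_i}$ edges and $v_{g_i}-1$ vertices. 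Consequently every term $g\otimes G/g$ of $\bar{\Delta}_{core}(G)$ has its two loop-grades summing to $|G|$.

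Next I would record that the \emph{reduced} co-action produces only terms with $|g|\ge 1$. This is forced by $H_{core}^{(0)}\cong\mathbb{Q}\One$: any contribution $g\otimes G/g$ with $|g|=0$ has $g\in H_{core}^{(0)}$, i.e.\ $g$ proportional to the unit, which is exactly the part discarded in passing from $\bar{\Delta}_{core}$ to $\tilde{\bar{\Delta}}_{core}$. Co-associativity of the co-action then lets me iterate on the rightmost slot, so that $\tilde{\bar{\Delta}}_{core}^{\,n}(G)\in H_{core}^{\otimes n}\otimes H_C$ is a sum of tensors $g_1\otimes\cdots\otimes g_n\otimes G^{(n)}$ in which every $|g_i|\ge 1$ and, by the grading principle above, $\sum_i|g_i|+|G^{(n)}|=|G|$.

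The upper bound is then immediate. A surviving term after $n$ steps needs $|G^{(n)}|=|G|-\sum_i|g_i|\ge 0$, whence $n\le\sum_i|g_i|\le|G|$; in particular $\tilde{\bar{\Delta}}_{core}^{\,|G|+1}(G)=0$, since it would force a factor of negative grade, so $\mathrm{cor}_C(G)\le|G|$. For the lower bound I would construct one explicitly surviving term by peeling a single loop at each stage: as long as the current right-hand factor $G^{(i-1)}$ has $|G^{(i-1)}|=|G|-(i-1)\ge 1$ it contains a cycle, and a single cycle is a core subgraph with $|g_i|=1$, hence contributes a genuine term of $\tilde{\bar{\Delta}}_{core}$. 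Iterating $|G|$ times lands in $H_{core}^{(1)\otimes|G|}\otimes H_C^{(0)}$. Note this simultaneously pins down the shape of the top-level term asserted in the statement: from $\sum_i|g_i|+|G^{(|G|)}|=|G|$ with $|G|$ factors each of grade $\ge 1$, every core factor must have grade exactly one and the remainder grade zero, i.e.\ lie in $H_C^{(0)}$.

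The step I expect to be the main obstacle is the non-vanishing of $\tilde{\bar{\Delta}}_{core}^{\,|G|}(G)$, since a priori the many single-loop peeling chains might cancel. Here I would invoke that the core co-product and co-action are sign-free: each $g\otimes G/g$ enters with coefficient $+1$, so $\tilde{\bar{\Delta}}_{core}^{\,|G|}(G)$ is a sum of tensors with positive integer coefficients and no cancellation can occur. The existence of even one complete flag of single-loop contractions therefore guarantees $\tilde{\bar{\Delta}}_{core}^{\,|G|}(G)\ne 0$, which together with the upper bound yields $\mathrm{cor}_C(G)=|G|$. The only remaining delicate point is the very last peeling, where the final surviving loop must be contracted into a grade-zero graph; I would check that this terminal contraction is legitimately produced by $\bar{\Delta}_{core}$ (whose target $H_{core}\otimes H_C$ already contains $H_C^{(0)}$, the ``cuts leave no loop intact'' graphs), so that the chain indeed terminates in $H_C^{(0)}$ rather than stalling one loop early.
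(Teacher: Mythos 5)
Your argument is correct and is essentially the paper's proof written out in full: the paper disposes of the claim with the single observation that the primitives of $H_{core}$ are one-loop graphs, which is exactly your combination of the loop-grading identity $|G/g|=|G|-|g|$ (giving $\mathrm{cor}_C(G)\le|G|$) with the existence of a flag peeling one cycle per step (giving $\ge$). Your extra care about non-cancellation via positivity of the coefficients is a worthwhile detail the paper leaves implicit.
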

\begin{proof}
The primitives of $H_{core}$ are one-loop graphs.
\end{proof}

As $\bar{\Delta}_{core}:\,H_C\to H_{core}\otimes H_C$ there is for any $G\in H_C$ a unique $g\in H_{core}$ such that $G/g\in H_C^{(0)}$ has no loops.
\begin{cor}\label{gG0}
There is a unique element $g\otimes G/g\in H_{core}\otimes H_C^{(0)}$:
\[
\bar{\Delta}_{core}(G)\cap \left(H_{core}\otimes H_C^{(0)}\right)=g\otimes G/g, 
\]
with $|g|=|G|$.
\end{cor}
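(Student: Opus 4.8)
The plan is to reduce the statement to a purely numerical condition on loop numbers and then to pin down the unique core subgraph realizing it. First I would record the contraction identity $|G/g| = |G| - |g|$, valid for any subgraph $g \subseteq G$ (contracting a subgraph lowers the first Betti number by exactly its own cycle rank; writing $g = \dot{\cup}_i g_i$ this is the bookkeeping $e_{G/g}=e_G-e_g$, $v_{G/g}=v_G-\sum_i(v_{g_i}-1)$, $h_0(G/g)=h_0(G)$). Since $H_C^{(0)}$ consists of the Cutkosky graphs with $|{\cdot}|=0$, a term $g\otimes G/g$ of $\bar{\Delta}_{core}(G)$ lies in $H_{core}\otimes H_C^{(0)}$ if and only if $|g|=|G|$. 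This already forces the asserted degree condition $|g|=|G|$ and reduces the corollary to showing that exactly one core subgraph $g\subseteq G$ occurring in $\bar{\Delta}_{core}(G)$ has maximal loop number $|g|=|G|$.

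For existence I would take $g$ to be the maximal bridgeless subgraph $c(G)\subseteq G$, namely the union of the $2$-edge-connected blocks of $G$ (equivalently, the subgraph obtained by deleting every bridge). Deleting bridges leaves the cycle space unchanged, so $|c(G)|=|G|$; moreover $c(G)$ is a disjoint union of connected bridgeless graphs, hence an element of $H_{core}$ appearing in the core coaction, and its cograph $G/c(G)$ is a forest, so it lies in $H_C^{(0)}$. Alternatively, existence is already encoded in the preceding Proposition: since $\mathrm{cor}_C(G)=|G|$, iterating the reduced coaction $\tilde{\bar{\Delta}}_{core}$ exactly $|G|$ times lands in $H_{core}^{\otimes |G|}\otimes H_C^{(0)}$, and coassociativity of $\bar{\Delta}_{core}$ (which reduces to the coassociativity of $H_{core}$ used above) assembles the $|G|$ one-loop primitives peeled off at each step into a single core subgraph $g$ with $|g|=|G|$ and $G/g\in H_C^{(0)}$.

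The main work is uniqueness, and I would argue it at the level of cycle spaces. Let $g'=\dot{\cup}_i g'_i$ be any core subgraph of $G$ with $|g'|=|G|$; the claim is $g'=c(G)$. Since each $g'_i$ is bridgeless, every edge of $g'$ lies on a cycle of $g'$, hence on a cycle of $G$, so $g'$ contains no bridge of $G$ and therefore $g'\subseteq c(G)$. Writing $Z(\cdot)$ for the cycle space inside the edge space of $G$, we then have $Z(g')\subseteq Z(c(G))$, and both spaces have dimension $|g'|=|c(G)|=|G|$, so $Z(g')=Z(c(G))$. Finally, because $c(G)$ is bridgeless, each of its edges occurs in some cycle, i.e. in some element of $Z(c(G))=Z(g')$, and elements of $Z(g')$ are supported on the edges of $g'$; hence every edge of $c(G)$ belongs to $g'$, giving $c(G)\subseteq g'$ and thus $g'=c(G)$. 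Therefore the part of $\bar{\Delta}_{core}(G)$ landing in $H_{core}\otimes H_C^{(0)}$ is the single tensor $g\otimes G/g$ with $g=c(G)$ and $|g|=|G|$. The only point demanding care is the treatment of bridges and of disconnected $g$ (as for two loops joined by a bridge, where $g$ is the disjoint union of the two loops), which the cycle-space formulation handles uniformly.
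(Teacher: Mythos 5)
Your argument is correct, and it is considerably more explicit than what the paper offers: the paper gives no proof of Cor.~(\ref{gG0}) at all, treating it as an immediate consequence of the preceding proposition $\mathrm{cor}_C(G)=|G|$ (itself justified only by the remark that the primitives of $H_{core}$ are one-loop graphs) together with the multiplicativity of the grading under $\bar{\Delta}_{core}$. In effect the paper's route establishes existence of a maximal-loop term by iterating the reduced coaction $|G|$ times and invoking coassociativity, and leaves uniqueness implicit; your second paragraph reproduces exactly this as your ``alternative'' existence argument. What you add is genuinely new relative to the text: you identify the distinguished subgraph concretely as $c(G)$, the union of the $2$-edge-connected blocks obtained by deleting all bridges, and you prove uniqueness by the cycle-space dimension count $Z(g')\subseteq Z(c(G))$ with $\dim Z(g')=|g'|=|G|=\dim Z(c(G))$, which pins down $g'=c(G)$ edge by edge. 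This is a clean, purely graph-theoretic argument that handles bridges, self-loops and disconnected $g$ uniformly, and it supplies content the paper only asserts. The one point worth flagging is notational rather than mathematical: the sum in $\bar{\Delta}_{core}$ is written over proper core subgraphs $g\subsetneq G$, so when the underlying graph of the Cutkosky graph is itself $2$-edge-connected your $c(G)$ coincides with the whole underlying graph; the paper clearly intends this term (with cograph the fully contracted Cutkosky graph in $H_C^{(0)}$) to be present, since otherwise the corollary would be vacuous in the most important case, but you may wish to state explicitly that $g=c(G)$ is admitted as a summand even then. Likewise, you silently take $|G|$ for $G\in H_C$ to be the first Betti number of the underlying graph, which is the reading forced by the additivity $|g|+|G/g|=|G|$ the paper asserts for the coaction, though the surrounding prose about ``cuts leaving no loop intact'' is ambiguous on this point; the ambiguity is the paper's, not yours.
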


\section{Flags} \label{Flags}
The notion of flags of Feynman graphs was for example already used in \cite{BlKrLMHS,PanzerHepp}.
Here we use it based on the core Hopf algebra introduced above.

\subsection{Expanded flags}
Consider a graph $G$. We define as an expanded  flag associated to $G$ a sequence of graphs 
\[
\tilde{f}:=G_1\subsetneq G_2\subsetneq \cdots \subsetneq G_{|G|}=G,
\]
where $|G_1|=1$ and $|G_i/G_{i-1}|=1$ for all $i\geq 2$. We set $\gamma_i:=G_i/G_{i-1}$ and $\gamma_1:=G_1$.

 Write
$\mathcal{F}l(G)$ for the collection of all expanded flags $\tilde{f}\in \mathcal{F}l(G)$  of $G$. 

\subsection{Flags}
The flag $f\in \mathrm{Aug}_{core}^{\otimes k}$ of length $|G|$ associated to $\tilde{f}$ is 
\[
f:=\gamma_1\otimes \cdots\otimes \gamma_{|G|}.
\]

Define the flag associated to a graph $G\in \langle H_{core}\rangle$ to be a sum of flags of length $|G|$ arising from all expanded flags:
\[
Fl_G:=\sum_{ \tilde{f} \in \m{F}l(G) } f=\tilde{\Delta}_{core}^{|G|-1}(G).
\]

With $\xi_G=|F(G)|$ the number of expanded flags a graph $G$ has we can hence write 
\[
Fl_G=\sum_{i=1}^{\xi_G} \gamma_1^{(i)}\otimes\cdots\otimes \gamma_{|G|}^{(i)},
\]
where for any of the  orderings of the cycles $l_j$ of $G$ we have 
\be\label{flagg} 
\gamma_1=l_1, \gamma_2=l_2/E_{l_1\cap l_2}, \ldots, \gamma_{|G|}=l_G/E_{l_1\cap\cdots\cap l_{|G|-1}}.
\ee

Similarly, for a pair $(G,F)$ we can define
\[
Fl_{G,F}:=\tilde{\Delta}_{GF}^{|G|-1}((G,F))\in \mathrm{Aug}_{GF}^{\otimes |G|},
\]
which as a sum of flags is 
\[
Fl_{G,F}=\sum_i(\gamma_1,f_1)^{i}\otimes\cdots \otimes (\gamma_{|G|},f_{|G|})^{i},
\]
in an obvious manner. Here
$\tilde{\Delta}_{GF}((\gamma_l,f_l)^i)=0$, $\forall i,l$, $1\leq l\leq |G|$.

See Fig.(\ref{flagtheta}) for an example.
\begin{figure}[H]
\includegraphics[width=12cm]{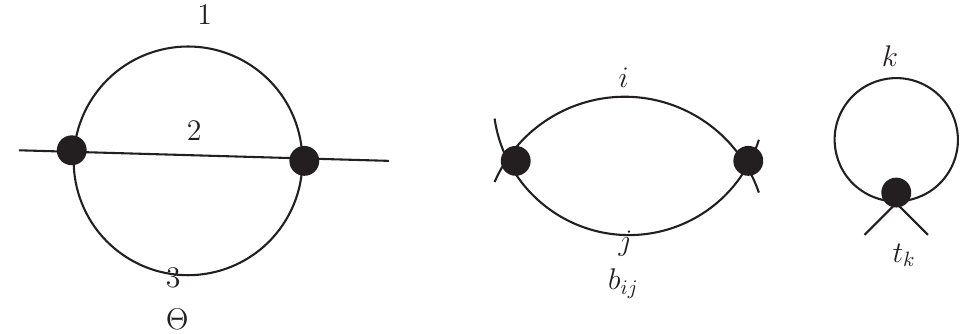}.
\caption{The 3-edge banana graph $\Theta$ on edges $e_1,e_2,e_3$. It has three 2-edge subgraphs $b_{ij}$ on edges $e_i,e_j$, with a cograph $t_k$ on edge $e_k$.
$Fl_\Theta=\gamma_{12}\otimes t_3+\gamma_{23}\otimes t_1+\gamma_{31}\otimes t_2$. The three cycles in $\theta$ are $l_1=e_1,e_2$, $l_2=e_2,e_3$ and $l_3=e_3,e_1$. If $l_1,l_2$ are chosen as a basis (so $e_2$ is the spanning tree) in the order $l_1<l_2$ we have
$t_3=l_2/E_{l_1\cap l_2}$. $Fl_{\Theta,e_2}=\gamma_{12}\otimes t_3+\gamma_{23}\otimes t_1$. With three spanning trees and two orders we thus get six terms.
In fact the $\gamma_{ij}$ subgraphs have two residues, the tadpoles have one, so that we have a decomposition $6=2\times 1+2\times 1+2\times 1$ into $3=\xi_\Theta$ parts.}
\label{flagtheta}
\end{figure}
\subsection{Flags for ordered spanning trees}
We consider pairs of a graph $G$ and a spanning tree $T$ but this time we assume that there is an order $\mathfrak{o}=\mathfrak{o}(T)$ on the edges of the spanning tree $T$.

For any decomposition $T=T^\prime \cup T^{\prime\prime}$ into two disjoint subtrees,
we say that the pair $(T^\prime,T^{\prime\prime})$ is $\mathfrak{o}$-compatible,
\[
(T^\prime,T^{\prime\prime})\sim \mathfrak{o},
\]
if any edge $e\in T^\prime$ is ordered before any edge  
$f\in T^{\prime\prime}$, so that  $\mathfrak{o}$ is a concatenation
\[
\mathfrak{o}(T^\prime)\mathfrak{o}(T^{\prime\prime}).
\]

We now define a map by restricting $\Delta_{GF}$ for any order $\mathfrak{o}$ to
$\mathfrak{o}$-compatible terms,
\bea\label{HopfPairso}
\Delta_{GF}^{\mathfrak{o}}(G,T) & := & (G,T)\otimes (\One,\One)+(\One,\One)\otimes (G,T)+\nonumber\\
 &  & +\sum_{
 {g\subsetneq G \atop T-(T\cap g)\in \mathcal{F}_{G/g},\, 
 {T\sim T-(T\cap g)}} 
 \atop 
 {(g\cap T,T-(T\cap g))\sim\mathfrak{o}}
 } (g,g\cap T)\otimes (G/g, T-(T\cap g)).
\eea
The generalization of this map to ordered forest instead of ordered spanning trees is straightforward  as the definition in Eq.(\ref{HopfPairs}) ensures compatibility of cuts on graphs and co-graphs \cite{Kr-Y}. We use $\Delta_{GF}^{\mathfrak{o}}(G,T)$  later to investigate a Leibniz rule apparent in the cubical chain complex in Sec.(\ref{prelieandccc}).
\begin{rem}
Maps $\Delta_{GF}^{\mathfrak{o}}(G,T)$ can be made co-associative in any sector as described by $\mathfrak{o}$. In each sector they give rise to a co-product on decorated rooted trees without side-branchings \cite{BlKrLMHS}. Their connection to co-interacting bi-algebra structures and sector decompositions (see \cite{Kr-Y}) is left for future work.
\end{rem}
 
We now give an example of such a map. We choose a pair $(G,T)$ with $T$ with a spanning tree of length three. We label its edges by $a,b,c$ and choose the order $acb$. $|G|=3$ and the edges not in the spanning tree -labeled $1,2,3$- define
a base for the fundamental cycles of the graph.
\begin{figure}[H]
\includegraphics[width=12cm]{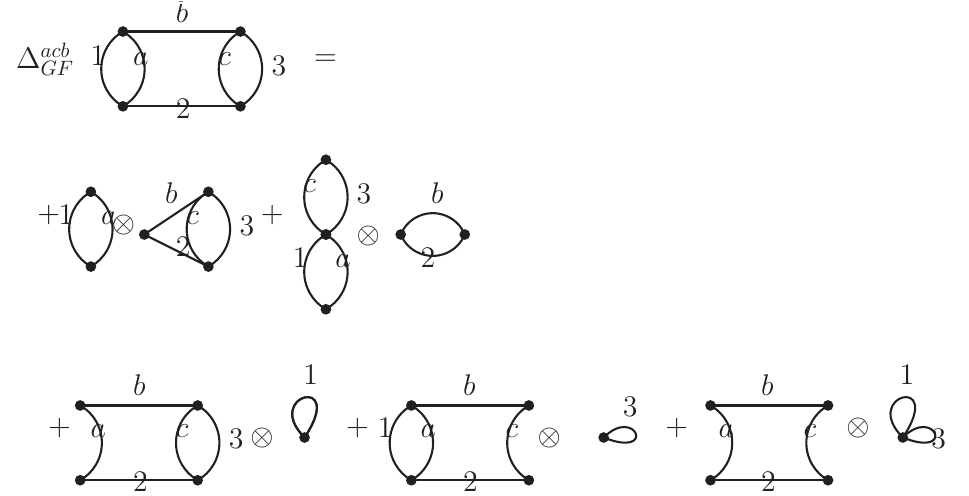}.
\caption{The map $\Delta_{GF}^{\mathfrak{o}=acb}$ in an example. Note that the three terms in the lowest row all have tadpoles on the rhs. They will always be present for any order order we choose. In particular the maps $\Delta_{GF}^{\mathfrak{o}=bac}$ and
$\Delta_{GF}^{\mathfrak{o}=bca}$ produce just those three terms in the lowest row.}
\label{deltagfo}
\end{figure}

\section{Feynman rules} \label{FeynmanRules}
\subsection{Momentum space renormalized Feynman rules}
Consider a graph $G$ with set of external half-edges $L_G$. All external half-edges are oriented incoming.

To each $f\in L_G$ assign an external momentum $q(f)\in \mathbb{M}^D$.

Next, choose an orientation for each edge $e\in E_G$ and assign an internal momentum 
$k(e)\in \mathbb{M}^D$ to each edge. With these orientations the half-edges 
$h\in C_v$ at a vertex $v$ are oriented. We say that $k(e)$ is incoming at $v$ if $h\in e$ is oriented towards $v$ ($v$ is the target of $e$). 
We set $k(h)=k(e)$. Else, if $v$ is the source of $e$, $-k(e)$ is incoming at $v$. We set $k(h)=-k(e)$.

Define the integral
\begin{equation}\label{eq:momfi}
 I_G(\{q(f)\},\{m_e\}):=\int_{\mathbb{M}^{De_G}} d^{De_G}k\prod_{e\in E_G}\frac{1}{k_e^2-m_e^2+i\eta}\prod_{v\in V_G}\delta^{(D)}\sum_{h\in C_v}k(h).
 \end{equation}
By momentum conservation at each vertex this is a $D\times |G|$-dimensional integral.

Imposing kinematic renormalization conditions the renormalized integral is given as
\[
I_G^R(\{q(f)\},\{\mu(f)\},\{m_e\})=\lim_{D \to 4}\left(\sum I_{S(G^\prime)}\left(\{\mu(h)\},\{m_e\}\right)\times I_{G^{\prime\prime}}\left(\{q(f)\},\{m_e\}\right)\right)
\]
using Sweedler's notation for the coproduct $ \Delta_{ren}(G)=\sum G^\prime\otimes G^{\prime\prime}$ of the renormalization Hopf algebra $H_{ren}$ 
and a kinematic renormalization scheme which subtracts on the level of the integrand. $S$ is the antipode of $H_{ren}$. $H_{ren}$ is the usual quotient of $H_{core}$ obtained by discarding superficially convergent diagrams \cite{Kr-Y}. 
\subsection{Renormalized quadrics}
The integrands above are products of quadrics (taking momentum conservation at each vertex into account)
\be\label{integrandq}
I_G^\Pi (\{q(f),\{m_e\}\}):=\prod_{e\in E_G} \frac{1}{Q_e}.
\ee
The renormalized integrand is then
\be\label{integrandqren}
I_G^R(\{q(f)\},\{\mu(f),\{m_e\}\})=I_{S(G^\prime)}^\Pi (\{\mu(h)\},\{m_e\})I_{G^{\prime\prime}}^\Pi (\{q(f)\},\{m_e\}).
\ee
\subsection{Symanzik polynomials}
Let $\psi(G),\phi(G)$ be the two usual graph polynomials, and 
\be \Xi(G)=\phi(G)-M(G)\psi(G),\ee
the full second graph polynomial with masses.
Here,
\be 
M(G):=\sum_{e\in E_G}m_e^2 a_e,
\ee
and it is understood that all $i\eta$'s are absorbed in the masses $m_e$.
We have 
\be
\psi(G)=\psi(G/\gamma)\psi(\gamma)+R_\gamma^G,
\ee
\be
\phi(G)=\phi(G/\gamma)\psi(\gamma)+\tilde{R}_\gamma^G.
\ee
\be
\Xi(G)=\Xi(G/\gamma)\psi(\gamma)+\bar{R}_\gamma^G.
\ee
\be
\psi(G_1G_2)=\psi(G_1)\psi(G_2), 
\ee
\be
\phi(G_1G_2)=\phi(G_1)\psi(G_2)+\phi(G_2)\psi(G_1), 
\ee
\be
\Xi(G_1G_2)=\Xi(G_1)\psi(G_2)+\Xi(G_2)\psi(G_1).
\ee
Here, the remainders $R_\gamma^G$, $\tilde{R}_\gamma^G$, $\bar{R}_\gamma^G$
are all of higher degrees in the subgraph variables than $\psi(\gamma)$. This is crucial to achieve renormalizability \cite{BrownKreimer}.
 
\subsection{Parametric renormalized Feynman rules}\label{parFR}

Omitting constant prefactors and absorbing the $i\eta$'s into the masses $m_e$, the parametric version of the Feynman integral \eqref{eq:momfi} reads
\begin{equation*}
 \int_{\mb P_G} \psi_G^{-\frac{D}{2}} \left( \frac{\psi_G}{\Xi_G} \right)^{w(G)} \Omega_G.
\end{equation*}
Here $w$ denotes the superficial degree of divergence, $\mb P_G$ is the standard projective simplex,
\begin{equation*}
\mb P_G := \mb P\big(\mb R_{\geq 0}^{e_G}\big)= \{ [a_1:\cdots:a_{e_G}] \mid a_i \geq 0  \} \subset \mb {CP}^{e_G-1},
\end{equation*}
and 
\begin{equation*}
 \Omega_G= \sum_{i=1}^{e_G} (-1)^i a_i da_1 \wedge \ldots \wedge \widehat{da_i} \wedge \ldots \wedge da_{e_G}.
\end{equation*}

In a renormalizable field theory, we then get renormalized Feynman rules for an overall logarithmically divergent graph $G$ ($w(G)=0$) 
with logarithmically divergent subgraphs as
\be 
\Phi_R(G)=\int_{\mathbb{P}_G}\sum_{F\in\mathcal{F}_G}(-1)^{|F|}\frac{\ln\frac{\Xi_{G/F}\psi_F+\Xi^0_F\psi_{G/F}}{\Xi^0_{G/F}\psi_F+\Xi^0_F\psi_{G/F}}}{\psi^2_{G/F}\psi^2_F} \Omega_G.
\ee

Formulae for other degrees of divergence for sub- and cographs and further details can be found in \cite{BrownKreimer}. In particular, also overall convergent graphs are covered. 
\subsection{Cut graphs}\label{cutgaphs}
We now give the Feynman rules for graphs $\in H_C$. This can be regarded as giving Feynman rules for a pair $(G,F)$.
\be
\Upsilon_G^F:=\int\left(\prod_{e\in E_F}\frac{1}{P(e)}\prod_{e\not\in E_F}\delta^+(P(e))\right)d^{4e_G}k.\label{Upsilon}
\ee
This holds when the pair $(G,F)$ does not require renormalization. Else we proceed using the co-action of $H_{ren}$ induced by $H_{core}$ on $H_C$, see Sec.(\ref{usingcoaction}) in accordance with the above.

\section{Landau singularities}\label{Landau}
We give a quick recap of Landau singularities of Feynman integrals. For a detailed treatment we refer to the standard textbook \cite{ELOP}, for a short account to the classic paper \cite{colenort}. A mathematical rigorous discussion can be found in \cite{Pham}.

A Feynman graph $G$ represents via the above introduced (renormalized) Feynman rules $\Phi_R$ a function $\Phi_R(G)$ of its kinematics, that is, the external momenta and internal masses. 
If we restrict the allowed masses to a finite set, then each bare graph represents a finite family of such functions, parametrized by the distribution of masses on its internal edges. We model this family by edge-colorings of $G$ where the set of colors $C\subset \mb N$ represents the mass spectrum. In the following we let $G=(G,c)$, with $c:E_G\to C$ the coloring map, always denote a colored graph.

A classic result\footnote{Strictly speaking, the statement is classic, but not the result. Astonishingly, there does not exist a rigorous derivation in the published literature. See the recent works \cite{collinsletter} and \cite{maxnewpaper} for a discussion.} establishes the analyticity of $\Phi_R(G)$ outside an analytic set in the space of kinematic invariants, the \textit{Landau variety} $\mathbb{L}_G$ of $G$. More precisely, the analytic set of singularities of $\Phi_R(G)$ is a subset of $\mathbb{L}_G$ since its equations give only necessary, but not sufficient conditions for $\Phi_R(G)$ to exhibit a singularity.

Using the ``Feynman trick'' 
\begin{equation*}
 \prod_{i=1}^n \frac{1}{X_i}=\int_{[0,1]^n} da_1 \cdots da_n \frac{\delta_0(1-\sum_{i=1}^n a_i)}{(a_1X_1 + \ldots + a_nX_n)^n} =\int_{\Delta^{n-1}}  \frac{da_1 \cdots da_n}{(a_1X_1 + \ldots + a_nX_n)^n}
\end{equation*}
 we may rewrite a momentum space Feynman integral (omitting the $\delta$ factors) as
\begin{equation*}\label{eq:feynman}
 \Phi(G)=\int_{\mb M^{De_G}} dk \prod_{e \in E_G} \frac{1}{Q_e}= \int_{\mb M^{De_G}} dk \int_{\mb P_G} da  \Big( \sum_{e\in E_G} Q_e a_e \Big)^{-e_G}. 
\end{equation*}

From this we find the poles of the integrand characterized by 
\begin{equation}
\tag{L1}
  \forall e\in E_G: \text{ either } k_e^2=m_e^2 \text{ or } a_e=0 .
  \label{eq:L1}
\end{equation}

Some of these poles might still be integrable by a suitable deformation of the integration contour. Such a deformation is impossible if either the contour of integration gets ``pinched'' by the singular hypersurface specified by the equation above or if it occurs at a boundary point of $\mb P_G$, that is, for some $a_e=0$. The pinching condition translates to 
\begin{equation}
\tag{L2}
 \sum_{e \in E_l} a_e k_e =0\text{ for each simple loop $l$ in $G$. }
\label{eq:L2}
 \end{equation}

These two conditions constitute the \textit{Landau equations}, their solution set defines the Landau variety $\mathbb{L}_G$. Some authors include the side constraint $a_e \geq 0$, in other conventions it is used to distinguish \textit{physical} from \textit{non-physical} singularities.

Note that the first set of Landau equations \eqref{eq:L1} induces a natural stratification of $\mb L_G$. Moreover, the strata inherit a partial order from the boundary structure of $\mb P_G$.

\begin{defn}\label{defn:landau}
Let $G$ be a Feynman graph. The singularities of $\Phi_G$ form a poset $(\mathcal{S}_G,\leq)$ where
\begin{equation*}
 \mathcal{S}_G:= \{  l_\gamma \mid \gamma \subset G \}
\end{equation*}
and $l_\gamma$ is the stratum of $\mathbb{L}_G$ associated to the subgraph $\gamma \subset G$ as the solution of Landau's equations for
\begin{equation*}
a_e=0 \text{ if } e \in E_{\gamma} \text{ and } Q_e=0 \text{ if } e \in E_{G/\gamma}. 
\end{equation*}
The partial order $\leq$ is given by reverse inclusion,
\begin{equation*}
 l_\gamma \leq l_\eta \Longleftrightarrow \eta \subset \gamma \Longleftrightarrow \mathbb{L}_{G/\gamma} \subset \mathbb{L}_{G/\eta} .
\end{equation*}
The maximal element in this poset is $l_\emptyset$, called the \textit{leading singularity} of $G$. The other elements $l_\gamma$ with $\gamma \neq \emptyset$ are called \textit{non-leading} or \textit{reduced} singularities, the corresponding graphs $G/\gamma$ are referred to as the \textit{reduced graphs} (of $G$). The coatoms in $\mathcal{S}_G$, the elements covered by $l_\emptyset$, are called \textit{next-to-leading} or \textit{almost leading} singularities. In terms of reduced graphs, these coatoms are represented by the graphs $G/e$ where $e \in E_G$.
\end{defn}

\begin{rem}
 Refining this poset structure allows to derive vanishing statements (``Steinmann relations'') for iterated variations/discontinuities of Feynman integrals; see \cite{Marko-Erik}.
\end{rem}

\section{Partial Fractions and Spanning Trees}\label{partialfractions}
In this section we want to derive one of our main results: The computation $\Phi_R(G)$ of a core Feynman graph $G\in H_{core}$ can be obtained as a sum of evaluations of pairs $G_T:=(G,T)$ where $T\in\mathcal{T}(G)$ runs over all spanning trees of $G$ and edges not in the spanning tree are evaluated on-shell. 

We proceed by separating the integration over energy variables $k_0$ for any internal loop momentum $D$-vector $k\in \mathbb{M}^D$ from the space-like integrations for $(D-1)$-vectors  $\vec{k}$.

\subsection{Divided differences}
Consider the integral (see Eq.(\ref{integrandq}))
\[
\int_{-\infty}^\infty \int_{\mathbb{R}^{D-1}}\prod_{i=1}^{|G|}dk_{i;0}d^{D-1}\vec{k}_i I_G^\Pi(\{k_{i;0}\},\{\vec{k}_i\}).
\]
The replacement of $I_G^\Pi$ by $I_G^R$ as in Eq.(\ref{integrandqren}) is understood if renormalization is needed.

We note that any of those $|G|$  energy integrals converges and hence can be done as a residue integral closing the contour say in the upper complex half-plane upon regarding $k_{i;0}$ as a complex variable.

Such multiple residue integrals can be expressed using divided differences \cite{Henrici}.

To this end consider first  a product $\lambda_\gamma$ of $v_\gamma$ quadrics $Q_e$ which constitute a one-loop graph $\gamma$. Without loss of generality we can asumme that each quadric $Q_e$, $e\in E_\gamma$, has the form
\[
Q_e=(k+r_e)^2-m_e^2+i\eta,
\]
for some loop momentum $k$, four-vectors $r_e$, masses $m_e$ and $0\lneq \eta\ll 1$.
We write
\[
\lambda_\gamma:=\prod_{e=1}^{v_\gamma} \frac{1}{Q_e}.
\] 
The divided difference with regard to the function $f: x\to x^{-1}$ 
delivers the partial fraction decomposition 
\be\label{Lpf}
\lambda_\gamma=\sum_{e=1}^{v_\gamma} f(Q_e)\underbrace{\prod_{f \neq e}\frac{1}{Q_f-Q_e}}_{=:\mathbf{pf}_{e}^\gamma}.
\ee
Note that all residues of poles in $\mathbf{pf}_{e}^\gamma$ at $Q_f=Q_e$ vanish by definition of $\mathbf{pf}_{e}^\gamma$.

As an example for the bubble $b$ we find : 
\[
\lambda_b=\frac{1}{Q_1Q_2}=\frac{1}{Q_1}\frac{1}{Q_2-Q_1}+\frac{1}{Q_2}\frac{1}{Q_1-Q_2}=\frac{1}{Q_1-Q_2}\left(f(Q_2)-f(Q_1)\right).
\]

\subsection{$\mathbf{pf}$ and spanning trees}
The edges $f\in E_\gamma$  in $\mathbf{pf}_e^\gamma$ , for $f\not= e$, for any chosen edge $f\in E_\gamma$, form a spanning tree of $\gamma$.

We hence can write
\[
\lambda_\gamma=\sum_{T\in\mathcal{T}(\gamma)}\mathbf{pf}(T)\frac{1}{Q_{\acute{T}}},
\]
where $\acute{T}$ denotes  the edge of $\gamma$ which is not in $T$ so that $\mathbf{pf}(T)=\mathbf{pf}_{\acute{T}}^\gamma$, see Eq.(\ref{Lpf}).

The inverse $\mathbf{pf}(T)^{-1}=(\mathbf{pf}_{\acute{T}}^\gamma)^{-1}$ is linear in the four-vector $k$ and is real for $\eta\not=0$, in particular it vanishes for real values of $k_0$. 

The divided difference structure gives
\begin{prop}
$\lambda_\gamma^{-1}$ does not vanish at any zero of any $\mathbf{pf}(T)^{-1}$.
\end{prop}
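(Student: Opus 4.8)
The plan is to exploit the contrast between how the $i\eta$ prescription enters the two objects: it is present in every quadric $Q_e$ but cancels in every difference $Q_f-Q_{\acute{T}}$. Concretely, I fix the spatial momentum $\vec{k}\in\mathbb{R}^{D-1}$ and regard the energy $k_0$ as the complex variable, so that both $\lambda_\gamma^{-1}=\prod_e Q_e$ and $\mathbf{pf}(T)^{-1}=\prod_{f\neq \acute{T}}(Q_f-Q_{\acute{T}})$ become functions of $k_0\in\mathbb{C}$. A zero of $\mathbf{pf}(T)^{-1}$ is a point where $Q_f=Q_{\acute{T}}$ for some $f\neq\acute{T}$, and I want to show that no quadric $Q_e$ can vanish there.

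First I would record the imaginary parts on the relevant domain. Writing $Q_e=(k+r_e)^2-m_e^2+i\eta$ with $r_e,m_e$ real and $\vec{k}$ real, a short computation gives $\mathrm{Im}\,Q_e=2\,\mathrm{Im}(k_0)\,(\mathrm{Re}(k_0)+r_{e,0})+\eta$. In particular, on the real axis $\mathrm{Im}(k_0)=0$ one has $\mathrm{Im}\,Q_e=\eta>0$, so $Q_e$ never vanishes for real $k_0$. By contrast, in the difference $Q_f-Q_{\acute{T}}=2\,k\cdot(r_f-r_{\acute{T}})+(r_f^2-r_{\acute{T}}^2-m_f^2+m_{\acute{T}}^2)$ the constant $i\eta$ cancels, leaving an affine function $\alpha k_0+\beta$ with \emph{real} coefficients $\alpha=2(r_f-r_{\acute{T}})_0$ and $\beta$ (real because $\vec{k}$, the $r$'s and the masses are real). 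This makes precise the assertion recorded above, that each factor of $\mathbf{pf}(T)^{-1}$ is linear in $k$ and real for $\eta\neq 0$.

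The key step is then to locate the zeros of $\mathbf{pf}(T)^{-1}$. Since each factor $\alpha k_0+\beta$ is real-affine, its unique root (when $\alpha\neq 0$) is the real number $-\beta/\alpha$; hence every zero of $\mathbf{pf}(T)^{-1}$ sits on the real $k_0$-axis. If $\alpha=0$ the factor is the real constant $\beta$, which is nonzero because the quadrics $Q_f,Q_{\acute{T}}$ are distinct (as required for $\mathbf{pf}$ to be defined, cf.\ \eqref{Lpf}), and so contributes no zero at all. Combining this with the previous step finishes the argument: at any zero of $\mathbf{pf}(T)^{-1}$ the energy $k_0$ is real, where $\mathrm{Im}\,Q_e=\eta>0$ forces $Q_e\neq 0$ for \emph{every} $e$, and therefore $\lambda_\gamma^{-1}=\prod_e Q_e\neq 0$.

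I do not expect a serious obstacle here; the only point needing care is the bookkeeping of imaginary parts, i.e.\ making precise that the $i\eta$ survives in each quadric but cancels in each difference, together with dismissing the harmless degenerate case $\alpha=0$ (a difference with no energy dependence) by appealing to the distinctness of the quadrics underlying the partial-fraction decomposition \eqref{Lpf}.
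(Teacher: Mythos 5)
Your proof is correct, but it takes a genuinely different route from the paper's. The paper argues algebraically through the divided-difference structure of Eq.(\ref{Lpf}): at a zero of $\mathbf{pf}(T)^{-1}$, i.e.\ on the locus $Q_f=Q_{\acute{T}}$, the two terms of the partial-fraction sum that develop a pole there pair up, and the combined residue is proportional to $1/Q_f-1/Q_{\acute{T}}$, which vanishes on that locus; no reference is made to where in the complex $k_0$-plane the locus sits. You instead argue analytically: the $i\eta$ cancels in every difference $Q_f-Q_{\acute{T}}$, so each factor of $\mathbf{pf}(T)^{-1}$ is real-affine in $k_0$ with a real root, while each individual quadric retains its $+i\eta$ and hence satisfies $\mathrm{Im}\,Q_e=\eta>0$ on the real axis; the two zero loci are therefore disjoint. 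Your route buys two things: it proves the literal statement directly ($\prod_e Q_e\neq 0$ for \emph{every} edge $e$, not only for the pair $f,\acute{T}$), and it supplies the fact, left implicit in the paper, that $Q_f$ and $Q_{\acute{T}}$ are themselves nonzero at the coincidence point --- without which the coefficient $1/Q_f-1/Q_{\acute{T}}$ would be ill-defined. The paper's route, conversely, is purely algebraic and directly explains why the representation $\lambda_\gamma=\sum_T f(Q_{\acute{T}})\,\mathbf{pf}(T)$ is free of spurious poles, which is the consequence used in the sequel. One small caveat on your degenerate case: distinctness of the quadrics as functions of the full four-vector $k$ does not exclude that, for special real $\vec{k}$ with $r_{f,0}=r_{\acute{T},0}$, the difference $Q_f-Q_{\acute{T}}$ vanishes identically in $k_0$; this confluent situation occurs only on a measure-zero set of $\vec{k}$ and is not addressed by the paper either, so it is not a gap relative to the paper's own standard of rigor.
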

\begin{proof}
For $\mathbf{pf}(T)^{-1}$ to vanish, we need to have $\acute{T}$ and $f$ such that $Q_f=Q_{\acute{T}}$. By the divided difference structure the coefficient of this zero is $1/Q_f-1/Q_{\acute{T}}$ which vanishes.
\end{proof}
As a result the poles of $\lambda_\gamma$ in the variable $k_0$ are solely determined by the two zeroes of the quadric $Q_{\acute{T}}$ which are located in the upper and lower complex $k_0$-plane. In particular no spurious infinities arise from poles in $\mathbf{pf}(T)$. Accordingly, upon choosing a dedicated different $0<i\eta_e\ll 1$ for every
quadric $Q_e$ one can show that the resulting residue integrals are independent
of  such a choice \cite{KreimerThesis}. Thus the product of inverse quadrivs regarded as a distribution has a unique definition also when represented as a divided difference.

Indeed,
\[
Q_{\acute{T}}=(k_0+r_{{\acute{T}},0})^2-(\vec{k}+\vec{r}_{\acute{T}})^2-m_{\acute{T}}^2+i\eta,
\]
so that the zeroes are at
\[
{k_0^{\acute{T}}}_\pm=-r_{{\acute{T}},0}\pm \sqrt{(\vec{k}+\vec{r}_{\acute{T}})^2+m_{\acute{T}}^2-i\eta},
\]
and we close the contour in either half-plane with causal boundary conditions as usual.
\subsection{Shifts}
$\lambda_\gamma$ above has to be integrated:
\[
\Phi(\gamma):=\int_{-\infty}^\infty dk_0 \int d^{D-1}\vec{k} \lambda_\gamma.
\]
\begin{prop}
For each term in the partial fraction decomposition the integral
\[
\Phi(\gamma,{T}):=\int_{-\infty}^\infty dk_0 \int d^{D-1}\vec{k} f(Q_{\acute{T}})\mathbf{pf}(T),
\]
exists as a unique Laurent-Taylor  series with a pole of at most first order for 
\[
0<\epsilon\equiv D/2-2\ll 1
\]
and is invariant under the shifts $k_0\to k_0-r_{\acute{T},0}$ and $\vec{k}\to \vec{k}-\vec{r}_{\acute{T}}$.  
\end{prop}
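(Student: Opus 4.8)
The plan is to dispose of the shift invariance first, since it lets us normalise the distinguished quadric, and then to establish existence and the pole order by carrying out the energy integration explicitly. For the shift invariance I would note that $k_0 \to k_0 - r_{\acute{T},0}$ together with $\vec{k} \to \vec{k} - \vec{r}_{\acute{T}}$ is a translation of the $D$-dimensional loop momentum whose sole effect on the integrand is to recentre $Q_{\acute{T}}$ to the standard form $k_0^2 - \vec{k}^2 - m_{\acute{T}}^2 + i\eta$, the linear differences $Q_f - Q_{\acute{T}}$ constituting $\mathbf{pf}(T)$ transforming accordingly. Because the integral is defined through dimensional regularisation, and the continued measure $d^D k$ is invariant under affine transformations of the loop momentum, exactly the property invoked in the introduction, the value of $\Phi(\gamma,T)$ is unchanged. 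From here on I would work in the recentred variables.

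For existence I would perform the $k_0$-integration as a contour integral, closing in the upper half-plane. Viewed as a meromorphic function of complex $k_0$, the integrand $f(Q_{\acute{T}})\mathbf{pf}(T)$ has the two poles ${k_0^{\acute{T}}}_\pm$ of $Q_{\acute{T}}$, one in each half-plane, and in addition the zeros of the linear factors $Q_f - Q_{\acute{T}}$ of $\mathbf{pf}(T)$, all of which sit on the real axis. The preceding Proposition, that $\lambda_\gamma^{-1}$ does not vanish at any zero of $\mathbf{pf}(T)^{-1}$, together with the independence of the result under the choice of the separate $i\eta_e$ recalled after it (following \cite{KreimerThesis}), lets me push these real poles off the contour without altering the integral, so that only the residue at the $Q_{\acute{T}}$-pole in the chosen half-plane survives. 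The outcome is an explicit spatial integrand: $\mathbf{pf}(T)$ evaluated on the mass shell $k_0 = {k_0^{\acute{T}}}$, divided by the Jacobian $2\sqrt{\vec{k}^2 + m_{\acute{T}}^2}$.

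It then remains to show that the residual $(D-1)$-dimensional integral over $\vec{k}$ defines a germ of meromorphic functions near $D=4$ with a pole of order at most one. I would power-count the recentred integrand: the on-shell residue contributes a factor $\sim |\vec{k}|^{-1}$, and each of the $v_\gamma - 1$ linear factors in $\mathbf{pf}(T)$ contributes $\sim |\vec{k}|^{-1}$ on shell, so the integrand decays like $|\vec{k}|^{-v_\gamma}$ and the integral converges absolutely for $D - 1 < v_\gamma$. This domain of convergence provides a holomorphic function of $D$, which continues meromorphically to a neighbourhood of $D=4$; since $\gamma$ carries a single loop there are no nested subdivergences that could produce higher poles, so the continuation has at most a simple pole, yielding the asserted Laurent-Taylor expansion in $\epsilon = D/2 - 2$. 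Uniqueness of the germ then follows from the identity theorem applied to this analytic continuation, combined with the $i\eta_e$-independence that renders the underlying distributional product unambiguous.

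The step I expect to be the main obstacle is the treatment of the spurious real-axis poles of $\mathbf{pf}(T)$ during the contour integration, and their interplay with the subsequent $\vec{k}$-integration. One must argue, using the divided-difference structure and the freedom in the regulators $i\eta_e$, not only that these poles contribute no residue, but also that they generate no additional threshold or collinear singularities in the spatial integral that could raise the pole order beyond one. Ensuring that the power counting yields a genuine meromorphic continuation in $D$, uniform across the integration region, rather than merely pointwise convergence, is the technical heart of the argument.
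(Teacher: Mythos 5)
Your proposal is correct in substance, but note that the paper's own proof of this Proposition is a single sentence: ``Elementary properties of dimensional regularisation'', with a pointer to the cited theses/textbooks. What you have written is essentially an unpacking of that citation into an actual argument, and it unpacks it correctly: the shift invariance is precisely translation invariance of the dimensionally continued measure under affine changes of the loop momentum (the property the authors announce in the introduction), and the Laurent structure with at most a first-order pole is the standard statement that a one-loop dimensionally regularised integral with no nested subdivergences is meromorphic in $D$ with simple poles where power counting fails. Your power counting (integrand decay $|\vec k|^{-v_\gamma}$ on shell, absolute convergence for $D-1<v_\gamma$) is consistent with the paper's subsequent remark that each $\Phi(\gamma,T)$ can be more divergent than $\Phi(\gamma)$ itself and generically is a proper Laurent series with a first-order pole, so your route recovers exactly what the authors intend.

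One caution on the step you yourself flag as the main obstacle. The preceding Proposition in the paper (that $\lambda_\gamma^{-1}$ does not vanish at any zero of any $\mathbf{pf}(T)^{-1}$) is a statement about the \emph{sum} over spanning trees: the spurious poles at $Q_f=Q_{\acute{T}}$ are harmless there only because the divided-difference numerator $1/Q_f-1/Q_{\acute{T}}$ vanishes at that locus. It does not by itself remove the real-axis poles of $\mathbf{pf}(T)$ from an \emph{individual} term $\Phi(\gamma,T)$, so it cannot be the tool that ``pushes these poles off the contour''. What renders the individual term well defined is the prescription of pairwise distinct regulators $0<\eta_e\ll 1$, which displaces each pole of $Q_f-Q_{\acute{T}}$ into one half-plane, together with the claim --- deferred to the thesis reference in the paper as well --- that the resulting residue integral is independent of that choice. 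You invoke this correctly, but be aware that this independence is the genuinely non-elementary input; neither your argument nor the paper proves it, and your proof should cite it as an external lemma rather than as a consequence of the divided-difference Proposition.
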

\begin{proof}
Elementary properties of dimensional regularisation \cite{KreimerThesis,Collins}.
\end{proof}
\begin{rem}
A remark on powercounting is in order. Each term in $\Phi(\gamma,T)$ can be more divergent than $\Phi(\gamma)$ itself and only in the sum over spanning trees is the original degree of divergence restored and renormalization achieved through suitable subtractions. That the limit $\epsilon\to 0$ exists for the sum does not imply that it does exist in any summand. Indeed it does not in general and generically $\Phi(\gamma,T)$ is a proper Laurent series with a pole of first order. 
\end{rem}
Assume from now on that for each $\Phi(\gamma,{T})$ the indicated shift has been performed so that $Q_{\acute{T}}=k^2-m^2_{\acute{T}}+i\eta$.
Let 
\[
\mathbf{\bar{pf}}(T)=\mathbf{pf}(T)_{k_0\to k_0-r_{\acute{T},0},
\vec{k}\to \vec{k}-\vec{r}_{\acute{T}}}.
\]
We get
\[
\Phi(\gamma)=\sum_{T\in\mathcal{T}(\gamma)}\Phi(\gamma,{T}):=\int_{-\infty}^\infty dk_0\int  d^{D-1}\vec{k}\sum_{T\in\mathcal{T}(\gamma)}
\mathbf{\bar{pf}}(T)\frac{1}{k_0^2-\vec{k}^2-m_{\acute{T}}^2+i\eta}.
\]
Exchanging the order of integration and doing first the $k_0$-integral by a contour integration closing in the upper complex half-plane we find
for each $T$ 
\[
\Phi(\gamma,{T})=\int d^{D-1}\vec{k}
\mathbf{\bar{pf}}(T)_{|k_0=+\sqrt{\vec{k}^2-m_{\acute{T}}^2+i\eta}}\times\frac{1}{+2\sqrt{\vec{k}^2-m_{\acute{T}}^2+i\eta}}.
\]
Renormalization is understood as needed.

This is of the desired form but has to be generalized to the multi-loop case. 
\subsection{Partial Fractions for generic graphs}
A generalization to multi-loop  graphs proceeds as follows.
We define 
\be\label{Lambda}
\Lambda(Fl_G):=\sum_i \prod_{j=1}^{|G|}\lambda_{\gamma_j^{(i)}}.
\ee
This is a homogeneous polynomial of degree $|G|$ in inverse quadrics $1/Q_e$. The $\gamma_j^{(i)}$ are determined as above in Eq.(\ref{flagg}).

For the unrenormalized integral $\Phi(G)$ on $|G|$ loop momenta $k(j)$, $1\leq j\leq|G|$ we have
\[
\Phi(G):=
\sum_{i=1}^{\xi_G} \left(\prod_{j=1}^{|G|}\int_{\infty}^\infty dk_0(j)\int d^{D-1}\vec{k}(j)\right) \times \left(\prod_{j=1}^{|G|}\lambda_{\gamma_j^{(i)}}\right).
\] 
Note that each flag contributes different residues in the variables $k_0(j)$.

Carrying out all $k_0(j)$-integrals by contour integrations first we find
\be\label{ff}
\Phi(G):=
\sum_{i=1}^{\xi_G} \left(\prod_{j=1}^{|G|}\int d^{D-1}\vec{k}(j)\right) \times \prod_{j=1}^{|G|}
\sum_{T\in\mathcal{T}(\gamma_j^{(i)})}
\mathbf{\bar{pf}}(T)_{k_0(j)=+\sqrt{\vec{k}(j)^2-m_{\acute{T}}^2+i\eta}}
\frac{1}{+2\sqrt{\vec{k}^2-m_{\acute{T}}^2+i\eta}}.
\ee
Note that for each of the $\xi_G$ terms in the above sum, the spannng trees $t_j^i$ of the graphs $\gamma_j^{(i)}$ combine to a spanning tree $T\in\mathcal{T}(G)$. Furthermore each term in the summand indicates one of the $|G|!$ possible orders of the $|G|$ independent cycles of the graph.

As an example let us consider the 3-edge banana graph $\Theta$ of Fig.(\ref{flagtheta}).
We have three quadrics and two loop momenta $k(1)=k,k(2)=l$. The three quadrics are
\bea
Q_1 & = & l_0^2-\vec{l}^2-m_1^2+i\eta\\
Q_2 & = & (l_0-k_0+q_0)^2-\vec{l}^2-\vec{k}^2+2\vec{l}\cdot\vec{k}-m_2^2+i\eta\\
Q_3 & = & k_0^2-\vec{k}^2-m_3^2+i\eta,
\eea
Then, $Q_1$ determines 
\[
l_{0,1}:=+\sqrt{\vec{l}^2+m_1^2},
\]
$Q_2$ determines 
\[
l_{0,2}:=k_0-q_0+\sqrt{\vec{l}^2+\vec{k}^2-2\vec{l}\cdot\vec{k}+m_2^2}
\] for the location of  poles in $l_0$ and
$Q_3$ determines 
\[
k_{0,1}:=+\sqrt{\vec{k}^2+m_3^2},
\] 
while  $Q_2$ determines 
\[
k_{0,2}:=l_0-q_0+\sqrt{\vec{l}^2+\vec{k}^2-2\vec{l}\cdot\vec{k}+m_2^2}
\]
for the location of  poles in $k_0$.

We have
\[
\frac{1}{Q_1Q_2Q_3}=\frac{1}{Q_3}\left(\frac{1}{Q_2-Q_1}\left(\frac{1}{Q_1}-\frac{1}{Q_2}\right)\right)
\]
After an integration of $l_0$, we get
\[
\frac{1}{Q_1Q_2Q_3}\to \frac{1}{Q_3}\left(\frac{1}{Q_1}_{|Q_2=0}\times\frac{1}{l_{0,2}}+\frac{1}{Q_2}_{|Q_1=0}\times\frac{1}{l_{0,1}}\right).
\]
Here $Q_3,Q_{1_{|Q_2=0}},Q_{2_{|Q_1=0}}$ depend on $k_0$.

We can shift $l_0\to l_0+k_0-q_0$ and also for the $k_0$ integration$k_0\to k_0+l_0-q_0$ to obtain the representation in accordance with Eq.(\ref{ff}).

So next we do the $k_0$ integration. $Q_3$ delivers
\[
\frac{1}{Q_1}_{|Q_2=0,Q_3=0}\times\frac{1}{l_{0,2} k_{0,1}}+
\frac{1}{Q_2}_{|Q_1=0,Q_3=0}\times\frac{1}{l_{0,1} k_{0,1}}.
\]
Note that $Q_{1_{|Q_2=0}},Q_{2_{|Q_1=0}}$ have the same zero in $k_0$ by construction
and summing the two terms gives
\[
\frac{1}{Q_3}_{|Q_1=0,Q_2=0}\times\frac{1}{k_{0,2} l_{0,1}}.
\]
Thus integrating the $0$-components delivers a sum of three terms:
\beas
\int_{-\infty}^\infty dk_0dl_0 \frac{1}{Q_1Q_2Q_3} & = & \frac{1}{{Q_2}_{|l_0=l_{0,1},k_0=k_{0,1}}}\times\frac{1}{l_{0,1}k_{0,1}}\\ & + &
\frac{1}{{Q_1}_{|l_0=l_{0,2},k_0=k_{0,1}}}\times\frac{1}{l_{0,2}k_{0,1}}\\ & + & \frac{1}{{Q_3}_{|k_0=k_{0,2},l_0=l_{0,1}}}\times\frac{1}{k_{0,2}l_{0,1}},
\eeas
where we note that after the shifts $k_{0,i}$ and $l_{0,i}$ evaluate to the corresponding $\sqrt{s+m_j^2}$, $s=\vec{k}^2$ or $\vec{l}^2$ in accordance with those shifts.

The sector decomposition $\vec{l}^2>\vec{k}^2$ or
$\vec{l}^2<\vec{k}^2$ then gives the six terms of Fig.(\ref{flagtheta}).
\begin{rem}
The above methods were already used some time ago when using parallel and orthogonal space decompositions in one- and two-loop integrals with masses \cite{KreimerOneLoop,KreimerThesis,KreimerMaster,KreimerTriangle,KreimerBox}.
Integral representations were obtained which facilitated numerical approaches to massive two-loop integrals not available by other methods at the time.

Related methods now emerge under the name of loop-tree duality (LTD). In particular the approach by Hirschi and collaborators \cite{Capatti1,Capatti2,Capatti3,Capatti4} relates to ours
in avoiding spurious singularities at infinity similarly.\footnote{Even when singularities at infinity cancel each other out at the end of the computation they make convergence slow in any approach.} LTD is approached in the literature sometimes slightly differently by modifying the causal structure of propagators, an approach we strictly avoid. See Sec.(II.B.) in \cite{Capatti1} for a critical discussion of such approaches. A valuable discussion of causal structure is given by Tomboulis \cite{Tomboulis}.
\end{rem}
\subsection{General structure}
To understand the structure of this integral it is then useful to count the number of spanning trees of a graph to control its computation.
This is also useful to understand the number of Hodge matrices describing the analytic structure of an evaluated Feynman graph \cite{coaction}.

So we let  $\mathit{spt}(G)=|\mathcal{T}(G)|$ be the number of spanning trees of $G$,
$\mathit{spt}:H_{core}\to \mathbb{N}$, 
and define $\mathbf{spt}:H_{core}\to \mathbb{N}$, $\mathbf{spt}(G):=\mathit{spt}(G)|G|!$.

%

\begin{prop}\label{csptr}
  \mbox{}
  \begin{enumerate}
    \item 
\[
\mathbf{spt}(G)=\sum_{|G^\prime|=1} \mathbf{spt}(G^\prime)\mathbf{spt}(G/G'),
\]
and
\item
If $|G|=1$ and $G$ is bridgeless we have $\mathbf{spt}(G)=spt(G)=e_G$ while for $|G|>1$
\[
\mathbf{spt}(G)=m^{|G|-1}\mathbf{spt}^{|G|}\tilde{\Delta}_{core}^{|G|-1}(G)=m^{|G|-1}\mathit{spt}^{|G|}\tilde{\Delta}_{core}^{|G|-1}(G).
\]
\end{enumerate}
\end{prop}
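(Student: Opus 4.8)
The plan is to establish part (1) as a fundamental-cycle counting identity and then obtain part (2) by iterating (1), with an induction on the loop number $|G|$.

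\textbf{Setup.} First I would record the content of the factors. A bridgeless connected graph $g$ with $|g|=1$ satisfies $e_g=v_g$ and is a simple cycle, so $\mathit{spt}(g)=e_g$ (delete any one of its edges) and hence $\mathbf{spt}(g)=\mathit{spt}(g)\cdot 1!=e_g$; this already gives the base case $\mathbf{spt}(G)=\mathit{spt}(G)=e_G$ of (2). I would also note that the subgraphs $g\subsetneq G$ with $|g|=1$ entering $\tilde{\Delta}_{core}$ are exactly the simple cycles of $G$, and that contracting such a cycle lowers the loop number, $|G/g|=|G|-1$.

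\textbf{Proof of (1).} Writing out $\mathbf{spt}=\mathit{spt}\cdot|\cdot|!$ and cancelling the common factor $(|G|-1)!$, the identity in (1) is equivalent to
\[
|G|\cdot \mathit{spt}(G)=\sum_{g}e_g\,\mathit{spt}(G/g),
\]
the sum running over simple cycles $g$ of $G$. I would prove this bijectively. The left side counts pairs $(T,e)$ with $T\in\mathcal{T}(G)$ and $e$ one of the $|G|$ edges outside $T$. To such a pair I associate the fundamental cycle $g:=C_{T,e}$ formed by $e$ together with the unique $T$-path joining its endpoints, the marked edge $e\in E_g$, and the residual edge set $T\setminus E_g$. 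The key check is that $T\setminus E_g$ descends to a spanning tree of $G/g$: it has cardinality $v_G-e_g$, and it stays connected and acyclic after contracting the $T$-path $E_g\setminus\{e\}$. Conversely, a triple $(g,e',T')$ with $g$ a simple cycle, $e'\in E_g$ and $T'\in\mathcal{T}(G/g)$ is sent to $(T'\cup(E_g\setminus\{e'\}),e')$, lifting $T'$ to $G$ and adjoining the spanning path of $g$; a standard tree-lifting argument shows this is a spanning tree of $G$ whose fundamental cycle through $e'$ is exactly $g$. The two assignments are mutually inverse, which yields the displayed identity and hence (1).

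\textbf{Proof of (2).} Here I would iterate (1). Since $Fl_G=\tilde{\Delta}_{core}^{|G|-1}(G)$ lands in $\mathrm{Aug}_{core}^{\otimes|G|}$ with all $|G|$ factors of total degree $|G|$, each factor $\gamma_j$ has loop number one and is therefore a one-loop (primitive) graph, so $\mathbf{spt}(\gamma_j)=\mathit{spt}(\gamma_j)=e_{\gamma_j}$; this is precisely why the two right-hand expressions (with $\mathbf{spt}$ and with $\mathit{spt}$) agree. I then argue by induction on $|G|$. The case $|G|=1$ is the base observation above. For $|G|>1$, I peel off the first flag factor $\gamma_1=g$ using (1), apply the induction hypothesis to $G/g$ (loop number $|G|-1$, whose flags are exactly the tails $\gamma_2\otimes\cdots\otimes\gamma_{|G|}$), and invoke coassociativity of $\tilde{\Delta}_{core}$ so that the flag expansion of $G$ factors through that of $G/g$. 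Applying $m^{|G|-1}\mathbf{spt}^{|G|}$ then multiplies the per-factor values into $\prod_j e_{\gamma_j}$ and sums over flags, reproducing $\mathbf{spt}(G)=\sum_{\mathrm{flags}}\prod_{j=1}^{|G|}e_{\gamma_j}$.

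\textbf{Main obstacle.} I expect the delicate point to be the bijection in (1) within the half-edge formalism: verifying that the residual tree edges genuinely form a spanning tree of the contraction $G/g$ and that the cycle-lifting is well defined (self-loops produced by contraction being discarded as in \eqref{edgecont}), together with the identification ``one-loop core subgraph $=$ simple cycle'' in this setting. A secondary bookkeeping point is the coassociativity/ordering argument in (2): one must ensure the factorial $|G|!$ emerges entirely from the number of flags (the orderings of the fundamental cycles) rather than being inserted by hand.
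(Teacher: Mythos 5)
Your proposal is correct and follows essentially the same route as the paper: part (1) by double-counting pairs $(T,e)$ of a spanning tree and a non-tree edge via their fundamental cycles (the paper states the same correspondence, which you merely spell out as an explicit two-sided bijection), and part (2) by iterating part (1) through the flag decomposition $\tilde{\Delta}_{core}^{|G|-1}(G)$, with the observation that the tensor factors are one-loop graphs so $\mathbf{spt}$ and $\mathit{spt}$ agree on them. The paper additionally notes the direct identification of the $|G|!$ orderings of a fundamental-cycle basis with the flags, but this is the same counting you perform inductively.
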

\begin{proof}(from \cite{Kr-Y})
  \begin{enumerate}
  \item For all $T\in \mathcal{T}(G)$ and $e\in E_G\setminus E_T$ there is a unique cycle in $T\cup e$.  This is called the fundamental cycle $l(T,e)$ associated to $T$ and $e$.  For each spanning tree the fundamental cycles associated to $T$ and each of the edges of $E_G\setminus E_T$ give a basis for the cycle space of $G$.

  Let us count $(T,e)$ pairs with $T\in \mathcal{T}(G)$ and $e\in E_G\setminus E_T$ in two different ways.  Counting directly, there are $\mathit{spt}(G) |G|$ such pairs.  Now we will count $(T,e)$ pairs based on the fundamental cycles.  Each cycle $C$ can appear as a fundamental cycle for any edge $e$ in $C$ and any spanning tree formed from a spanning tree of $G/C$ along with the edges of $C\setminus e$.  So $C$ is the fundamental cycle for $|C|\textit{spt}(G/C)$ $(T,e)$ pairs. So there are $\sum_{|G^\prime|=1} \mathit{spt}(G^\prime)\mathit{spt}(G/G')$ $(T,e)$ pairs in all.  Thus we have
  \[
  \mathit{spt}(G) |G| = \sum_{|G^\prime|=1} \mathit{spt}(G^\prime)\mathit{spt}(G/G').
  \]
  Multiplying both sides by $(|G|-1)!$ gives the result.
  
  \item The $|G|=1$ case is immediate as a bridgeless graph with $|G|=1$ is simply a cycle.  The first equality follows from iterating part $i)$.  To see the same argument directly, note  For any $T\in\mathcal{T}(G)$ the basis of fundamental cycles $\{l_1, \ldots, l_{|G|}\}$ can be ordered in $|G|!$ ways corresponding exactly to the $|G|!$ flags generated by  
$\tilde{\Delta}^{|G|-1}(G)$
\[
l_{i_1}\otimes l_{i_2}/E_{l_{i_1}}\otimes\cdots\otimes l_{i_{|G|}}/(\cup_{j=1}^{|G|-1}E_{l_j}).
\]

Since the $\mathbf{spt}$ on the right of the first equality only acts on one loop graphs it can be replaced by $\mathit{spt}$.
\end{enumerate}
\end{proof}

Note that we can recover $I^\Pi_G$ from each single flag.
\begin{prop}
\[
\Lambda(Fl_G)=\xi_G I^\Pi_G,
\]
\end{prop}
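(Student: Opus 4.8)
The plan is to show that the sum $\Lambda(Fl_G) = \sum_i \prod_{j=1}^{|G|} \lambda_{\gamma_j^{(i)}}$ collapses, for each flag, to a single copy of $I^\Pi_G = \prod_{e \in E_G} 1/Q_e$, so that summing over the $\xi_G$ flags yields the factor $\xi_G$.

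\textbf{Step 1: Analyze a single flag.} First I would fix an expanded flag $\tilde{f} : G_1 \subsetneq G_2 \subsetneq \cdots \subsetneq G_{|G|} = G$ and examine the product $\prod_{j=1}^{|G|} \lambda_{\gamma_j^{(i)}}$ where $\gamma_j = G_j/G_{j-1}$ is a one-loop graph. By Eq.(\ref{flagg}) the cycles are built as $\gamma_1 = l_1$, $\gamma_2 = l_2/E_{l_1 \cap l_2}$, and in general $\gamma_j = l_j/E_{l_1 \cap \cdots \cap l_{j-1}}$. Each $\lambda_{\gamma_j}$ is a product of the inverse quadrics $1/Q_e$ over the edges $e \in E_{\gamma_j}$, by the definition $\lambda_\gamma = \prod_{e \in E_\gamma} 1/Q_e$ given before Eq.(\ref{Lpf}).

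\textbf{Step 2: Identify the edge multiset of one flag with $E_G$.} The heart of the argument is the claim that, for a \emph{single} expanded flag, the disjoint edge sets $E_{\gamma_1}, \ldots, E_{\gamma_{|G|}}$ exactly partition $E_G$, so that $\prod_{j=1}^{|G|} \lambda_{\gamma_j} = \prod_{e \in E_G} 1/Q_e = I^\Pi_G$. This follows because passing to the cograph $\gamma_j = l_j/E_{l_1 \cap \cdots \cap l_{j-1}}$ precisely removes from $l_j$ those edges already accounted for in earlier cycles; the remaining edges of $\gamma_j$ are the ones newly introduced at step $j$. Since the fundamental cycles $l_1, \ldots, l_{|G|}$ associated to a spanning tree form a basis of the cycle space, and each non-tree edge lies in exactly one fundamental cycle while the tree edges get distributed through the successive quotients, every edge of $G$ appears in exactly one $E_{\gamma_j}$. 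I would verify this partition claim carefully, since it is the step where the combinatorial structure of flags does all the work. This can be checked inductively on $j$: the edges surviving in $\gamma_j$ are $E_{l_j} \setminus (E_{l_1} \cup \cdots \cup E_{l_{j-1}})$, and telescoping these over $j = 1, \ldots, |G|$ recovers $\bigcup_j E_{l_j} = E_G$.

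\textbf{Step 3: Sum over flags.} Since every one of the $\xi_G$ expanded flags contributes the identical quantity $I^\Pi_G$ (the value $\prod_{e \in E_G} 1/Q_e$ does not depend on the chosen flag), summing gives $\Lambda(Fl_G) = \sum_{i=1}^{\xi_G} I^\Pi_G = \xi_G \, I^\Pi_G$, as claimed.

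The main obstacle I anticipate is \emph{Step 2}: making precise that the edges of the successive cographs $\gamma_j$ form a genuine partition of $E_G$ rather than overcounting or omitting tree edges. The subtlety is that the quotient $l_j/E_{l_1 \cap \cdots \cap l_{j-1}}$ removes shared edges, and one must confirm that this bookkeeping neither double-counts an edge nor drops one, across all $|G|$ steps simultaneously. The example of the banana graph $\Theta$ in Fig.(\ref{flagtheta}), where $Fl_\Theta = \gamma_{12} \otimes t_3 + \gamma_{23} \otimes t_1 + \gamma_{31} \otimes t_2$ and each term recovers $1/(Q_1 Q_2 Q_3)$, serves as the guiding sanity check for the general partition argument.
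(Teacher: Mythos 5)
Your proposal is correct and follows essentially the same route as the paper: the paper's (very terse) proof extends $\lambda$ multiplicatively over the tensor factors of each flag and implicitly uses exactly your Step 2, namely that $E_{\gamma_j}=E_{G_j}\setminus E_{G_{j-1}}$ so the edge sets of the cographs in one flag partition $E_G$ and each flag contributes one copy of $I^\Pi_G$. Your explicit telescoping verification of the partition claim is the only content the paper leaves unstated.
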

As before $\xi_G$ is the number of distinct flags in $Fl_G$.
\begin{proof}
By definition of $Fl_G$ we can write $Fl_G=\sum_{j=1}^{\xi_G} \gamma_1^{(j)}\otimes\cdots\otimes \gamma^{(j)}_{|G|}$. Each $\lambda(\gamma_k^{(j)})=\prod_{e\in E_{\gamma^{(j)}_k}}\frac{1}{Q_e}$ and we use $\lambda(u\otimes v)=\lambda(u)\lambda(v)$ where we extend $\lambda$ as a map $\lambda:H_{core}\to \mathbb{C}$, $\lambda(G)=\prod_{e\in E_G}\frac{1}{Q_e}$.
\end{proof}
In carrying out all $dk_0(j)$-integrals all $\xi_G$ flags contribute.

Let $\mathit{spt}(G)\equiv |\mathcal{T}_G|$ be the number of spanning trees of $G$.
\begin{lem}
There are $|G|!\times \mathit{spt}(G)=:\mathbf{spt}(G)$ contributing residues. 
\end{lem}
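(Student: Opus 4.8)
The plan is to read the residue count straight off the representation \eqref{ff} and then identify the resulting combinatorial sum with Proposition \ref{csptr}. First I would fix one of the $\xi_G$ flags $\gamma_1^{(i)}\otimes\cdots\otimes\gamma_{|G|}^{(i)}$ and recall that every $\gamma_j^{(i)}$ is a one-loop graph, since the primitives of $H_{core}$ are exactly the one-loop graphs. Performing the contour integral in $k_0(j)$ on the factor $\lambda_{\gamma_j^{(i)}}$ collects one residue for each term of its partial-fraction decomposition \eqref{Lpf}, and those terms are indexed by the single on-shell edge $\acute{T}$, equivalently by the spanning trees $T\in\mathcal{T}(\gamma_j^{(i)})$ of the one-loop graph. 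A one-loop graph is a single internal cycle, so $\mathit{spt}(\gamma_j^{(i)})=e_{\gamma_j^{(i)}}$ (the base case of Proposition \ref{csptr}(2)), and the flag $i$ therefore contributes $\prod_{j=1}^{|G|}\mathit{spt}(\gamma_j^{(i)})$ residues.

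Summing over all flags, the total number of contributing residues is
\[
N=\sum_{i=1}^{\xi_G}\ \prod_{j=1}^{|G|}\mathit{spt}\bigl(\gamma_j^{(i)}\bigr).
\]
The next step is to recognise $N$ as $m^{|G|-1}\,\mathit{spt}^{|G|}\,\tilde{\Delta}_{core}^{|G|-1}(G)$: since $Fl_G=\tilde{\Delta}_{core}^{|G|-1}(G)=\sum_i\gamma_1^{(i)}\otimes\cdots\otimes\gamma_{|G|}^{(i)}$, applying $\mathit{spt}$ in each tensor slot and then multiplying the slots together with $m^{|G|-1}$ reproduces $N$ term by term. Proposition \ref{csptr}(2) then gives directly $N=\mathbf{spt}(G)=|G|!\times\mathit{spt}(G)$, which is the assertion.

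It is worth recording the bijective picture underlying why the two factors $|G|!$ and $\mathit{spt}(G)$ appear, as this is really the content of the statement. A contributing residue is the datum of a flag together with a choice of one on-shell edge inside each one-loop factor. By the argument already used in the proof of Proposition \ref{csptr}(2), choosing a flag amounts to choosing an ordering of the $|G|$ fundamental cycles relative to some spanning tree of $G$, while the per-factor on-shell edges reassemble into the complement $E_G\setminus E_T$, so that the chosen one-loop spanning trees glue to a single spanning tree $T$ of $G$. This yields a bijection between contributing residues and pairs consisting of a spanning tree of $G$ and an ordering of its $|G|$ fundamental cycles, of which there are exactly $\mathit{spt}(G)$ times $|G|!$.

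The one step that deserves genuine care is this reassembly: verifying that the fibrewise spanning-tree choices within a flag glue to an honest spanning tree of $G$, that no spanning tree is produced twice, and that distinct (flag, edge-choice) data give distinct residues. This is asserted in the text right after \eqref{ff}, but it should be pinned down through the fundamental-cycle bookkeeping of Proposition \ref{csptr}; once that bijection is in hand the count follows with no further work. Everything else---interchanging the order of integration, closing contours in the upper half-plane, and the presence of renormalizing subtractions---affects the value but not the number of residues, and so does not enter the count.
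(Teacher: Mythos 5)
Your count is correct, and it arrives at the paper's answer by the dual decomposition. The paper's own proof enumerates from the spanning-tree side: fix $T\in\mathcal{T}(G)$, observe that the locus $\cap_{e\notin T}Q_e=0$ yields $|G|!$ residues via the $|G|!$ orderings of the fundamental cycles (equivalently the sectors $s_{\sigma(|G|)}<\cdots<s_{\sigma(1)}$), and conclude $\mathit{spt}(G)\times|G|!$ directly. You instead enumerate from the flag side of Eq.(\ref{ff}): fix one of the $\xi_G$ flags, count $\prod_j \mathit{spt}(\gamma_j^{(i)})$ partial-fraction terms per flag via Eq.(\ref{Lpf}), and then invoke Prop.(\ref{csptr})(2) to identify $\sum_i\prod_j\mathit{spt}(\gamma_j^{(i)})=m^{|G|-1}\mathit{spt}^{|G|}\tilde{\Delta}_{core}^{|G|-1}(G)=\mathbf{spt}(G)$. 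Both rest on the same underlying bijection between pairs (spanning tree, cycle ordering) and pairs (flag, per-loop on-shell edge choice) -- the bijection is exactly what the proof of Prop.(\ref{csptr})(2) establishes, and what the paper asserts after Eq.(\ref{ff}) when it says the trees $t_j^i$ combine to a spanning tree of $G$. What your route buys is that the count is read off directly from the analytic representation in which the residues actually appear, so the identification of ``contributing residues'' with the combinatorial data is more transparent; what the paper's route buys is brevity, at the cost of leaving that identification implicit. You are also right to flag the gluing/no-double-counting step as the one point needing genuine care -- the paper likewise only asserts it, deferring the bookkeeping to Prop.(\ref{csptr}).
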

\begin{proof}
Consider a given spanning tree $T\in\mathcal{T}(G)$. The locus $\cap_{e\not\in T} Q_e=0$, 
defines $|G|!$ residues through the $|G|!$ possible orders of evaluation of $\prod_{e\in T}1/Q_e$ corresponding to the $|G|!$ sectors in the above hypercube.

Consider 
\[
\mathrm{Res}_G(T):=\prod_{e\in E_T}\frac{1}{Q_e}.
\]
For any chosen order and fixed chosen $T$, the contour integrals above deliver
\[
\mathrm{Res}_G(T)=\left(\prod_{e\in E_T}\frac{1}{Q_e}\right)_{|l_{0,i}=+\sqrt{s_i+m_i^2}}.
\] 

Next, let us consider the set of residues in the energy integrals which can contribute. Come back to the cycle space $\mathcal{L}_G$ of $G$. Any choice of a spanning tree determines a basis for this space.

Choose an ordering of the cycles $l_i\in\mathcal{L}_G$. This defines a sequence corresponding to some flag
\[
l_1,l_2/l_1,\ldots,l_{|G|}/l_{|G|-1}/\cdots/l_1.
\]

Now any choice of an ordering of the cycles, or equivalently of the edges $e\not\in T$, defines the Feynman integral as an iterated integral, and therefore 
a sequence $s_1>s_2>\cdots>s_{|G|}>0$, where we assign to cycle $l_i$ the variable $s_i=\vec{k(i)}^2$. We get $\mathbf{spt}(G)=spt(G)\times |G|!$
such iterated integrals.
\end{proof}
\subsection{The integral}
Summarising, we have
\begin{thm}\label{phiGT}
The  integral $\Phi_G$ is given as 
\bea
\Phi_G & = & \int_{-\infty}^{\infty}\prod_{i=1}^{|G|}dk_{i,0}
\prod_{j=1}^{|G|}\int d^{D-1}\vec{k}(j)
\frac{1}{\prod_{e\in E_G}Q_e}\nonumber\\
 & = &
\sum_{i=1}^{\xi_G} \left(\prod_{j=1}^{|G|}\int d^{D-1}\vec{k}(j)\right) \times \prod_{j=1}^{|G|}
\sum_{T\in\mathcal{T}(\gamma_j^{(i)})}
\mathbf{\bar{pf}}(T)_{k_0(j)=+\sqrt{\vec{k}(j)^2-m_{\acute{T}}^2+i\eta}}
\frac{1}{+2\sqrt{\vec{k}(j)^2-m_{\acute{T}}^2+i\eta}}.\nonumber
\eea
This can be written as a sum over all spanning trees of $G$ together with a sum of all orderings of the space like integrations in accordance with the flag structure and we find
\[
\Phi_G=\sum_{T\in\mathcal{T}(G)}\Phi_{G_T},
\]
with
\beas
\Phi_{G_T} & = & \sum_{\sigma\in S_{|G|}}\int_{0<s_{\sigma(|G|)}<\cdots<s_{\sigma(1)}<\infty}\left(\prod_{e\in E_T}
\frac{1}{Q_e}\right)_{|k(j)_0^2=s_j+m_j^2,\,j\not\in E_T}\times\\ & &  
\times\frac{s(j)^{\sqrt{D-3}}d^{D-1}\hat{k}(j)}{+2\sqrt{s(j)-m_{\acute{T}}^2+i\eta}}
\prod_{j\not\in E_T}ds(j),
\eeas
where $d^{D-1}\hat{k}(j)$ is the spacelike angular integral over the unit sphere in $(D-1)$ dimensions.
\end{thm}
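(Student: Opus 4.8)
The plan is to begin from the momentum-space representation in the first line of Theorem \ref{phiGT} and to perform the $|G|$ energy integrations $\int dk_{i,0}$ by iterated contour integration, thereby reducing the multi-loop computation to repeated application of the one-loop results already established. The bookkeeping device is the flag decomposition of the integrand: by the identity $\Lambda(Fl_G)=\xi_G\,I^\Pi_G$ (cf.\ Eq.(\ref{Lambda})), each expanded flag presents $I^\Pi_G=\prod_{e}1/Q_e$ as an ordered product $\prod_{j}\lambda_{\gamma_j^{(i)}}$ of one-loop factors, where the successive quotients $\gamma_j^{(i)}$ are built from a chosen ordering of a cycle basis as in Eq.(\ref{flagg}). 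Because for a bridgeless graph the edge sets $E_{\gamma_j^{(i)}}$ partition $E_G$, each flag singles out, loop by loop, which quadrics carry the dependence on a given energy variable $k_0(j)$, so that the contour — closed in the upper half-plane, each $k_0$-integral being absolutely convergent — can be taken one loop at a time.

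First I would analyse a single energy integral. Applying the divided-difference/partial-fraction decomposition Eq.(\ref{Lpf}) to the factor $\lambda_{\gamma_j^{(i)}}$ expresses it as $\sum_{T\in\mathcal{T}(\gamma_j^{(i)})}\mathbf{pf}(T)/Q_{\acute T}$, after which the $k_0(j)$-contour encircles only the upper half-plane pole of the genuine quadric $Q_{\acute T}$. The essential point is that the auxiliary denominators $\mathbf{pf}(T)$ contribute no poles: by the proposition that $\lambda_\gamma^{-1}$ does not vanish at any zero of $\mathbf{pf}(T)^{-1}$, the apparent singularities at $Q_f=Q_{\acute T}$ cancel, so no residues beyond those of the quadrics survive. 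Performing the shift $k_0\to k_0-r_{\acute T,0}$, $\vec k\to\vec k-\vec r_{\acute T}$ — under which the integral is invariant by the one-loop Proposition, via elementary properties of dimensional regularization — normalises each factor to $Q_{\acute T}=k^2-m_{\acute T}^2+i\eta$ and produces the on-shell evaluation at $k_0(j)=+\sqrt{\vec k(j)^2-m_{\acute T}^2+i\eta}$ together with the Jacobian $1/(2\sqrt{\cdots})$. Iterating this over $j=1,\dots,|G|$ in the flag order yields precisely the intermediate expression Eq.(\ref{ff}).

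Next I would reorganise the residues into a sum over global spanning trees and a sum over orderings. For each flag, a choice of spanning tree $t_j\in\mathcal{T}(\gamma_j^{(i)})$ of every one-loop quotient assembles into a spanning tree $T\in\mathcal{T}(G)$ of the whole graph (as observed after Eq.(\ref{ff})), whose complementary edges $E_G\setminus E_T$ are exactly the $|G|$ quadrics put on-shell. The complete enumeration is governed by Prop.(\ref{csptr}): the number of contributing residues equals $\mathbf{spt}(G)=\mathit{spt}(G)\,|G|!$, each spanning tree occurring with the $|G|!$ multiplicity that matches the $|G|!$ orderings $\sigma\in S_{|G|}$. Finally the remaining spacelike integrations are sector-decomposed along the ordering $0<s_{\sigma(|G|)}<\cdots<s_{\sigma(1)}<\infty$ of the radial variables $s_j=\vec k(j)^2$; since these $|G|!$ chambers partition $\mathbb{R}_{>0}^{|G|}$, the sum over $\sigma$ is a genuine subdivision of the integration domain and not an overcounting. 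Collecting the residue labelled by $T$ with its $|G|!$ ordered sectors gives $\Phi_{G_T}$, and summing over $T$ yields $\Phi_G=\sum_{T\in\mathcal{T}(G)}\Phi_{G_T}$.

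The main obstacle lies in the second and third steps: establishing rigorously that iterated contour integration in the flag order, combined with the partial-fraction structure, produces \emph{exactly} the spanning-tree residues with consistent on-shell evaluations, and that the flag bookkeeping matches the $(T,\sigma)$ labelling with the correct multiplicity rather than over- or under-counting. This demands that the spurious poles of the $\mathbf{pf}$ factors cancel at every stage of the iteration (not merely loop by loop), that each energy variable retain the standard causal pole placement after the shifts, and that the whole manipulation be legitimate at the level of distributions — for which one invokes independence of the residue integrals from the auxiliary choice of distinct $i\eta_e$ and the existence of the Laurent expansion in $\epsilon=D/2-2$ supplied by the one-loop Propositions. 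Renormalization is incorporated throughout by the replacement $I^\Pi_G\to I^R_G$ of Eq.(\ref{integrandqren}), which acts on the integrand and hence commutes with the energy integrations.
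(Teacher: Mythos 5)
Your proposal is correct and follows essentially the same route as the paper: it assembles the identical chain of intermediate results (the divided-difference partial fraction decomposition of Eq.(\ref{Lpf}), the flag identity $\Lambda(Fl_G)=\xi_G I^\Pi_G$, the shift-invariance and no-spurious-pole propositions, iterated contour integration yielding Eq.(\ref{ff}), and the residue count $\mathbf{spt}(G)=\mathit{spt}(G)\,|G|!$ matched to the sector decomposition of the spacelike radial variables) in the same order in which the paper builds up to the theorem. The delicate point you single out — that the flag bookkeeping matches the $(T,\sigma)$ labelling with the correct multiplicity — is indeed where the paper itself is thinnest, resolving it only by the counting lemma and the worked banana and Dunce's-cap examples, so your identification of it as the main obstacle is apt.
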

This is the desired result. If renormalization is needed one 
has to sum over all such terms generated by the corresponding 
Hopf algebra $H_{ren}$ in accordance with the forest formula.

There is a corresponding graphical identity for the energy integrals.
\be\label{graphT}
G=\sum_{T\in\mathcal{T}(G)}G_T,
\ee
where $G_T=(G,T)$ is represented as the cut graph which splits all edges $e\not\in E_T$.
\[
G=(H_G,\mathcal{V}_G,\mathcal{E}_G)\to G_T=(H_G,\mathcal{V}_G,\mathcal{E}_H). 
\]
Here $\mathcal{E}_H$ has as parts of cardinality two only the edges of $T$. 
See Fig.(\ref{treesumcut}).

\begin{rem}
 The preceding theorem has a nice interpretation in terms of parametric Feynman rules viewed as integrals over certain volume forms on moduli spaces of graphs. 
 
 Roughly speaking, these spaces are constructed by taking the integration domains $\mb P_G$ of all Feynman graphs with a fixed number of loops and legs and gluing them together along faces that are indexed by isomorphic graphs. In this picture Feynman rules associate to each cell a family of ``volumes", the values $\Phi_R(G)(p,m)$, parametrized by the kinematic data attached to $G$ (we further discuss this construction in Sec.(\ref{Marko}), for a more detailed exposition see \cite{Marko}).
Each of these moduli spaces deformation retracts onto a certain subspace, its \textit{spine}, which has the structure of a cubical complex, that is, it is a union of cells, each homeomorphic to a (non-degenerate) cube, and the intersection of two cubes is again a cube. In this retraction, each cell of the moduli space, indexed by a graph $G$, gets mapped to a union of cubes, indexed by pairs $(G,F)$ where $F$ is a spanning forest of $G$ \cite{rational,V,MaxMarko}.
Put differently, each cell -- in fact the whole space -- is the total space of a topological fiber bundle with contractible fibers over its spine. When restricted to a singe cell, outside of a subset of measure zero, the bundle map is smooth. See Fig.(\ref{fig:fiberbundle}) for an example. 

Now, integrating along its fibers, we can (at least formally) reduce a parametric Feynman integral $\Phi(G)$ to a sum of parametric integrals over cubes $(G,T)$ where $T$ runs over all spanning trees of $G$.
This is precisely the content of Thm.(\ref{phiGT}), translated from momentum to parametric space.
\end{rem}

\begin{figure}
\begin{tikzpicture}[scale=1]
  \coordinate  (v1) at (0.3,0); 
   \coordinate  (v2) at (2.5,0);
   \coordinate (p1) at (-.2,0); 
   \coordinate (p2) at (3,0); 
   \draw (p1) to (v1);
   \draw (p2) to (v2);
   \draw (v1) to[out=80,in=100] (v2);
   \draw (v1) -- (v2);
   \draw (v1) to[out=-80,in=-100] (v2);
  \filldraw[fill=black] (v1) circle (0.07);
  \filldraw[fill=black] (v2) circle (0.07);
  \end{tikzpicture}
 \begin{tikzpicture}[scale=1]
\coordinate (l) at (-3,0);
  \coordinate  (r) at (3,0); 
   \coordinate  (o) at (0,3);
  \coordinate (lo) at (-1.5,1.5); 
   \coordinate (lr) at (0,0); 
  \coordinate (ro) at (1.5,1.5); 
  \coordinate (c) at (0,1.1); 
  \coordinate (t1) at (-1.2,1.42);
    \coordinate (t2) at (-.8,1.31);
        \coordinate (t3) at (-.3,1.175);
        \coordinate (t4) at (-1,1.36);
        \coordinate (t5) at (-.56,1.24);
   \draw (l) -- (r);
   \draw (l) -- (o) ;
   \draw (r) -- (o) ;
   \draw[blue] (t1) -- (o);
   \draw[blue] (t1) -- (l);
    \draw[blue] (t2) -- (o);
   \draw[blue] (t2) -- (l);
   \draw[blue] (t3) -- (o);
   \draw[blue] (t3) -- (l);
    \draw[blue] (t4) -- (o);
   \draw[blue] (t4) -- (l);
   \draw[blue] (t5) -- (o);
   \draw[blue] (t5) -- (l);
     \draw[red] (lo) -- (c);
   \draw[red] (lr) -- (c);
   \draw[red] (ro) -- (c);
  \filldraw[fill=red] (lo) circle (0.07);
  \filldraw[fill=red] (lr) circle (0.07);
 \filldraw[fill=red] (ro) circle (0.07);
  \filldraw[fill=red] (c) circle (0.07);
 \filldraw[fill=white] (l) circle (0.07);
 \filldraw[fill=white] (r) circle (0.07);
 \filldraw[fill=white] (o) circle (0.07);
  \end{tikzpicture} 
   \caption{A graph $G$ and its corresponding cell $\mb P_G \subset \m {MG}_{2,2}$. The red part is the cubical complex onto which $\mb P_G$ retracts, the blue lines indicate the fibers over one cube $(G,T)$ where $T$ consists of one of the edges of $G$. Note that the three corners of the simplex are not in $\m {MG}_{2,2}$, because collapsing two edges in $G$ alters its loop number.}
	\label{fig:fiberbundle}
\end{figure}
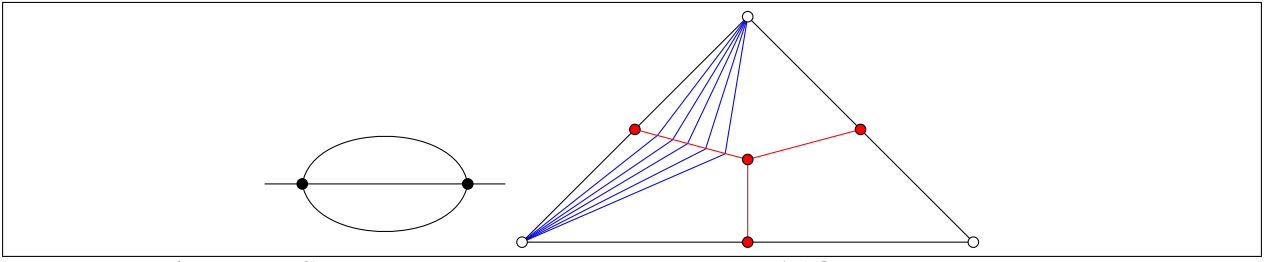

\begin{figure}
\includegraphics[width=12cm]{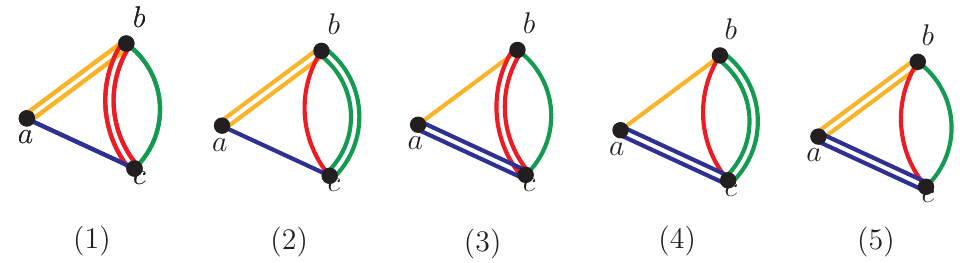}.
\caption{The Dunce's cap $G=dc$ gves rise to five graphs $G_T$, with $T$ running through five spanning trees.
So the five spanning trees give rise to five pairs $G_T$ of graphs and a tree. They are given as a graph with a spanning tree (doubled lines)
where each   edge not in the spanning tree is put on-shell. Hence edges in the spanning tree are off-shell. Edges not in the spanning tree are on-shell.
The external momenta extering at the three vertices are routed through the respective spanning trees. Different colours $(r,b,y,g)$ indicate different masses in each edge.}
\label{treesumcut}
\end{figure}
It is an instructive exercise to the reader to work $\Lambda(Fl_{dc})$ (see Eq.(\ref{Lambda})) and the corresponding identifications out.
We have
\bea
\Lambda(Fl_{dc}) & = & \frac{1}{Q_b}\left(\frac{1}{(Q_r-Q_b)(Q_y-Q_b)}\right)\frac{1}{Q_g}
  +  \frac{1}{Q_y}\left(\frac{1}{(Q_r-Q_y)(Q_b-Q_y)}\right)\frac{1}{Q_g}\nonumber\\
 & + & \frac{1}{Q_r}\left(\frac{1}{(Q_y-Q_r)(Q_b-Q_r)}\right)\frac{1}{Q_g}
  +  \frac{1}{Q_b}\left(\frac{1}{(Q_g-Q_b)(Q_y-Q_b)}\right)\frac{1}{Q_r}\label{dunceexample}\\
 & + & \frac{1}{Q_y}\left(\frac{1}{(Q_g-Q_y)(Q_b-Q_y)}\right)\frac{1}{Q_r}
  +  \frac{1}{Q_g}\left(\frac{1}{(Q_y-Q_g)(Q_b-Q_g)}\right)\frac{1}{Q_r}\nonumber\\
 & + & \frac{1}{Q_b}\left(\frac{1}{(Q_y-Q_b)}\right)\frac{1}{Q_r}\left(\frac{1}{(Q_g-Q_r}\right)
  +  \frac{1}{Q_b}\left(\frac{1}{(Q_y-Q_b)}\right)\frac{1}{Q_g}\left(\frac{1}{(Q_r-Q_g}\right)\nonumber\\
 & + & \frac{1}{Q_y}\left(\frac{1}{(Q_b-Q_y)}\right)\frac{1}{Q_r}\left(\frac{1}{(Q_g-Q_r}\right)
  +  \frac{1}{Q_y}\left(\frac{1}{(Q_b-Q_y)}\right)\frac{1}{Q_g}\left(\frac{1}{(Q_r-Q_g}\right)\nonumber.
\eea
The spanning trees can be read off from Eq.(\ref{dunceexample}) in an obvious way. Each appears twice, for example the spanning tree with edges $e_r,e_y$ contributes to the first and eights term. 

We can also indicate sub- and co-graphs by the edges involved.
Then the first three terms correspond to the contribution
\[
e_be_ye_r\otimes e_g, 
\]
the next three terms correspond to
\[
e_be_ye_g\otimes e_r, 
\]
which gives the $6=3\times 1+3\times 1$ terms of the partial fraction decompositions of the triangle subgraphs and tadpole cographs.
The last four terms give the $4=2\times 2$ terms of the partial fraction decompositions
of the bubble subgraph (on edges $e_r,e_g$) and the bubble co-graph on edges $e_y,e_b$.

The co-graph sub-graph order translates into an order of the spacelike momenta of the loops and hence we find the $10=5\times 2$ terms above as it must be.
This uses that $Res_G(T)$ is uniquely defined for any order of the spacelike momenta.
\begin{rem}
This is all in accordance with a corresponding sector decomposition determined by the spine. For the example of the graph $dc$, see the discussion in \cite{coaction}.
\end{rem}
\section{Cutkosky graphs}
Above, we learned that we should put all edges not in the spanning tree on the mass-shell. Now, for a proper Cutkosky graph $G$, so in the presence of spanning forests instead of spanning trees, we will see that the same message arises: all edges not in the spanning forest will be evaluated on the mass-shell, either due to a contour integral, or due to the fact that they connect distinct components of the forest. 

We are left with only two types of edges:\\
$\bullet$  in the forest
($e\in E_F$),\\
$\bullet$  or not in the forest ($e\in E_{on}=E_{G}-E_F$).
\subsection{The general formula for $H_C^{(0)}$.}
Consider a Cutkosky graph $G\in H_C^{(0)}$ so that no loop is left intact. 

It is generated by a necessarily unique forest $F$ and associated set of edges $E_{on}$ with $e\in E_{on}\Leftrightarrow e\not\in E_F$ so that $E_{on}\dot{\cup} E_F=E_G$.

Then,
\[
\Phi(G)=\int\prod_{j=1}^{|G|}d^Dk_j \left(\frac{1}{\prod_{e\in E_F}Q_e}\right)_{\cap_{f\in E_{on}}(Q_f=0)}.
\]
It remains to describe the threshold divisor prescibed by $\cap_{f\in E_{on}}(Q_f=0)$. 

We first note that $|E_{on}|\gneq |G|$. We can fix more
than the $|G|$ energy variables $k_0(j),\,1\leq j\leq |G|$.
Let us start consider the reduced graph $G_{r}:=G/E_F$
where each edge gives is on-shell and fixes a  variable as this graph is a Cutkosky graph which has all its edges cut.

Any chosen partition of $L_G$ with which  $F$  is compatible defines a partition of $V_{G/E_F}$ and therefore a set of variables $k_{i,0}$ and  $s_i$ which are determined by the set $E_{on}$. As $|E_{on}|\gneq |G|$, all $k_{i,0}$ are fixed and so is at least  $t$, where we set $s_i=t\tilde{s}_i$ for all $i$ and integrate $t$ over the positive real half-axis, whilst the $\tilde{s}_i$ are integrated over a corresponding  simplex $\Delta_\sigma$.

As a result, the $|E_{on}|$ constraints make sure that the remaining integrals are over an integration domain $C_{G/E_F}$ which is compact and give its volume. The computation in Sec.(\ref{trianglecomp}) is a typical example.

Now consider $G$ itself. The side-constraints remain unchanged. The integration domain is still $C_{G/E_F}$ which now splits:
\[
C_{G/E_F}=C_G\times f,
\]
where $f$ is a $e_F$-dimensional fiber such that the integration resulting from the momentum flow through $F$ corresponds to an integral  over this fiber. 
$C_G$ fulfills 
\be\label{trianglecompdiv}
\mathrm{dim}(C_G)=\mathrm{dim}(C_{G/E_f})-e_F.
\ee
Note that the uniqueness of $F$ for a Cutkosky graph in $H_C^{(0)}$ means that we do not have to consider a sum over spanning forests. This is different below when we consider $H_C^{(j)},\,j\gneq 0$.

\begin{rem}
Any 2-partition $V_G=V_1\dot{\cup} V_2$ which is part of a 
$v_G$-refinement of $L_G$ determines a Lorentz scalar \[
s=(\sum_{v\in V_1} q(v))^2
\]
defined from the 2-partition 
$V_G=V_1\dot{\cup} V_2$, the first non-trivial entry in any $v_G$-refinement.
It follows that $\Phi(G)$ has thresholds as a function of $s$ determined by the threshold divisors $\cap_{f\in E_{on}}(Q_f=0)$ with the 2-partition itself providing the normal threshold $s_0$ in that variable $s$. 
\end{rem}
\begin{thm}\label{anomthresholds}
For $G\in H_C^{(0)}$ with $h_0(F)\geq 2$, $\Phi(G)$ exists and determines a threshold $s_F(G)$ in the variable $s$ defined by the 2-partition in a $v_G$-refinement of $L_G$.   
\end{thm}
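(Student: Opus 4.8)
The plan is to present $\Phi(G)$ as an integral over the compact on-shell phase space cut out by $\cap_{f\in E_{on}}(Q_f=0)$, and then to locate its branch point in $s$ by a Landau-type pinch analysis anchored on the 2-partition. First I would fix the 2-partition $V_G=V_1\dot{\cup}V_2$ given by the first non-trivial entry of the chosen $v_G$-refinement of $L_G$; this is available precisely because $h_0(F)\geq 2$, so the components of the forest $F$ may be grouped into two nonempty blocks. Writing $Q=\sum_{v\in V_1}q(v)$, momentum conservation forces $Q$ to equal the total momentum carried by the cut edges running between $V_1$ and $V_2$, so that $s=Q^2$ is exactly the invariant mass squared flowing across that cut.

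For the existence statement I would invoke the compactness argument established in the preceding subsection. The positive-energy conditions $\delta^+(Q_f)$, $f\in E_{on}$, pin each cut momentum to its forward mass-shell, and together with the requirement that the crossing momenta add up to the fixed timelike vector $Q$ of invariant mass $\sqrt{s}$ they confine the configurations to the compact relativistic phase space $C_{G/E_F}$. For $G$ itself this domain factors as $C_{G/E_F}=C_G\times f$, the forest momentum flow sweeping out the fibre $f$; on $C_G$ the surviving forest propagators $\prod_{e\in E_F}1/Q_e$ remain bounded away from their own poles for generic external data. Hence the integral converges and $\Phi(G)$ exists, defining an analytic function of $s$ on any interval free of thresholds.

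To exhibit the threshold I would track the dependence of $C_G$ on $s$. The phase space is empty below $s_0=\big(\sum_e m_e\big)^2$, the sum taken over the cut edges separating $V_1$ from $V_2$, and at $s=s_0$ it collapses to the single point where all crossing lines are at relative rest -- this is the normal threshold of the preceding remark. As $s$ approaches the critical value $s_F(G)$ the boundary of $C_G$ is driven through a configuration in which the Landau conditions \eqref{eq:L1} and \eqref{eq:L2} become simultaneously solvable with nonnegative multipliers $a_e$, so the integration contour is pinched. Identifying this critical value with $s_F(G)$ yields the asserted threshold: in the purely normal case it coincides with $s_0$, while a pinch that also involves the off-shell forest propagators shifts it to the anomalous location encoded by $F$. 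The branch point of $\Phi(G)$ then follows from the local behaviour of the vanishing phase-space volume near the pinch.

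The main obstacle is this last step, namely showing that $s_F(G)$ is a genuine singularity of $\Phi(G)$ and not merely a value at which the integration domain changes shape. Since the Landau equations furnish only necessary conditions, I would have to verify that the pinch is non-degenerate -- that the contour is truly trapped between coalescing singular surfaces with the correct signs of the $a_e$ -- and then extract the leading singular behaviour near $s_F(G)$ to confirm a nontrivial monodromy. For the normal threshold this reduces to the classical fact that the $k$-particle phase-space volume vanishes at $s=s_0$; the real work lies in controlling how the off-shell forest propagators deform this picture, which is exactly where the dependence of $s_F(G)$ on the full forest $F$ enters.
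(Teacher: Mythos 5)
Your route is genuinely different from the paper's, and it is incomplete at the decisive step. For existence you reproduce, essentially verbatim, the compactness argument of the preceding subsection: the on-shell constraints $\delta^+(Q_f)$, $f\in E_{on}$, confine the integration to the compact domain $C_{G/E_F}=C_G\times f$; this is fine and is exactly what the paper relies on. For the threshold, however, you stay in momentum space and propose a pinch analysis of the phase-space boundary, whereas the paper passes to the parametric representation: the second Symanzik polynomial (with masses) is quadratic in the edge variables $a_i$, so the chosen refinement yields a set of discriminants $D(i)$, and $s_F(G)$ is determined by minimizing $s$ subject to $D(i)=0$. That parametric argument, terse as it is, actually produces the threshold location (it is what feeds the explicit formula for $s_{anom}$ of the triangle later in the paper). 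Your momentum-space version leaves precisely this step as a program: you describe how the boundary of $C_G$ would be driven to a pinch as $s$ varies, but you never extract the critical value of $s$, and you say so yourself in your final paragraph.

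Two further remarks. First, you are holding yourself to a stricter standard than the theorem requires: by the paper's explicit disclaimer in Sec.(\ref{Landau}), ``threshold'' and ``singularity'' refer to the solution locus of the Landau conditions, which are necessary but not sufficient; so verifying a non-degenerate pinch with admissible $a_e$ and nontrivial monodromy is not needed here (and is, as you note, not available in general). Second, and conversely, dropping that extra demand does not close your gap: to prove the statement along your lines you would still have to exhibit the minimal $s$ at which the simultaneous on-shell conditions together with the off-shell forest propagators first admit a Landau solution, i.e.\ carry out in momentum space the minimization that the paper performs on the discriminants $D(i)$. Until that computation is done, your proposal establishes existence of $\Phi(G)$ but not the determination of $s_F(G)$.
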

\begin{proof}
We regard $\Phi_R(G)$ as a function of $s$ only, with all other kinematic variables fixed. The second Symanzik polynomial $\Phi$ is quadratic in edge variables $a_i$ and hence determines a set of discriminants $D(i)$ assigned to such a refinement. Minimizing $s$ under the condition $D(i)=0$ determines the thresholds $s_F(G)$.
\end{proof}
\begin{rem}
The above analysis relies on a decomposition of loop momenta into a parallel space and its orthogonal complement, the former provided by the span of external momenta. 
Hence rephrasing this analysis in terms of Baikov polynomials using the results of \cite{Mastrolia} is an obvious task for future work.   
\end{rem}
\subsection{Using the co-action}\label{usingcoaction}
Let $G$ be a Cutkosky graph with  partition $P$ of $L_G$.

Consider a forest $F$ compatible with $P$ so that we get a pair $G_F$ of a forest $F$ and a graph $G$.
For any such pair there is an associated triple $(G_0,g,F_0)$ where $g\in H_{core}$ and $G_0\in H_C^{(0)}$ so that $|G_0|=0$, which determines $F_0$ uniquely, in accordance with Cor.(\ref{gG0}). The set $\mathcal{F}_P$ of all compatible forests $F$ can be described as
\be\label{FubF}
\mathcal{F}_P=F_0\dot{\cup}\mathcal{T}(g).
\ee
The set $E_{on}^G=E_G-E_F$ so that $E_{on}^{G/g}=E_{G/g}-E_{F_0}$.

Then,
\[
\Phi(G)=\sum_{F\in\mathcal{F}_P}\int\prod_{j=1}^{|G|}d^Dk_j \left(\frac{1}{\prod_{e\in E_F} Q_e}\right)^R_{\cap_{f\in E_{on}^G}(Q_f=0)}.
\]
The superscript ${}^R$ indicated a sum of such terms for renormalization as needed corresponding to the 
transition $I^\Pi_G\to I_G^R$.

Note that this is a variant of Fubini's theorem by Eq.(\ref{FubF}):
\be\label{Fubini}
\Phi_R(G)=\int\prod_{j=1}^{|G/g|}d^Dk_j \left(\frac{1}{\prod_{e\in E_{F_0}}Q_e}\right)_{\cap_{f\in E_{on}^{G/g}}(Q_f=0)}
\underbrace{\sum_{t\in\mathcal{T}(g)}\overbrace{\int\prod_{j=1}^{|g|}d^Dk_j\left(\prod_{e\in E_t}\frac{1}{Q_e}\right)^R_{\cap_{f\in (E_g-E_t)}(Q_f=0)}}^{\Phi_{g_T}}}_{\Phi_R(g)},
\ee
where the superscript $R$ indicates to sum over all terms needed for renormalization as usual, using that the renormalization Hopf algebra $H_{ren}$ is a quotient of $H_{core}$ and co-acts accordingly.

Now consider a $v_G$-refinement $P$ of $L_g$. We call its partitions $P(i)$.
Note that for every $T\in \mathcal{T}(G)$, such a refinement induces an ordering $o_T$ of its edges.

Accompanying the partitions $P(i)$ are Cutkosky graphs $G(i)$, forests $F_0(i)$, reduced graphs $G_r(i)=G(i)/F_i$, core subgraphs $g(i)$, and sets $\mathcal{F}_{P(i)}=F_0(i)\dot{\cup}\mathcal{T}(g(i))$.

With this set-up we thus   get a sequence $\Phi(G(i))$ of evaluations of Cutkosky graphs. They provide the entries in the Hodge matrices studied in \cite{coaction}.

\section{The cubical chain complex, $\Delta_{GF}$ and the pre-Lie product}\label{prelieandccc}
In this section we consider the interplay between 
the Hopf algebra of pairs $(G,F)$ and the boundary $d=d_0+d_1$ of the associated cubical chain complex, and the relation between $d$
and the pre-Lie product
for pairs of Cutkosky graphs $G$ and forests $F$.

This is groundwork to prepare the analysis of cubical complexes in QFT in future work. A first example is studied in the next Sec.(\ref{doneloop}) when analysing the one-loop triangle graph. 

\subsection{The cubical chain complex}\label{ccc}
Consider $G_T\equiv (G,T)$. Any spanning tree $T$ defines  a cube complex for $e_T$-cubes
assigned to $G$. There are $e_T!$ orderings $\mathfrak{o}=\mathfrak{o}(T)$
which we can assign to the edges of $T$.

We define a boundary for any elements $G_F\equiv (G,F)$ of $H_{GF}$. For this consider such an ordering 
\[
\mathfrak{o}:E_F\to
[1,\ldots,e_F]
\]
of the $e_F$ edges of $F$.
There might be other labels assigned to the edges of $G$
and we assume that removing an edge or shrinking an edge will
not alter the labels of the remaining edges. In fact the whole Hopf algebra structure of $H_{core}$ and $H_{GF}$
is preserved for arbitrarily labeled graphs \cite{Turaev}.  

The (cubical) boundary map $d$ is defined by $d:=d_0+d_1$
where
\begin{equation}\label{eq:cubediff}
d_0(G_F^{\mathfrak{o}(F)}):= \sum_{j=1}^{e_F} (-1)^j(G_{F \setdiff e_j}^{\mathfrak{o}(F\setdiff e_j)}), \quad d_1 (G_F^{\mathfrak{o}(F)}) := \sum_{j=1}^{e_F} (-1)^{j-1}({G/e_j}^{\mathfrak{o}(F/e_j)}_{F/e_j}).
\end{equation} 
We understand that all edges $e_k,k\gneq j$ on the right are
relabeled by $e_k\to e_{k-1}$ which defines the corresponding $\mathfrak{o}(T/e_j)$ or $\mathfrak{o}(T \setdiff e_j)$.
Similar if $T$ is replaced by $F$.

From \cite{rational} we know that $d$ is a boundary:
\begin{thm}\cite{rational}
\[
d\circ d=0,\,d_0\circ d_0=0,\, d_1\circ d_1=0.
\]
\end{thm}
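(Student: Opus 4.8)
The plan is to recognize the three assertions as the defining identities of a cubical chain complex and to reduce them all to the single geometric fact that deleting one forest edge and deleting or contracting a \emph{different} forest edge are operations that commute. For an ordered pair $(G,F)$ with $\mathfrak{o}:E_F\to[1,\dots,e_F]$, abbreviate by $r_j$ the operation sending $G_F^{\mathfrak o}$ to $G_{F\setdiff e_j}^{\mathfrak o(F\setdiff e_j)}$ and by $c_j$ the operation sending it to $(G/e_j)_{F/e_j}^{\mathfrak o(F/e_j)}$, in both cases including the relabeling $e_k\mapsto e_{k-1}$ for $k>j$ prescribed after Eq.(\ref{eq:cubediff}). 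Then $d_0=\sum_j(-1)^j r_j$ and $d_1=\sum_j(-1)^{j-1}c_j$, and since $F$ is a spanning forest every $r_j$ and every $c_j$ lands on a valid pair (deleting an edge of $F$ yields a smaller spanning forest, and $(G/e_j,F/e_j)$ is again an admissible pair), so no term is ever forced to vanish.

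The key lemma I would establish is the family of simplicial-type relations, valid for all $1\le i<j\le e_F$,
\[
r_i r_j=r_{j-1}r_i,\qquad c_i c_j=c_{j-1}c_i,\qquad r_i c_j=c_{j-1}r_i,\qquad c_i r_j=r_{j-1}c_i,
\]
where throughout the rightmost operator acts first. Each of these is immediate from the definitions in Sec.(\ref{Notation}): removing or contracting $e_i$ leaves the half-edges, corollas and edge-part of $e_j$ untouched and vice versa, so the two operations commute as maps on the underlying pair; the index shift $j\mapsto j-1$ is exactly the bookkeeping produced by the relabeling rule, since performing the $i$-operation first renames the edge $e_j$ to $e_{j-1}$. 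The only thing to check is that the induced orderings agree on both sides, which is a direct comparison of the two relabelings.

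Granting the lemma, $d_0\circ d_0=0$ is the classical computation: expanding $d_0^2=\sum_{i,j}(-1)^{i+j}r_i r_j$ and using $r_i r_j=r_{j-1}r_i$ for $i<j$, the contribution of the unordered edge pair $\{e_\alpha,e_\beta\}$ splits into the two ways of ordering the deletions, which carry opposite signs $(-1)^{\alpha+\beta}$ and $(-1)^{\alpha+\beta-1}$ and cancel. The identity $d_1\circ d_1=0$ is word-for-word the same with $c$ in place of $r$. For the cross term I expand
\[
d_0 d_1+d_1 d_0=\sum_{i,j}(-1)^{i+j-1}r_i c_j+\sum_{i,j}(-1)^{i+j-1}c_j r_i,
\]
and track a fixed physical operation ``contract $e_\beta$, remove $e_\alpha$'' with $\alpha\ne\beta$: it is realized once as an $r\,c$ term and once as a $c\,r$ term, and the relations $r_i c_j=c_{j-1}r_i$, $c_i r_j=r_{j-1}c_i$ identify these two realizations while the $(-1)^{j}$ versus $(-1)^{j-1}$ offset in $d_0$ and $d_1$ forces their signs to be opposite, so they cancel. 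Combining, $d\circ d=d_0^2+(d_0 d_1+d_1 d_0)+d_1^2=0$.

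The main obstacle is not geometric but lies entirely in the sign and relabeling accounting: one must verify that the deliberate index shift $e_k\mapsto e_{k-1}$ turns the naive commutativity of deletion and contraction on the unlabeled edge set into the shifted relations above, and that the two differentials carry the complementary sign conventions $(-1)^j$ and $(-1)^{j-1}$ needed to make the mixed terms anticommute rather than reinforce. I would tame this by fixing once and for all the convention that any operator acting on an already-reduced pair refers to the \emph{new} labels, and then verifying a single representative pair $(i,j)$ with $i<j$ for each of the four relations; the full cancellation is then bookkeeping organized exactly as in the simplicial $\partial^2=0$ argument.
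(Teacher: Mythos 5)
Your proof is correct. Note, however, that the paper does not actually prove this statement: it imports it wholesale from \cite{rational}, so there is no in-text argument to compare against. What you supply is the standard direct verification that the removal and contraction operators, with the index shift $e_k\mapsto e_{k-1}$ built into Eq.~(\ref{eq:cubediff}), satisfy the shifted simplicial identities $x_i y_j = y_{j-1} x_i$ for $i<j$ and $x,y\in\{r,c\}$, from which $d_0^2=d_1^2=\{d_0,d_1\}=0$ follow by the usual pairing of the two orderings of each two-element multiset of operations. This is essentially the proof one finds in the cited literature, so you have filled in a deferred argument rather than taken a different route; the value of writing it out is that it makes visible exactly which conventions (the relabeling rule and the fact that an edge of a spanning forest is never a self-loop, so both $F\setminus e_j$ and $(G/e_j,F/e_j)$ are again admissible pairs) the identity depends on.

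One small quibble on attribution of the sign mechanism in the mixed term: the cancellation in $d_0d_1+d_1d_0$ does not come from the relative offset between the conventions $(-1)^j$ in $d_0$ and $(-1)^{j-1}$ in $d_1$. For a fixed pair $\alpha<\beta$ the two realizations are $r_\alpha c_\beta$ with sign $(-1)^{\alpha}(-1)^{\beta-1}$ and $c_{\beta-1}r_\alpha$ with sign $(-1)^{\beta-2}(-1)^{\alpha}$; the relative minus sign is produced entirely by the index shift $\beta\mapsto\beta-1$, exactly as in the $d_0^2$ computation, and would survive any global rescaling of the sign convention of $d_1$. This does not affect the validity of your argument, but the phrasing ``the offset forces their signs to be opposite'' misidentifies where the anticommutativity comes from.
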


Starting from $G_T$ for any chosen $T\in\mathcal{T}_G$
each chosen order $\mathfrak{o}$ defines one of $e_T!$ simplices of the $e_T$-cube. Such simplices will define the triangular matrices studied in \cite{coaction}.

This cubical chain complex was used to compute the homology groups of certain moduli spaces of (Feynman) graphs, see Sec.(\ref{Marko}) and \cite{MaxMarko}.

Here we want to understand how the boundary $d$ interacts with the coproduct $\Delta_{GF}$ and with the pre-Lie
structure which defines $H_{GF}$ dually via the Milnor--Moore--Cartier--Quillen theorem \cite{RHI,MMCQ}. This aims to provide a tool which is hopefully useful in future investigations.
\subsection{The dual $\mathcal{U}(L_{GF})$ of $H_{GF}$}
We have $H_{GF}=\mathcal{U}^\ast(L_{GF})$ by Milnor--Moore--Cartier--Quillen.
The universal enveloping algebra $\mathcal{U}(L_{GF})$ is determined by the Lie algebra $L_{GF}$. The latter comes from a pre-Lie structure which we describe 
below in Sec.(\ref{preLie}).

Here let us recapitulate the set-up.

The Hopf algebra $H_{GF}$ is a commutative $\mathbb{Q}$-algebra and is graded by the loop number. Its linear space of generators $\langle h_{GF}\rangle_{\mathbb{Q}}$ is generated by pairs $(G,F)$ of a graph $G$ and a spanning forest thereof.  By abuse of notation we simply continue to  write $(G,F)$ for a generator 
$h_{(G,F)}\in H_{GF}$ indexed by such a pair.

The boundary $d$ acts as a map 
\[
d:\, H_{GF}\to H_{GF},
\]
by definition.

$H_{GF}=\mathcal{U}^\ast(L_{GF})$ is the dual of a universal enveloping Lie algebra 
$\mathcal{U}(L_{GF})$ and the generators $l_{(G,F)}\in L_{GF}$ of $L_{GF}$ are indexed
by pairs $(G,F)$ themselves. We use the Kronecker pairing
\[
\langle (G,F),l_{(G^\prime,F^\prime)}\rangle=\delta_{G,G^\prime}\delta_{F,F^\prime}.
\]

The Lie bracket in $L_{GF}$
\[
[l_{(G_1,F_1)},l_{(G_2,F_2)}]:=l_{(G_1,F_1)}\ast l_{(G_2,F_2)}-l_{(G_2,F_2)}\ast l_{(G_1,F_1)},
\]
is defined via the pre-Lie product $\ast:\, L_{GF}\otimes L_{GF}\to L_{GF}$. This pre-Lie product $\ast$ itself is determined from the requirement
\beas
\langle \Delta_{GF}(G,F),l_{(G_1,F_1)}\otimes
l_{(G_2,F_2)}-l_{(G_2,F_2)}\otimes l_{(G_1,F_1)} \rangle =
\langle (G,F),
[l_{(G_1,F_1)},l_{(G_2,F_2)}]\rangle.
\eeas 

The so-determined pre-Lie product $\ast$ induces a map (by abuse of notation also denoted by $\ast$)
\beas
\ast: H_{GF}\otimes H_{GF} & \to &  H_{GF},\\
 (G_1,F_1)\ast (G_2,F_2) & = & \sum_i (g^{(i)},f^{(i)})
=\sum_{(g,f)\in \langle H_{GF}\rangle_{\mathbb{Q}}}(g,f)\langle (g,f),l_{(G_1,F_1)\ast (G_2,F_2)}\rangle,
\eeas 
where we sum over all possibilities $(g^{(i)},f^{(i)})$ to insert $(G_2,F_2)$
into $(G_1,F_1)$, see Sec.(\ref{preLie}) below.
Furthermore $l_{(G_1,F_1)\ast (G_2,F_2)}\equiv l_{(G_1,F_1)}\ast l_{(G_2,F_2)} $, $\ast$ is a linear map $L_{GF}\times L_{GF}\to L_{GF}$.

Note that for generators in $L_{GF}$ we use linearity in the subscripts
\[
l_{\sum_i (g^{(i)},f^{(i)})}=\sum_i l_{(g^{(i)},f^{(i)})}.
\]
\subsection{$\Delta_{GF}$ and the boundary $d$}\label{ddeltagfo}
We first investigate the interplay between the map 
$\Delta_{GF}^{\mathfrak{o}}$ defined in Eq.(\ref{HopfPairso})  and the boundary $d$. The fact that a shrunken edge can not be removed  and a removed edge can not shrink allows to treat $d_0$ and $d_1$ individually.

In fact we indicate the action of either boundary on an edge $e$ by marking that edge. We sum over all edges with alternating signs as prescribed by the order $\mathfrak{o}=\mathfrak{o}(F)$ in accordance with Eq.(\ref{eq:cubediff}). 

Similarly for the co-product. We can notate it by coloring edges in $G_F$  with two colors, 'co' (red) and 'sub' (blue) which can be consistently done following 
Sec.(A.4) in \cite{Kr-Y}.

Then applying the coproduct $\Delta_{GF}$  first generates a sum of colored graphs and the boundary map gives a sum of colored graphs where edges $e\in E_F$ are marked (say by a dot) in turn and with signs as prescribed by $\mathfrak{o}(F)$.

Vice versa starting with the boundary $d_0$ or $d_1$ we first mark uncolored edges by a dot
and then color them according to the co-product. The result 
is obviously the same as long as the set of blue edges and the set of red edges are $\mathfrak{o}$-compatible. This is ensured by the definition of $\Delta_{GF}^{\mathfrak{o}}$.

As a result one gets
\[
\Delta_{GF}^{\mathfrak{o}}\circ d=( d\otimes\One+  \iota \otimes d)\circ \Delta^{\mathfrak{o}}_{GF}. 
\]
Here, $\iota$ is the map
\[
\iota: \,(G,F)\to (-1)^{e_F} (G,F),
\]
for $e_F$ the number of edges of $F$. It appears as for an odd number of edges in 
in the term on the lhs of the co-product we get a change of sign in counting.

See Fig.(\ref{cubdD}) for an example.
\begin{figure}[H]
\includegraphics[width=12cm]{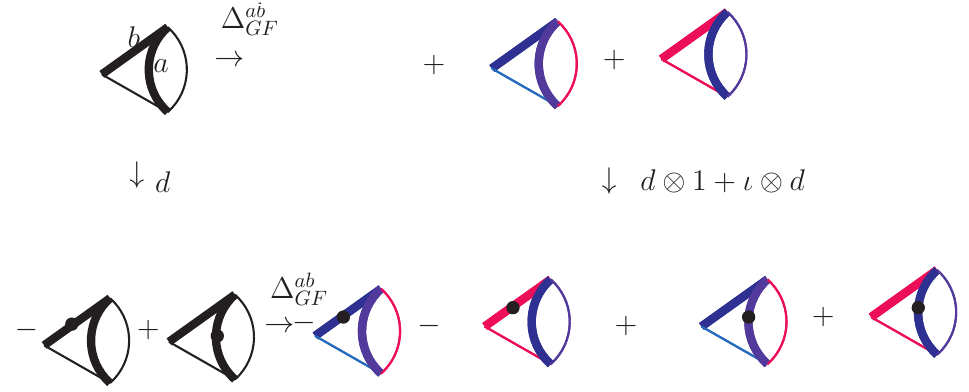}.
\caption{Consider the Dunce's cap on the left with the spanning  tree $T$ indicated by two thickened edges $e_a,e_b$ labeled $a,b$.
The ordered co-product $\Delta_{GF}^{ab}$ can be notated by giving the edges of the subgraphs in blue, and the edges of the co-graphs (obtained by shrinking blue edges) in red. There are two terms generated in the Hopf algebra $H_{GF}$. For the action of $d$ by a dot we indicate the action of either $d_0$ or $d_1$ on the indicated edge of $T$.
With a spanning tree of length two we again get two terms. It follows that we here have $\Delta_{GF}^{ab}d=(d\otimes\One+\iota \otimes d)\Delta_{GF}^{ab}$. Here $d$ can be either $d_0$ or $d_1$.}
\label{cubdD}
\end{figure}
Note that exchanging the order gives the result presented in Fig.(\ref{cubdDinv}).
\begin{figure}[H]
\includegraphics[width=12cm]{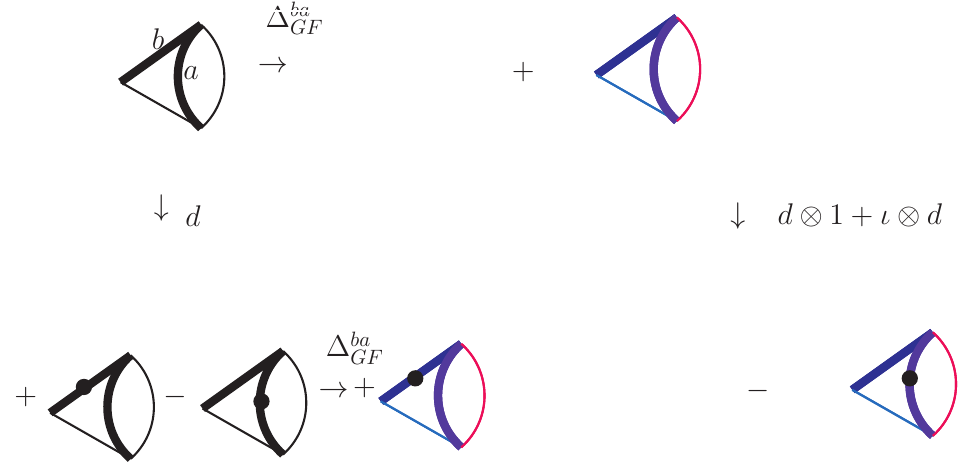}.
\caption{The same for $\Delta_{GF}^{ba}$.}
\label{cubdDinv}
\end{figure}

\subsection{The pre-Lie product for pairs $(G,F)$}\label{preLie}
We define the pre-Lie product $\ast$ as a sum over bijections adopted to pairs $(G,F)$ for $F$ 
a forest. This is well-defined by the Milnor--Moore--Cartier--Quillen theorem. The latter gurantees the existence of a Lie algebra which has an enveloping algebra dual to the Hopf algebra $H_{GF}$. The construction is standard \cite{RHI,Kr-Y},
in particular Sec.(A.4.4) in \cite{Kr-Y} for composing pairs $(G,F)$.

For our purposes we note that when we compose a pair $(G_1,F_1)$ of a graph $G_1$
with an ordered forest $F_1$ with a pair $(G_2,F_2)$ we get a sum
\[
(G_1,F_1)\ast (G_2,F_2)=\sum_i (G_i,F_i)
\]
 of pairs $(G_i,F_i)$
with ordered forests $F_i$.  The orders $\mathfrak{o}(F_i)$ are independent of 
the label $i$, $\mathfrak{o}(F_i)=\mathfrak{o}(F)$ and defined by  concatenation
\[
\mathfrak{o}(F)=\mathfrak{o}(F_2)\mathfrak{o}(F_1).
\]
This is prescribed by Milnor--Moore which imposes that the edges of the sub-graph $F_2$ when inserted are counted before the edges of the co-graph $F_1$.

Finally the sum is over all bijections between external half-edges of $G_2$ with 
suitable half-edges of $G_1$ as described in Sec.(A.4.4) in \cite{Kr-Y}.  
\subsection{Final result}
Let $d=d_0+d_1$ as before, with $d\circ d=0$. As also $d_0\circ d_0=d_1\circ d_1=0$
we have $\{d_0,d_1\}=d_0\circ d_1+d_1\circ d_0=0$.
\begin{thm}\label{dCCC}
i) We can reduce the computation of the boundary map of the cubical chain complex for large graphs to computations for smaller graphs by a Leibniz rule:
\[
d\left( (G_1,F_1)\ast (G_2,F_2)\right)=(d(G_1,F_1))\ast \iota(G_2,F_2)+(G_1,F_1)\ast (d(G_2,F_2)).
\]
ii) We have 
\[
\Delta_{GF}^\mathfrak{o}\circ d=\left(d\otimes\mathrm{id}+\iota\otimes d\right)\circ \Delta_{GF}^\mathfrak{o}. 
\]
\end{thm}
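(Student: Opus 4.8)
The plan is to treat both parts as sign-bookkeeping statements resting on a single structural observation. The boundary $d$ acts only on the edges of the distinguished forest, and these edges split into two disjoint groups under either operation in question: under the coproduct $\Delta_{GF}^{\mathfrak{o}}$ into the ``sub'' edges $F':=g\cap F$ and the ``co'' edges $F'':=F-(F\cap g)$, and under the pre-Lie composition $\ast$ into the inserted edges $E_{F_2}$ and the host edges $E_{F_1}$. In both cases the elementary moves $G/e$ and $G-e$ on one group are carried out on half-edges disjoint from the other group, and the chosen order $\mathfrak{o}$ is a \emph{concatenation} of the orders on the two groups. Since $d=d_0+d_1$ is linear and, as noted in Sec.(\ref{ddeltagfo}), a removed edge cannot be shrunk and a shrunk edge cannot be removed, it suffices to run the argument for a single $d_0$ or $d_1$ and then sum.

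For part (ii) I would formalize the coloring picture of Sec.(\ref{ddeltagfo}) (see Fig.(\ref{cubdD}) and Fig.(\ref{cubdDinv})). Applying $\Delta_{GF}^{\mathfrak{o}}$ first colors each edge of $F$ blue (sub) or red (co), and the $\mathfrak{o}$-compatibility built into the definition (\ref{HopfPairso}) guarantees that in $\mathfrak{o}$ all blue edges precede all red ones. Applying $d$ afterwards then ``dots'' one edge at a time with the signs prescribed by $\mathfrak{o}$. Conversely, dotting first and colouring afterwards yields the same dotted-and-coloured graphs termwise, because contracting or deleting an edge of $E_F$ commutes with the partition of the remaining edges into sub and co: the half-edges involved are disjoint, so the set-partition definitions of $G/e$ and $G-e$ localise to a single tensor factor. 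The only discrepancy is the sign. A dotted blue edge sits in its natural position and lands in $d\otimes\mathrm{id}$, whereas a dotted red edge that is the $k$-th co-edge occupies position $e_{F'}+k$ in $\mathfrak{o}$, so the sign it receives from $\mathfrak{o}$ differs from the sign it carries as the $k$-th edge of the co-factor $(G/g,F'')$ by exactly $(-1)^{e_{F'}}$; this is precisely $\iota$ applied to the sub-factor $(g,F')$, producing the term $\iota\otimes d$. Treating the two primitive terms $(G,F)\otimes(\One,\One)$ and $(\One,\One)\otimes(G,F)$ separately, where $\iota(\One,\One)=(\One,\One)$, completes the identity.

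For part (i) one may argue dually, transporting (ii) through the Kronecker pairing that defines $\ast$ in Sec.(\ref{preLie}); more transparently, the same edge count applies directly. Writing $(G_1,F_1)\ast(G_2,F_2)=\sum_i(G_i,F_i)$, each forest $F_i$ has edge set $E_{F_1}\dot{\cup}E_{F_2}$ ordered by the concatenation $\mathfrak{o}(F_2)\mathfrak{o}(F_1)$ dictated by Milnor--Moore. Splitting the defining sum of $d$ over these two groups, the inserted edges of $F_2$ come first, so acting on them reproduces $(G_1,F_1)\ast d(G_2,F_2)$ with no extra sign, while the host edges of $F_1$ occupy positions shifted by $e_{F_2}$, so acting on them reproduces $d(G_1,F_1)\ast(G_2,F_2)$ up to the sign $(-1)^{e_{F_2}}$, that is, up to $\iota(G_2,F_2)$. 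These termwise identities again require that removing or contracting a host edge commutes with the insertion of $(G_2,F_2)$, and symmetrically for an inserted edge, which follows from the disjointness of the relevant half-edge sets.

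The main obstacle is exactly this last commutativity. One must verify that $d_0$ and $d_1$ respect the forest and $\mathfrak{o}$-compatibility constraints encoded in (\ref{HopfPairso}) and in the insertion rule of Sec.(\ref{preLie}): when a co-edge (or host edge) is contracted one must check that the resulting $F''$ (or $F_1/e$) is still a valid spanning forest inducing the same external-leg partition, so that the image is a genuine summand on the right-hand side and no spurious or missing terms appear. The case of $d_1$ acting on an edge incident to the insertion vertex, or to the vertex produced by a shrunk sub-edge, is the delicate one. I expect this to reduce to the observation that all the defining constraints are conditions on half-edges left untouched by the move, so the bijection between the terms of the two sides is forced; but it is the point that deserves the most care.
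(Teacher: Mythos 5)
Your proposal is correct and follows essentially the same route as the paper: part (ii) via the dot-and-colour commutation argument of Sec.(\ref{ddeltagfo}), and part (i) via the splitting $E_F=E_{F_1}\dot{\cup}E_{F_2}$ with the concatenated order $\mathfrak{o}(F_2)\mathfrak{o}(F_1)$, the sign $\iota$ arising in both cases from the position shift of the second block of edges. You are in fact more explicit than the paper, which declares part (i) ``obvious'' and does not spell out the compatibility check you flag at the end; that verification (that contraction or removal of an edge in one block leaves the forest and $\mathfrak{o}$-compatibility conditions on the other block intact) is indeed the only point requiring care, and your observation that all the relevant constraints live on half-edges untouched by the move is the right way to discharge it.
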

\begin{proof}
All edges of $E_F$ appear either in $E_{F_1}$ or $E_{F_2}$. One by one by  $d$ they either shrink or are removed either in $F_1$ or $F_2$ which makes the signed Leibniz rule in i) obvious, ii) was derived above in Sec.(\ref{ddeltagfo}).
\end{proof}
\section{Monodromy and reduced graphs}\label{doneloop}
We want to use the set-up so far to derive an old result of Polkinghorne et.al.\  \cite{Polkinghorne,ELOP} in the context of one-loop graphs. 
The argument is sufficiently robust to allow for a generalization to the multi-loop case.  
Actually we do a bit more and derive a relation between the amplitude of a reduced graph and the amplitude of the full graph.
\subsection{One-loop graphs}\label{trianglecomp}
Consider the one-loop triangle with vertices $\{A,B,C\}$ and edges $\{(A,B),(B,C),(C,A)\}$,
and quadrics:
$$P_{AB}=k_0^2-k_1^2-k_2^2-k_3^2-M_1,$$ 
$$P_{BC}=(k_0+q_0)^2-k_1^2-k_2^2-k_3^2-M_2,$$ 
$$P_{CA}=(k_0-p_0)^2-(k_1)^2-(k_2)^2-(k_3-p_3)^2-M_3.$$
Here, we Lorentz transformed into the rest frame of the external Lorentz 4-vector $q=(q_0,0,0,0)^T$, and oriented the space like part of $p=(p_0,\vec{p})^T$ in the 3-direction: $\vec{p}=(0,0,p_3)^T$.

Using $q_0=\sqrt{q^2}$, $q_0p_0=q_\mu p^\mu\equiv q.p$, $\vec{p}\cdot\vec{p}=\frac{q.p^2-p.pq.q}{q^2}$, we can express everything in covariant form whenever we want to.

We consider first the two quadrics $P_{AB},P_{BC}$ which intersect in $\mathbb{C}^4$.

The real locus we want to integrate is $\mathbb{R}^4$, and we split this as $\mathbb{R}\times\mathbb{R}^3$,
and the latter three dimensional real space we consider in spherical variables as $\mathbb{R}_+\times S^1\times [-1,1]$,
by going to coordinates  $k_1=\sqrt{s}\sin\phi\sin\theta$,$k_2=\sqrt{s}\cos\phi\sin\theta$,
$k_3=\sqrt{s}\cos\theta$, $s=k_1^2+k_2^2+k_3 ^2$, $z=\cos\theta$.

We have 
$$P_{AB}=k_0^2-s-M_1,$$
$$P_{BC}=(k_0+q_0)^2-s-M_2.$$
So we learn say $s=k_0^2-M_1$ from the first
and $$k_0=k_r:=\frac{M_2-M_1-q_0^2}{2q_0}$$ from the second,
so we set
$$s_r
:=\frac{M_2^2+M_1^2+(q_0^2)^2-2(M_1M_2+q_0^2M_1+q_0^2M_2)}{4q_0^2}.$$

The integral over the real locus transforms to 
$$\int_{\mathbb{R}^4}d^4k\to \frac{1}{2}\int_{\mathbb{R}}\int_{\mathbb{R}_+}\sqrt{s} \delta_+(P_{AB})\delta_+(P_{BC})dk_0ds\times \int_0^{2\pi}\int_{-1}^1 d\phi \delta_+(P_{CA})dz.$$
We consider $k_0,s$ to be base space coordinates, while $P_{CA}$ also depends on the fibre coordinate $z=\cos\theta$. Nothing depends on $\phi$ (for the one-loop box it would).

Integrating in the base and integrating also $\phi$ trivially in the fibre  gives
$$\frac{1}{2} \frac{\sqrt{s_r}}{2q_0}2\pi \int_{-1}^1 \delta_+(P_{CA}(s=s_r,k_0=k_r))dz.$$

For $P_{CA}$ we have 
\be\label{alphabeta}
P_{CA}=(k_r-p_0)^2-s_r -\vec{p}\cdot\vec{p}-2|\vec{p}|\sqrt{s_r}z-M_3=:\alpha+\beta z.
\ee
Integrating the fibre gives a very simple expression (the Jacobian of the $\delta$-function is $1/(2|\sqrt{s_r}\vec{p}|)$, and we are left with
the Omn\`es factor\footnote{For any 4-vector $r$ we have $r^2=r_0^2-\vec{r}\cdot\vec{r}$. Let $q$ be a time-like 4-vector, $p$ an arbitrary 4-vector.
Then, $(q\cdot p^2-q^2 p^2)/q^2=\lambda(q^2,p^2,(q+p)^2)/4q^2$
and in the rest frame of $q$, $(q\cdot p^2-q^2 p^2)/q^2=\vec{p}\cdot \vec{p}$ where $\lambda(a,b,c)=a^2+b^2+c^2-2(ab+bc+ca)$, as always.}
\be\label{Omnes}
\frac{\pi}{4|\vec{p}|q_0}=\frac{\pi}{2|\sqrt{\lambda(q^2,p^2,(q+p)^2)}|}.
\ee

This contributes as long as the fibre variable 
\be\label{fiberz}
z=\frac{(k_r-p_0)^2-s_r -\vec{p}\cdot\vec{p}-M_3}{2\sqrt{\lambda(q^2,p^2,(q+p)^2)s_r}}
\ee
 lies in the range $(-1,1)$.
This is just the condition that the three quadrics intersect.

An anomalous threshold below the normal theshold appears when $(m_1-m_2)^2<q^2<(m_1+m_2)^2$.
In that range, $s_r$ is negative, hence its square root imaginary. 
In the denominator in the expression for $z$ we have the square root of the Kallen function as $|\vec{p}|=|\sqrt{\lambda(q^2,p^2,(p+q)^2)}|/2q_0$. Assume we are not in the rest frame of $q$. 

Then, that Kallen function can be negative as well so that $z$ can still be real. This is then the origin of an anomalous threshold when we solve for the minimal $q^2=q^2(z)$ in the range $1\geq z(q^2)\geq -1$. 

On the other hand, when we leave the propagator $P_{CA}$ uncut,
we have the integral
$$\frac{1}{2} \frac{\sqrt{s_r}}{2q_0}2\pi \int_{-1}^1 \frac{1}{P_{CA}}_{(s=s_r,k_0=k_r)}dz.$$
This delivers a result as foreseen by $S$-Matrix theory \cite{Polkinghorne,ELOP}.

The two $\delta_+$-functions constrain the $k_0$- and $t$-variables, so that the remaining integrals are over the compact domain $S^2$. This is an example alluded to in Eq.(\ref{trianglecompdiv}) where here the fiber is provided by the one-dimensional $z$-integral and the compactum $C_{G/E_F}$ is the two-dimensional $S^2$ while for $C_G$ it is the one-dimensional $S^1$.  

As the integrand does not depend on $\phi$, this gives a result of the form
\be\label{Coverbeta} 
2\pi C \underbrace{\int_{-1}^1 \frac{1}{\alpha+\beta z}dz}_{:=J_{CA}}=2\pi \frac{C}{\beta}\ln\frac{\alpha+\beta}{\alpha-\beta}=\frac{1}{2}\mathrm{Var}(\Phi_R(b_2))\times J_{CA},
\ee
where $C=\sqrt{s_r}/2q_0$ is intimitaly related to $\mathrm{Var}(\Phi_R(b_2))$
for $b_2$ the reduced triangle graph (the bubble), and the factor $1/2$ here is $
\mathrm{Vol}(S^1)/\mathrm{Vol}(S^2)$.

Here, $\alpha$ and $\beta$ are given through (see Eq.(\ref{alphabeta}))
$l_1\equiv \vec{p}^2=\lambda(q^2,p^2,(p+q)^2)/4q^2$ and $l_2:=s_r=\lambda(q^2,M_1,M_2)/4q^2$ as
\[
\alpha=(k_r-p_0)^2-l_2 -l_1-M_3,\,\beta=2\sqrt{l_1l_2}.
\]
Note that
\[
\frac{C}{\beta}=\frac{1}{|\sqrt{\lambda(q^2,p^2,(q+p)^2)}|}=\frac{1}{2q_0|\vec{p}|},
\]
in Eq.(\ref{Coverbeta}) is proportional to the Omn\`es factor Eq.(\ref{Omnes}).

In summary, there is a Landau singularity in the reduced graph in which we shrink $P_{CA}$. It is  located at 
\[
q_0^2=s_{normal}=(\sqrt{M_1}+\sqrt{M_2})^2.
\]
It corresponds to the threshold divisor defined by the intersection $(P_{AB}=0)\cap (P_{BC}=0)$.

 This is not a Landau singularity
when we unshrink $P_{CA}$ though. A (leading) Landau singularity appears
in the triangle when we also intersect the previous divisor with the locus $(P_{CA}=0)$.

It has a location which can be computed from the parametric approach as alluded to in Thm.(\ref{anomthresholds}).   
One finds
\bea\label{sanom} 
q_0^2 & = & s_{anom}  =  (\sqrt{M_1}+\sqrt{M_2})^2+\nonumber\\
& & +\frac{4M_3(\sqrt{\lambda_2}\sqrt{M_1}-\sqrt{\lambda_1}\sqrt{M_2})^2-\left(\sqrt{\lambda_1}(p^2-M_2-M_3)+\sqrt{\lambda_2}((p+q)^2-M_1-M_3)\right)^2}{4M_3\sqrt{\lambda_1}\sqrt{\lambda_2}},\nonumber 
\eea
with $\lambda_1=\lambda(p^2,M_2,M_3)$ and $\lambda_2=\lambda((p+q)^2,M_1,M_3)$. 

Eq.(\ref{Coverbeta}) above is the promised result: the leading singularity of the reduced graph $t/P_{CA}$ and the non-leading singularity of $t$ have the same location and both involve $\mathrm{Var}(\Phi_R(b_2))$ and the non-leading singularity of $t$ factorizes into the (fibre) amplitude $J_{CA}\times \mathrm{Var}(\Phi_R(b_2))$. 

In fact this gives rise to a cycle which is a generator in the above cohomology as Fig.(\ref{cycle}) demonstrates.
\begin{figure}
\includegraphics[width=12cm]{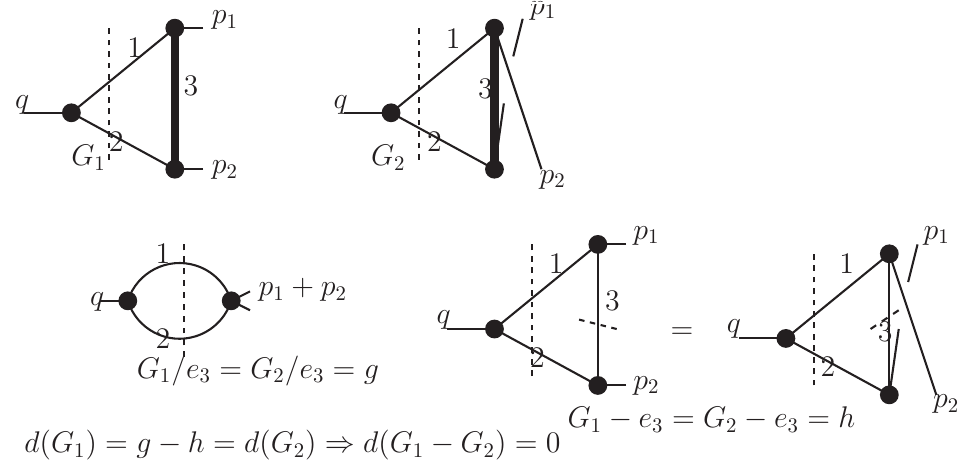}.
\caption{The two Cutkosky  triangle graphs $G_1,G_2$  are distinguished by a permutation of external edges $p_1,p_2$. Edges $e_1,e_2$ are on-shell, $e_3$ is off-shell and hence in the forest. Shrinking or removing it delivers in both cases the same reduced ($g$) or leading ($h$) graph. As a result we get a cycle $d(G_1-G_2)=0$. Obviously there is no $X$ such that $dX=G_1-G_2$.}
\label{cycle}
\end{figure}

To understand how to generalize this it pays to look at the parametric representation.
Consider the second Symanzik polynomial for the triangle graph $\Delta$.
Set $r^2=(p+q)^2$. It reads
\bea
\Phi(\Delta) & = & -M_3A_3^2+A_3(A_1 (p^2-M_1)+A_2 (r^2-M_2))+q^2A_1A_2-(A_1+A_2)(A_1M_1+A_2M_2)\nonumber\\
 & = & \Phi(\Delta/e_3)+A_3\Phi(\Delta-e_3)-A_3^2M_3.\nonumber
\eea
What we are after is the symmetry $r^2\leftrightarrow p^2$ corresponding to the exchange symmetry $p_1\leftrightarrow p_2$ in Fig.(\ref{cycle}). 

For this note that the integration measure is symmetric under the exchange $A_1\leftrightarrow A_2$. As $\Phi(\Delta/e_3)$ has the desired symmetry 
as the two vertices connected by $e_3$ collapse, the result follow from the fact that $\Phi(\Delta-e_3)(A_1,A_2)+\Phi(\Delta-e_3)(A_2,A_1)$ has the desired symmetry. 
\begin{rem}
It is easy to find finite linear combinations of graphs $X=\sum_i G_i$  such that the symmetry of the integration measure enforces $dX=(d_0+d_1)X=0$
similary. The question if there exists $Y$ such that $X=dY$ is harder to answer in general and a systematic study is left to future work. Furthermore factorizations as in Eq.(\ref{Coverbeta}) on the rhs can similarly be established using dispersion relations and will be discussed in future work. 
\end{rem}

\section{Complexes of graphs and Landau singularities}\label{Marko}
Above we have seen how the cubical boundary $d=d_0+d_1$ acts on (cut) Feynman graphs and how it relates to their analytic structure. In this section we study a simpler differential. We forget about the information stored in spanning trees and restrict ourselves to shrinking edges to investigate the role of ``traditional" graph complexes for Feynman graphs and their analytic structure.

By traditional we mean here a differential graded (Hopf) algebra structure on $H_{core}$, induced by the derivation 
\begin{equation*}
    d:H_{core} \longrightarrow H_{core}, \  G \longmapsto \sum_{e=\{v,w\}} \pm G/e,
\end{equation*} defined by collapsing edges that are not tadpoles (cf.\ Defn.(\ref{defn:hg}) below -- the signs $\pm$ are determined by an order on $E_G$; we refrain from a precise definition since in what follows we work exclusively with coefficients in $\mb Z_2$). 

As in Thm.(\ref{dCCC}) we have
\begin{equation}\label{eq:deltad}
 \Delta_{core} (d G) = d^{\otimes} \Delta_{core}(G) 
\end{equation}
where $d^{\otimes}(G' \otimes G''):= dG' \otimes G'' + (-1)^{|G'|} G' \otimes dG''$, using Sweedler's notation. 

To prove this formula for $\mb Z_2$ coefficients\footnote{It holds for integer coefficients as well, but we do not need this for our purposes.}, let $C_G$ denote the set of non-empty core subgraphs $g\subsetneq G$. For any edge $e \in E_G$ we have a decomposition 
\begin{equation*}
 C_{G/e}=  \{ g/e \mid g \in C_G, e\in E_g \} \sqcup \{ g \mid g \in C_G, e\notin E_g \}. 
\end{equation*}
This allows to write the coproduct of $G/e$ as
\begin{align*}
\Delta(G/e)= & \mb I \otimes G/e + G/e \otimes \mb I + \sum_{g \in C_G : e \in E_g} g/e \otimes (G/e)/ (g/e)  + \sum_{g \in C_G: e \notin E_g} g \otimes (G/e)/ g \\
 = &  \mb I \otimes G/e + G/e \otimes \mb I + \sum_{g \in C_G : e \in E_g} g/e \otimes G/ g  + \sum_{g \in C_G: e \notin E_g} g \otimes (G/g)/e.
\end{align*}
If $e$ is a tadpole, then $g/e=0$, by definition, for any $g \subset G$ with $e \in E_g$. The equation above shows thus $\Delta(G/e)=G'/e \otimes G'' + G' \otimes G''/e$ for every $e \in E_G$ and \eqref{eq:deltad} follows.

Apart from this compatibility condition, the map $d$ has another important property: It encodes which Feynman graphs share parts of their Landau varieties.

\subsection{Edge-collapses and the analytic structure of Feynman integrals}

Recall the discussion of Landau singularities of Feynman integrals in Sec.(\ref{Landau}). Given a Feynman graph $G$, the analytic function $\Phi_R(G)$ can in principle be reconstructed by a Hilbert transform from knowledge of its Landau variety $\mb L_G$ and the behavior of $\Phi_R(G)$ in a neighborhood of $\mb L_G$ (the nature of the singularities and the associated monodromy). See \cite{ELOP,Pham} for background material. 

This is a highly non-trivial problem whose solution is not yet fully understood. However, if this reconstruction were indeed possible, we could apply the same method to elements of $H_{core}$, that is, to linear combinations of graphs, or even a full amplitude (say for a fixed number of loops and legs). In this setting it is natural to ask which families of Feynman graphs share a set of singularities -- not only to apply a Hilbert transform directly to linear combinations of graphs, but also to check for possible cancellations of singularities. 

Put differently, one would like to partition the set of graphs contributing to an amplitude into subsets organized by their Landau varieties. 
\newline

\textbf{Disclaimer}: In the following we use the term \emph{singularity} as an abbreviation for the location of a Landau singularity, that is, a solution of the Landau equations. This does not include any classification of the type, or even the prediction whether it is one at all. The Landau equations give only necessary, not sufficient conditions for singularities of Feynman integrals. Here we are only concerned with the Landau variety $\mb L_G$, the set of superficial singularities of $G$, or, more precisely, of the function $\Phi_R(G)$. 
\newline

Considering elements in $H_{core}$, in general each summand in a linear combination of graphs brings its own singularities to the party. However, some graphs will have singularities in common, especially those of non-leading type. Since 
\begin{equation*}
   \mb L_{G+H} \subset \mb L_G \cup \mb L_H
\end{equation*}
holds for all Feynman graphs with the same number of legs, one is naturally led to wonder whether there is an efficient way to partition the set of graphs that contribute to an amplitude so that each subset has ``a large overlap of individual Landau singularities" or ``a small joint Landau singularity".

We argue below that for a theory with cubic interaction this is indeed possible. We construct a partition of the set of graphs contributing to an amplitude into subsets that simultaneously satisfy two properties;
\begin{enumerate}
    \item the integrals, and therefore also their singularities, are related by a symmetry of exchanging masses and/or leg labels,
    \item the Landau singularities have maximal overlap with respect to satisfying such a symmetry.
\end{enumerate}
In formulae: Let $\mb G_{n,s}^m$ denote the set of all Feynman graphs with $n$ loops, $s$ legs and their edges labeled by $m$ different masses. Then (omitting symmetry factors) we can rewrite the Green's function $\m A_{n,s}$ as
\begin{equation*}
    \m A_{n,s}:= \Phi_R \Big( \sum_{G \in \mb G_{n,s}^m} G \Big) =  a_1 + \ldots +a_k := \Phi_R \Big( \sum_{G \in \mb G_1} G \Big) + \ldots + \Phi_R \Big( \sum_{G \in \mb G_k} G \Big)
\end{equation*}
where $\mb G_1 \sqcup \cdots \sqcup \mb G_k$ is a partition of $\mb G_{n,s}^m$. The functions $a_i$ and their singularities (the union of the Landau singularities of the graphs that contribute to them) satisfy the above mentioned maximality (cf.\ Theorem \ref{prop:cycles}) and are related by a $\Gamma$-symmetry where $\Gamma \subset \Sigma_s \times \Sigma_m$ is a subgroup of the group of permutations of legs and mass labels.

The machine that provides this partition is (the top degree homology of) a graph complex whose chains are $\mb Z_2$-linear combinations of generators of $H_{core}$ and the differential is defined by collapsing edges. The connection to Landau singularities is established by the following argument: Via the map $G\mapsto \mb L_G$ each graph represents a subset in the space of external momenta, given by the solution of its Landau equations. The poset structure introduced in Defn.(\ref{defn:landau}) implies that the singularities of all graphs contributing to $\m A_{n,s}$ form a simplicial complex $K=K_{n,s}$. By relating the graph complex to the simplicial chain complex of $K$, we see how (the kernel of) its differential encodes incidence relations of singularities. 

In the next section we present this construction in detail. First, we treat the case of Feynman graphs with all masses different, then we comment on the general case when two or more internal propagators can carry the same non-vanishing mass. 

After that we specialize to the case of a theory with cubic interaction and discuss the above mentioned partition property. We show this to be true for one loop graphs with all masses different by relating the homology of the graph complex to the topology of a certain moduli space of graphs. Roughly speaking, Feynman rules provide a distribution density on this space. Evaluating an amplitude amounts to integrate it against its fundamental class. This class is in fact a sum of cycles (the moduli space is not a manifold) whose elements form the sought-after partition of graphs. 

For the case of general masses we present arguments that support this conjecture. For higher loop numbers the homology of the graph complex is unknown, hindering any further speculations whether such partitions exist in general.

\subsection{Holocolored graphs}

Let us study a toy-model first, Feynman graphs with all edges carrying a different mass. On the graphical level we work thus with graphs where all edges are colored differently, that is, we consider graphs with injective coloring maps $c:E_G \to C$, dubbed \textit{holocolored} graphs. 
If the number of loops $n$ and legs $s$ is fixed, then a simple Euler characteristic argument shows that one needs at least $3(n-1)+s$ colors for each admissible graph to admit such a holocoloring. Here we call a graph \textit{admissible} if it is 1-PI and all of its vertices are of valence at least three.\footnote{Apart from this being the relevant case for physics, this assumption assures the finite-dimensionality of all chain groups and topological spaces we encounter in the following.} 
We write $\mb G_{n,s}$ for the set of all admissible graphs with $n$ loops and $s$ (labeled) legs. For $k\in \mb N$ let $\set k := \{1,\ldots, k\}$.

\begin{defn}\label{defn:hg}
For $n,s \in \mb N$ define a chain complex $(HG,d)=(HG_{n,s},d)$ of holocolored graphs by
\begin{equation*}
HG=HG_{n,s}:= \mb Z_2 \big\langle (G,c) \mid G \in \mb G_{n,s},c:E_G \hookrightarrow \set{3(n-1)+s} \big\rangle,
\end{equation*}
where the grading is given by $|(G,c)|:=|E_G|-1$. The differential $d$ is defined by
\begin{equation*}
d(G,c):= \sum_{ e \in  E_G } (G/e,c_e).
\end{equation*}
Here the coloring $c_e$ is induced by the contraction map, that is, it is simply the restriction of $c$ to $E_G\setdiff \{e\}$. If $e$ is a tadpole, then we set $G/e=0$.
\end{defn}

\begin{rem}
 Many interesting features and applications of graph complexes over fields of characteristic zero stem from the signs in the definition of the differential and their relation to graph automorphisms (see \cite{ConantVogtmann}, for example). Although we do not need the signs here (our graph complexes are thus quite simple), we still have to take automorphisms into account. The automorphism group of a holocolored graph is trivial, but for general colorings these symmetries complicate the picture considerably; see Ex.(\ref{eg:bananas}) and the discussion in the next section. 
\end{rem}

\begin{lem}\label{lem:dsquared}
 $d^2=0$.
\end{lem}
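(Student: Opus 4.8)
The plan is to use that we are working over $\mb Z_2$, so that $d^2=0$ will follow once I show that every colored graph occurring in $d^2(G,c)$ appears an even number of times. First I would expand, for a holocolored generator $(G,c)$,
\[
d^2(G,c)=\sum_{e\in E_G} d(G/e,c_e)=\sum_{\substack{e,f\in E_G\\ e\neq f}}\big((G/e)/f,\ c|_{E_G\setdiff\{e,f\}}\big),
\]
where in the inner sum $f$ ranges over the edges of $G/e$ (identified with edges of $G$ distinct from $e$) and a summand is understood to vanish whenever the relevant contraction produces a tadpole, following the convention $G/e=0$ of Defn.(\ref{defn:hg}). The strategy is then to pair the ordered term indexed by $(e,f)$ with the one indexed by $(f,e)$ and verify that they cancel.

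For the bulk of the terms this is immediate. Contraction of distinct edges commutes: in the half-edge formalism of Eq.(\ref{edgecont}) both $(G/e)/f$ and $(G/f)/e$ are obtained from $G$ by deleting the half-edges of $e$ and of $f$ and joining the corresponding parts of $\mathcal{V}_G$, and this join of partition parts is independent of the order, so both equal the graph $G/\{e,f\}$. Moreover the two induced colorings coincide, since $(c_e)_f$ and $(c_f)_e$ are both the restriction of $c$ to $E_G\setdiff\{e,f\}$. Because $c$ is injective a holocolored graph admits no nontrivial color-preserving automorphisms, so these two data determine the \emph{same} generator of $HG$; whenever both terms are nonzero they are therefore identical and cancel modulo $2$.

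The main point to get right — and the only real obstacle — is that the tadpole convention is applied symmetrically in $e$ and $f$. A term can be killed by the rule $G/e=0$ in two ways. If $e$ is already a tadpole in $G$, then it contributes nothing to $d(G,c)$, and conversely $e$ remains a self-loop after contracting any other edge, so the partner term $(G/f)/e$ also vanishes; such pairs thus create no imbalance. The remaining case is when $e$ and $f$ are both non-tadpoles in $G$ but one of the two contractions creates a tadpole. Here I would observe that $f$ becomes a self-loop in $G/e$ exactly when the two endpoints of $f$ are merged by contracting $e$, i.e.\ precisely when $e$ and $f$ are parallel edges sharing both endpoints $\{v,w\}$ — a condition manifestly symmetric in $e$ and $f$. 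Hence $f$ is a tadpole in $G/e$ if and only if $e$ is a tadpole in $G/f$, so for every unordered pair $\{e,f\}$ the two ordered contributions either both vanish or both equal the single generator $G/\{e,f\}$. In either case they sum to zero over $\mb Z_2$, and summing over all pairs yields $d^2(G,c)=0$.
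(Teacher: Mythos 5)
Your proof is correct and follows essentially the same route as the paper's: over $\mb Z_2$ the identity $(G/e)/f = G/\{e,f\} = (G/f)/e$ pairs up the ordered contributions so that they cancel. The only difference is that you additionally verify the symmetry of the tadpole convention (a tadpole stays a tadpole under further contractions, and $f$ becomes a tadpole in $G/e$ iff $e$ and $f$ are parallel, which is symmetric) — a point the paper leaves implicit but which your argument settles correctly.
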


\begin{proof}
 Since we are working over $\mb Z_2$, this is a simple consequence of the fact
\begin{equation*}
 (G/e)/f=G/\{e,f\}=(G/f)/e,
\end{equation*}
which holds for any (colored) graph $G$ and every pair of edges $e,f \in E_G$.
\end{proof}

The differential $d$ maps a graph to the sum of its ``boundary graphs", or in the language of Landau singularities, of its reduced graphs, modulo those obtained by collapsing tadpoles. In terms of the poset of singularities $\mathcal{S}_G$ the image of $d$ is the sum over its coatoms -- cf.\ the discussion at the end of Sec.(\ref{Landau}). Each such coatom represents thereby a family of non-leading singularities of $\Phi_R(G)$ of the form
\begin{equation*}
a_e=0 \text{ and for all } e' \in E_{G/e} \text{ either } a_{e'}=0 \text{ or }Q_{e'}=0.
\end{equation*}
In the poset $\mathcal{S}_G$ these equations correspond to intervals 
\begin{equation*}
 [\emptyset,l_{e}]=\{l \in \mathcal{S}_G \mid l \leq l_{G/e} \}. 
\end{equation*}
Thus, if two graphs $G,H$ satisfy $G/e=H/f$ for some edges $e\in E_G$, $f\in E_H$, the functions $\Phi_R(G)$ and $\Phi_R(F)$ have all corresponding reduced singularities (with $a_e=0$ and $a_f=0$, respectively) in common.

We now state a technical formulation of this condition of shared Landau singularities. Although we will focus later on the case of three-regular graphs (the Feynman diagrams of a theory with only cubic interactions), we state it here in full generality.
The proof uses a geometric picture of the situation where we show how to identify $(HG,d)$ as the simplicial chain complex of a topological space. 
There are two (equivalent) ways of doing so, using
\begin{itemize}
    \item the geometric realization of the simplicial set of Landau singularities,
    \item a moduli space of normalized metrics on Feynman graphs.
\end{itemize}
The former approach was outlined at the end of the previous section, the latter uses the following observation:
Varying the edge-lengths of a graph $G \in \mb G_{n,s}$ parametrizes the interior of the (projective) $|E_G|-1$ dimensional simplex $\mb P_G$ (we mod out overall rescaling). In this regard, parametric Feynman rules can be understood as a map that associates to each Feynman graph $G$ a family of volume forms on the space $\mb P_G$ of (normalized) metrics on $G$.\footnote{Such forms may be ill-defined, i.e., the volume of $\mb P_G$ may be infinite, but it becomes finite after renormalization.} The family is parametrized by the kinematical data - here the external momenta - in the complement of the Landau variety of $G$. Upon integration it produces a multivalued function on the latter space.
The faces of $\mb P_G$ are represented by graphs $H$ obtained from $G$ via sequences of edge-collapses. We define an equivalence relation by declaring two such faces $\mb P_H$ and $\mb P_{H'}$ equivalent if $H$ and $H'$ are isomorphic as colored graphs.
We may thus form a $\Delta$-complex $K=K_{n,s}$ by taking the union of all $\mb P_G$ for $G=(G,c) \in \mb G_{n,s}$ and gluing them together along faces that are equivalent. 

As explained above, this complex gives a geometric picture of the poset of Landau singularities of all graphs in $\mb G_{n,s}$. In this way we see that incidence relations between its simplices capture information about when and where the singularities of their associated Feynman integrals intersect.

\begin{thm}\label{prop:cycles}

Let $X=G_1 + \ldots +G_m  \in HG_{n,s} $ be a cycle of degree $k$. Assume that $X$ is not decomposable as a linear combination of cycles. Then the family $\{G_1, \ldots, G_m \}$ is maximal in the following sense:

\noindent
Write $\mb L_X^{red}$ for the union of all reduced singularities associated to the $G_i$,
\begin{equation*}
\mb L_X^{red} := \bigcup_{i=1}^m  \bigcup_{e \in E_{G_i}} \mb L_{G_i/e}.
\end{equation*}

 \noindent 
 If there is an element $ X'=\sum_{i=1}^{m'}G_i' \in HG_{n,s}$ of degree $k$ with $ \mb L_{X'}^{red} \subseteq  \mb L_X^{red}$ that can be completed to form a different cycle $X' + X'' \in \ker d$, then $ \mb L_{X''}^{red} \nsubseteq  \mb L_X^{red}$.
\end{thm}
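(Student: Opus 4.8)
The plan is to prove the statement through the topological identification announced just before the theorem, namely $(HG_{n,s},d)\cong (C_*(K_{n,s};\mb Z_2),\partial)$, and then to read the assertion off from a uniqueness property of $X$ among cycles supported on a fixed subcomplex. First I would make the dictionary precise: a degree-$k$ generator $(G,c)$ is the $k$-simplex $\mb P_G$ of $K$, the differential $d$ is the simplicial boundary $\partial$ (collapsing an edge $=$ passing to a codimension-one face, a tadpole-collapse $=$ a degenerate face, set to $0$), and a reduced graph $G/e$ is a $(k-1)$-simplex whose Landau variety $\mb L_{G/e}$ is, via the poset isomorphism of Defn.(\ref{defn:landau}), the stratum indexed by that face. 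The crucial input is that for holocolored graphs the assignment $G\mapsto \mb L_G$ is faithful on reduced graphs: a holocolored graph has no automorphisms, so distinct reduced graphs index distinct, ``independent'' strata. Writing $\mathsf{F}(Y)$ for the set of $(k-1)$-faces of the simplices in $\mathrm{supp}(Y)$, this faithfulness lets me translate an inclusion of reduced loci into an inclusion of face-sets, $\mb L_Y^{red}\subseteq \mb L_X^{red}\iff \mathsf{F}(Y)\subseteq\mathsf{F}(X)$, and I would record $L:=\mb L_X^{red}$ as the subcomplex of $K$ carried by $\mathsf{F}(X)$.

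With this dictionary the theorem becomes a purely combinatorial statement about $\mb Z_2$-cycles of $K$, and I would reduce it to the following uniqueness lemma: \emph{$X$ is the only nonzero $k$-cycle all of whose $(k-1)$-faces lie in $L$}. Granting the lemma, the theorem follows by contradiction: if $X'+X''$ were a cycle different from $X$ with both $\mathsf{F}(X')\subseteq\mathsf{F}(X)$ and $\mathsf{F}(X'')\subseteq\mathsf{F}(X)$, then $Z:=X'+X''$ would be a cycle with $\mathsf{F}(Z)\subseteq\mathsf{F}(X')\cup\mathsf{F}(X'')\subseteq\mathsf{F}(X)$ and $Z\neq X$, contradicting the lemma. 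The degenerate case $X''=X'$ (whence $Z=0$) is excluded because the completion $X''$ is by construction a nonempty addition to $X'$ disjoint from it.

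It remains to prove the uniqueness lemma, and this is where I expect the real work to lie. The containment $\mathsf{F}(Z)\subseteq\mathsf{F}(X)$ says every codimension-one face of every top cell of $Z$ is already a face of a top cell of $X$; since $X$ is a cycle, each such face is covered an even number of times by $\mathrm{supp}(X)$. For a cubic theory the incidence geometry of $K$ is explicit: a $(k-1)$-cell is a graph with a single four-valent vertex, and it is a face of exactly the three top cells obtained by the three resolutions of that vertex into a pair of trivalent vertices. I would use this three-fold (non-manifold) branching to run a locality argument showing that a cycle whose faces all lie in $L$ cannot use any ``phantom'' top cell outside $\mathrm{supp}(X)$, i.e.\ that $\mathrm{supp}(Z)\subseteq\mathrm{supp}(X)$. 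Once this is established, $Z$ is a subchain of $X$ that is itself a cycle, so indecomposability of $X$ (read as: $\mathrm{supp}(X)$ admits no partition into two nonempty subsets each of which is a cycle) forces $Z\in\{0,X\}$, completing the lemma.

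The main obstacle is precisely this locality step: ruling out a top cell $\tau\notin\mathrm{supp}(X)$ all of whose $k+1$ faces happen to be faces of $X$. I would attack it by a covering/parity count over the three-fold branchings — at each face of such a $\tau$, two of the three incident cells already belong to $X$, and tracking how the remaining third cells are forced to pair up should either produce $\tau\in\mathrm{supp}(X)$ or violate $\partial Z=0$. The genuinely delicate point, and the one I would treat most carefully, is that this step uses the holocolored (automorphism-free) hypothesis in an essential way: for graphs with repeated masses the branching degree and the identifications among faces change (cf.\ Ex.(\ref{eg:bananas})), which is exactly why the general-mass statement is only conjectural and must be argued separately.
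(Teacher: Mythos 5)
Your first step---identifying $(HG_{n,s},d)$ with the simplicial chain complex of the $\Delta$-complex $K$ built from the simplices $\mb P_G$---is exactly how the paper begins, but the resemblance stops there, and the second half of your argument has a genuine gap. The paper does not prove (or need) your uniqueness lemma. Its argument is soft: tadpole contractions are handled by passing to the quotient $Y=K/I_{k-1}$, so that $H_k(HG_{n,s})\cong H_k(K,I_{k-1};\mb Z_2)$---note that a $d$-cycle need not be a simplicial cycle of $K$, since its simplicial boundary may be supported on the tadpole faces $I_{k-1}$, a point your dictionary (which just ``sets tadpole faces to $0$'') skips over. The hypothesis $\mb L_{X'}^{red}\subseteq\mb L_X^{red}$ is then used only to say that $X'$ is a \emph{relative} cycle for the pair $(Y,X)$, where $X\subset Y$ denotes the subspace carried by the given cycle, and the conclusion is read off from the long exact sequence of that pair and its connecting homomorphism $H_k(Y,X;\mb Z_2)\to H_{k-1}(X;\mb Z_2)$. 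No local analysis of the incidence structure of $K$ is required.

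Your route instead funnels the entire content of the theorem into the lemma that $X$ is the \emph{only} nonzero $k$-cycle all of whose codimension-one faces lie in $L$, to be established by a parity count over the three-fold branching at codimension-one cells. This is precisely where the proof is missing: you say yourself that the locality step is the main obstacle and only describe how you \emph{would} attack it. Moreover, as sketched it does not cover the statement's generality---the theorem is asserted for arbitrary degree $k$ and for all admissible graphs (valences $\geq 3$), where a codimension-one cell is not always a graph with a single four-valent vertex admitting exactly three resolutions, so the branching degree is not uniformly three; that picture is specific to the top-degree part of a cubic theory. Until the uniqueness lemma is proved in the required generality (and it is a strictly stronger assertion than what the paper's exactness argument delivers), the argument is incomplete. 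The cleanest fix is to replace the locality step by the relative-homology mechanism above, which sidesteps the combinatorial analysis altogether.
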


It follows that, if $X$ is $d$-closed, then there is no subdivision $X= X_1 + X_2$ (with $dX_i\neq 0$) such that 
\begin{equation*}
 \mb L_{X_1}^{red} \cap \mb L_{X_2}^{red} = \emptyset.
\end{equation*}
Thus, a cycle in $HG$ represents a sum of Feynman integrals, closed under the operation of adding another Feynman integral without generating new (reduced) singularities. Roughly speaking, the graphs in the cycle satisfy two conditions simultaneously; their singularities have maximal overlap while their union is as small as possible. See examples (\ref{eg:holotriangle}) and (\ref{eg:twomasses}) below.

\begin{proof}
There is a natural identification of the elements of $HG_{n,s}$ with the simplicial chains of $K$. Moreover, 
the differential $d$ is \emph{almost} the boundary map of the simplicial chain complex of $K$; the only difference is that in the definition of $d$ we set $G/e=0$ if $e$ is a tadpole. It is therefore a \textit{relative} boundary map in the following sense.
To account for the cancellation of tadpoles, let $I_j$ denote the union of all $j$-dimensional simplices in $K$ that are represented by graphs not in $\mb G_{n,s}$ (i.e., those obtained by collapsing a tadpole in an admissible graph on $j+2$ edges). This allows to identify the homology of $HG_{n,s}$ with a certain relative homology of $K$, 
\begin{equation*}
 H_k(HG_{n,s}) \cong H_k(K,I_{k-1};\mb Z_2) \cong \tilde{H}_k(K/I_{k-1};\mb Z_2).
\end{equation*}

With this geometric interpretation at hand, the theorem now follows from the long exact sequence of a pair. 
Let $Y$ denote the space $K/I_{k-1}$ and, abusing notation, let $X \subset Y$ be the cycle representing the class of $\sum_{i=1}^m G_i$ in $H_k(HG_{n,s})\cong H_k(Y;\mb Z_2)$. The long exact sequence of the pair $X \subset Y$ reads
\begin{equation*}
 \cdots \rightarrow H_k(X;\mb Z_2) \longrightarrow H_k(Y;\mb Z_2) \longrightarrow H_k(Y,X;\mb Z_2) \overset{\partial}{\longrightarrow} H_{k-1}(X;\mb Z_2) \rightarrow \cdots
\end{equation*}
Now, the assumptions on $X'$ imply, under the same abuse of notation, that it represents a class $[X']$ in $H_k(Y,X;\mb Z_2)$. The connecting map $\partial$ maps it to $H_{k-1}(X;\mb Z_2)$, given by the class of the  boundary of $X'$ in $X$. 
If $X'$ is a cycle, then $dX'=0$ and we are done. If it is not a cycle, then $X' \in  \ker \partial$. Since the sequence is exact, there must be an element $X''$ in $H_k(Y;\mb Z_2)$ that gets mapped to $X'$. 
\end{proof}

Note that the reverse implication of Thm.(\ref{prop:cycles}) does not hold. A single graph is in general maximal with respect to the set of its reduced singularities. On the other hand, a full amplitude is always maximal in this sense. This is the reason for our interpretation of cycles in $HG_{n,s}$ as representing the largest possible families with respect to the smallest possible sets of shared singularities. 

The identity $d^2=0$ simply translates into the fact that repeated application of ``reducing'' a graph does not give any new information.
In other words, $d$-exact terms give ``trivial'' relations.

If we specialize to a theory with cubic interaction, then the graphs participating in the $n$-loop $s$-point amplitude are precisely the elements living in degree $3n+s-4$, the highest degree part of $HG$. In this case there are no exact elements so that
\begin{equation*}
 H_{3n+s-4}(HG)= \ker \left( d:(HG_{n,s})_{3n+s-4} \to (HG_{n,s})_{3n+s-5} \right) 
\end{equation*}
detects all cycles. This is the main case we consider in the following.

If we drop the restriction of considering a theory with cubic interaction, then graphs with differing numbers of edges contribute to the amplitude. However, the vertex valency of Feynman graphs in a given theory is usually bounded from above. This restricts the homological degrees that need to be considered to a subset of $\set {3n+s-4}$. 
In this case we would need to take the homology of $HG$ in multiple degrees into account. This has twofold implications: 
\begin{itemize}
 \item Homology detects only closed elements modulo exactness, while with the reasoning we have given here, we are only concerned with the kernel of $d$. Thus, in lower degrees the homology can not predict all relevant elements. However, note that computing $\ker d$ is a simple linear algebra problem. 
 \item Since we are only interested in cycles, we can construct elements with high loop numbers from elements in lower loop numbers (without having to check for exactness), for instance with the maps introduced in Sec.(\ref{sss:higherloops}) below.
\end{itemize}
 In this regard it is also important to note that, although graphs with tadpoles are trivial in kinematic renormalization schemes, we must not omit them in the definition of the graph complexes. They have to be included as ``boundary graphs" to keep track of all reduced singularities of a given graph. 
 
\begin{rem}
\label{rem:loopdegree}
The elements in $H_{3(n-1)+s}(HG)$ may be extended by adding all reduced graphs of each summand, including also graphs of lower loop number that are created by the contraction of subgraphs, as in the definition of the poset structure of $\m S_G$. 
Alternatively, the construction presented here may be adjusted to account for graphs with varying loop numbers. In this case we need to consider \textit{marked weighted graphs} as in \cite{ChanGalatiusPayne} where the term \textit{marking} simply refers to a labeling of the legs while \textit{weights} are additional labels on the vertices which keep track of collapsed loops; see \cite{ChanGalatiusPayne} for a precise definition. This leads to an alternative approach allowing to find classes of Feynman graphs across different loop numbers. The associated graph complex is then related to the topology of a moduli space of \textit{tropical curves}, instead of metric graphs (the latter connection is outlined below).
\end{rem} 

\begin{eg}\label{eg:holotriangle}
 Let us consider the differential of a one loop graph with three legs, 
 \begin{equation}\label{eq:dtriangle}
  \raisebox{-.66cm}{ \begin{tikzpicture}[scale=.8]
   \coordinate  (v1) at (0,0.5); 
   \coordinate  (v2) at (1,0);
   \coordinate (v3) at (1,1);
   \coordinate (p1) at (-.23,0.5); 
   \coordinate (p2) at (1.2,-0.1);
   \coordinate (p3) at (1.2,1.1);
   \draw (p1) to (v1);
   \draw (p2) to (v2);
   \draw (p3) to (v3);
   \draw[red] (v1) -- (v2) node[pos=0.5, below]{$x$};
   \draw[blue] (v2) -- (v3) node[pos=0.5, right]{$y$};
   \draw[teal] (v1) -- (v3) node[pos=0.5, above]{$z$};
  \filldraw[fill=black] (v1) circle (0.04) node[xshift=-.33cm]{$p_1$};
  \filldraw[fill=black] (v2) circle (0.04) node[xshift=.4cm,yshift=-.1cm]{$p_2$};
  \filldraw[fill=black] (v3) circle (0.04) node[xshift=.4cm,yshift=.1cm]{$p_3$};
  \end{tikzpicture}}
 \overset{d}{\longmapsto}
   \raisebox{-.55cm}{ \begin{tikzpicture}[scale=.8]
   \coordinate  (v1) at (0,0.5); 
   \coordinate  (v2) at (1,0.5);
   \coordinate (p1) at (-.3,0.5); 
   \coordinate (p2) at (1.23,0.7);
   \coordinate (p3) at (1.23,0.3);
   \draw (p1) to (v1)node[xshift=-.4cm]{$p_1$} ;
   \draw (p2) to (v2)node[xshift=.4cm,yshift=-.2cm]{$p_2$};
   \draw (p3) to (v2)node[xshift=.4cm,yshift=.2cm]{$p_3$};
   \draw[teal] (v1) to[out=80,in=100]node[pos=0.5, above]{$z$} (v2);
   \draw[red] (v1) to[out=-80,in=-100]node[pos=0.5, below]{$x$} (v2);
  \filldraw[fill=black] (v1) circle (0.04);
  \filldraw[fill=black] (v2) circle (0.04);
  \end{tikzpicture}}
  +
   \raisebox{-.55cm}{ \begin{tikzpicture}[scale=.8]
   \coordinate  (v1) at (0,0.5); 
   \coordinate  (v2) at (1,0.5);
   \coordinate (p1) at (-.3,0.5); 
   \coordinate (p2) at (1.23,0.7);
   \coordinate (p3) at (1.23,0.3);
   \draw (p1) to (v1)node[xshift=-.4cm]{$p_2$} ;
   \draw (p2) to (v2)node[xshift=.4cm,yshift=-.2cm]{$p_3$};
   \draw (p3) to (v2)node[xshift=.4cm,yshift=.2cm]{$p_1$};
   \draw[blue] (v1) to[out=80,in=100]node[pos=0.5, above]{$y$} (v2);
   \draw[red] (v1) to[out=-80,in=-100]node[pos=0.5, below]{$x$} (v2);
  \filldraw[fill=black] (v1) circle (0.04);
  \filldraw[fill=black] (v2) circle (0.04);
  \end{tikzpicture}}
  +
   \raisebox{-.55cm}{ \begin{tikzpicture}[scale=.8]
   \coordinate  (v1) at (0,0.5); 
   \coordinate  (v2) at (1,0.5);
   \coordinate (p1) at (-.3,0.5); 
   \coordinate (p2) at (1.23,0.7);
   \coordinate (p3) at (1.23,0.3);
   \draw (p1) to (v1)node[xshift=-.4cm]{$p_3$} ;
   \draw (p2) to (v2)node[xshift=.4cm,yshift=-.2cm]{$p_1$};
   \draw (p3) to (v2)node[xshift=.4cm,yshift=.2cm]{$p_2$};
   \draw[blue] (v1) to[out=80,in=100]node[pos=0.5, above]{$y$} (v2);
   \draw[teal] (v1) to[out=-80,in=-100]node[pos=0.5, below]{$z$} (v2);
  \filldraw[fill=black] (v1) circle (0.04);
  \filldraw[fill=black] (v2) circle (0.04);
  \end{tikzpicture}}.
 \end{equation}
 From this it readily follows that the sum over all six permutations of colorings by $x,y,z$ defines a cycle, hence a generator of $H_2(HG_{1,3})$. There are no other graphs in $HG_{1,3}$ with three edges, so $H_2(HG_{1,3})\cong \mb Z_2$ -- in accordance with \eqref{eq:tophom} below. 
 On the level of Landau singularities we find for the graph on the left hand side of \eqref{eq:dtriangle} reduced singularities for $p_1^2=(x\pm z)^2$, $p_2^2=(x\pm y)^2$ and $p_3^2=(y\pm z)^2$. From this it is also clear, that $\Phi$ applied to any two graphs that are related by a permutation of $x,y,z$ produces two functions which have some singularities in common.
  \end{eg}
  
  \begin{eg}
   For the case of four legs we would find three different generators, each given by the sum of 24 box graphs, their set of reduced singularities related by an $\Sigma_3$-symmetry. Instead of presenting the full computation, we refer to the general discussion below.
  \end{eg}

\subsubsection{The homology of $HG_{1,s}$}

In the case of one loop holocolored graphs with $s$ legs the top degree homology of $HG=HG_{1,s}$ was computed in \cite{MaxMarko}. It is given by the formula
\begin{equation} \label{eq:tophom}
 H_{s-1}(HG_{1,s}) \cong \mb Z_2^{\frac{(s-1)!}{2}},
\end{equation}
In this section we provide a geometric way of understanding these homology groups.
\newline

For $n=1$ the complexes $HG_{1,s}$ are naturally isomorphic to the simplicial chain complexes of certain $\Delta$-complexes, constructed as in the proof of Thm.(\ref{prop:cycles}): Take the union of all $\mb P_G$ for $G=(G,c)\in \mb G_{n,s}$ and glue them together along faces that correspond to isomorphic colored graphs (c.f.\ \cite{Marko,MaxMarko} for details).
Since in the one loop case there are no tadpoles to collapse, every edge-collapse represents such a face relation, and vice versa.
The disjoint union of all simplices $\mb P_G$ associated to holocolored graphs in $\mb G_{1,s}$, glued together via the above described face relations, forms thus a pure\footnote{A $\Delta$-complex of dimension $d$ is \textit{pure} if every simplex is the face of a $(d+1)$-simplex.} $\Delta$-complex of dimension $s-1$, the \textit{moduli space of holocolored one loop graphs with $s$ legs} $\m {MHG}_{1,s}$.

Clearly, there is a one-to-one correspondence between the simplices in $\m {MHG}_{1,s}$ and the elements of $HG_{1,s}$ under which the map $d$ transforms into the simplicial boundary map. This induces a chain isomorphism 
\begin{equation*}
 HG_{1,s} \overset{\cong}{\longrightarrow} C_*(\m {MHG}_{1,s};\mb Z_2),
\end{equation*}
so that
\begin{equation*}
 H_*(HG_{1,s})  \cong H_*(\m {MHG}_{1,s};\mb Z_2).
\end{equation*}
Moreover, if we define orientations on graphs by ordering their internal edges, this isomorphism extends to the case of integer coefficients \cite{MaxMarko}.

The top-dimensional facets of $\m {MHG}_{1,s}$ may be represented by cyclic graphs with $s$ labeled vertices/legs and $s$ colors on their internal edges. Traveling from one such facet to its neighbor is in this representation expressed by exchanging two neighboring legs while keeping the same color pattern on the edges. We call this operation a \textit{leg-flip}. See Fig.(\ref{fig:cyclerep}) for an example.
\begin{figure}[h!]
\hspace{0.8cm}
\begin{tikzpicture}[scale=1]
\coordinate (v0) at (0,-.8);
\draw[fill=white] (v0) circle;
  \coordinate  (v1) at (0,0); 
   \coordinate  (v2) at (1.5,0);
   \coordinate (v3) at (1.5,1); 
   \coordinate (v4) at (0,1); 
  \coordinate (p1) at (-0.5,0); 
   \coordinate (p2) at (2,0); 
   \coordinate (p3) at (2,1); 
   \coordinate (p4) at (-0.5,1); 
   \draw (p1) -- (v1) node[pos=0.1,left]{$p_2$};
   \draw (p2) -- (v2) node[pos=0.1,right]{$p_3$} node[pos=0,xshift=0.9cm, yshift=0.5cm]{$\leftrightarrow$};
   \draw (p3) -- (v3) node[pos=0.1,right]{$p_4$};
   \draw (p4) -- (v4) node[pos=0.1,left]{$p_1$};
       \draw[red] (v1) -- (v2) ;
       \draw[blue] (v2) -- (v3) ;
       \draw[teal] (v3) -- (v4) ;
       \draw[cyan] (v4) -- (v1);
  \filldraw[fill=black] (v1) circle (0.05);
  \filldraw[fill=black] (v2) circle (0.05);
  \filldraw[fill=black] (v3) circle (0.05);
  \filldraw[fill=black] (v4) circle (0.05);
  \end{tikzpicture}  
  \begin{tikzpicture}[scale=1]
\coordinate (v0) at (0,-.8);
\draw[fill=white] (v0) circle;
  \coordinate  (v1) at (0,0.5); 
   \coordinate  (v2) at (1.5,0);
   \coordinate (v3) at (1.5,1); 
   \coordinate (v4) at (0,0.5); 
  \coordinate (p1) at (-0.5,0); 
   \coordinate (p2) at (2,0); 
   \coordinate (p3) at (2,1); 
   \coordinate (p4) at (-0.5,1); 
   \draw (p1) -- (v1) node[pos=0.1,left]{$p_2$};
   \draw (p2) -- (v2) node[pos=0.1,right]{$p_3$} node[pos=0,xshift=0.9cm, yshift=0.5cm]{$\leftrightarrow$};
   \draw (p3) -- (v3) node[pos=0.1,right]{$p_4$};
   \draw (p4) -- (v4) node[pos=0.1,left]{$p_1$};
       \draw[red] (v1) -- (v2) ;
       \draw[blue] (v2) -- (v3) ;
       \draw[teal] (v3) -- (v4) ;
       \draw[cyan] (v4) -- (v1);
  \filldraw[fill=black] (v1) circle (0.05);
  \filldraw[fill=black] (v2) circle (0.05);
  \filldraw[fill=black] (v3) circle (0.05);
  \filldraw[fill=black] (v4) circle (0.05);
  \end{tikzpicture} 
 \begin{tikzpicture}
 \coordinate (v0) at (0,-.8);
\draw[fill=white] (v0) circle;
  \coordinate  (v1) at (0,0); 
   \coordinate  (v2) at (1.5,0);
   \coordinate (v3) at (1.5,1); 
   \coordinate (v4) at (0,1); 
  \coordinate (p1) at (-0.5,0); 
   \coordinate (p2) at (2,0); 
   \coordinate (p3) at (2,1); 
   \coordinate (p4) at (-0.5,1); 
   \draw (p1) -- (v1) node[pos=0.1,left]{$p_2$};
   \draw (p2) -- (v2) node[pos=0.1,right]{$p_3$};
   \draw (p3) -- (v3) node[pos=0.1,right]{$p_4$};
   \draw (p4) -- (v4) node[pos=0.1,left]{$p_1$};
       \draw[red] (v4) -- (v2) ;
       \draw[blue] (v2) -- (v3) ;
       \draw[teal] (v1) -- (v3) ;
       \draw[cyan] (v4) -- (v1);
  \filldraw[fill=black] (v1) circle (0.05);
  \filldraw[fill=black] (v2) circle (0.05);
  \filldraw[fill=black] (v3) circle (0.05);
  \filldraw[fill=black] (v4) circle (0.05);
  \end{tikzpicture}  
  \newline
    \begin{tikzpicture}[scale=0.7]
  \coordinate  (v1) at (1,0); 
  \coordinate  (p1) at (1.5,0); 
     \coordinate  (v2) at (0,1);
     \coordinate  (p2) at (0,1.5);
       \coordinate (v3) at (-1,0); 
       \coordinate (p3) at (-1.5,0);
          \coordinate (v4) at (0,-1);   
          \coordinate (p4) at (0,-1.5);
   \draw[red] (v1) arc[radius= 1cm, start angle= 0, end angle= 90];
   \draw[blue] (v2) arc[radius= 1cm, start angle= 90, end angle= 180];
   \draw[teal] (v3) arc[radius= 1cm, start angle= 180, end angle= 270];
   \draw[cyan] (v4) arc[radius= 1cm, start angle= 270, end angle= 360];
   \draw (p1) -- (v1) node[pos=0.1,right]{$p_2$} node[pos=0,xshift=1.05cm, yshift=0.1cm]{$\overset{\sigma_1}{\longleftrightarrow}$};
   \draw (p2) -- (v2) node[pos=0.1,right]{$p_3$};
   \draw (p3) -- (v3) node[pos=0.1,left]{$p_4$};
   \draw (p4) -- (v4) node[pos=0.1,right]{$p_1$};
   \filldraw[fill=gray] (v1) circle (0.05);
  \filldraw[fill=gray] (v2) circle (0.05);
  \filldraw[fill=gray] (v3) circle (0.05);
  \filldraw[fill=gray] (v4) circle (0.05);
  \end{tikzpicture} 
  \begin{tikzpicture}[scale=0.7]
    \coordinate  (v1) at (1,0); 
  \coordinate  (p1) at (1.5,0); 
     \coordinate  (v2) at (0,1);
     \coordinate  (p2) at (0,1.5);
       \coordinate (v3) at (-1,0); 
       \coordinate (p3) at (-1.5,0);
          \coordinate (v4) at (0,-1);   
          \coordinate (p4) at (0,-1.5);
   \draw[red] (v1) arc[radius= 1cm, start angle= 0, end angle= 90];
   \draw[blue] (v2) arc[radius= 1cm, start angle= 90, end angle= 180];
   \draw[teal] (v3) arc[radius= 1cm, start angle= 180, end angle= 270];
   \draw[cyan] (v4) arc[radius= 1cm, start angle= 270, end angle= 360];
   \draw (p1) -- (v1) node[pos=0.1,right]{$p_1$};
   \draw (p2) -- (v2) node[pos=0.1,right]{$p_3$};
   \draw (p3) -- (v3) node[pos=0.1,left]{$p_4$};
   \draw (p4) -- (v4) node[pos=0.1,right]{$p_2$};
   \filldraw[fill=gray] (v1) circle (0.05);
  \filldraw[fill=gray] (v2) circle (0.05);
  \filldraw[fill=gray] (v3) circle (0.05);
  \filldraw[fill=gray] (v4) circle (0.05);
  \end{tikzpicture} 
   \caption{Two graphs representing two neighboring facets of $\m {MHG}_{1,4}$ and their representatives, related by a leg-flip $\sigma_1$, interchanging the legs carrying momenta $p_1$ and $p_2$. In geometric terms, we travel in $\m {MHG}_{1,4}$ from one facet to the other through the codimension one face represented by the graph obtained from the two in the figure by collapsing the cyan colored edge.}
	\label{fig:cyclerep}
\end{figure}
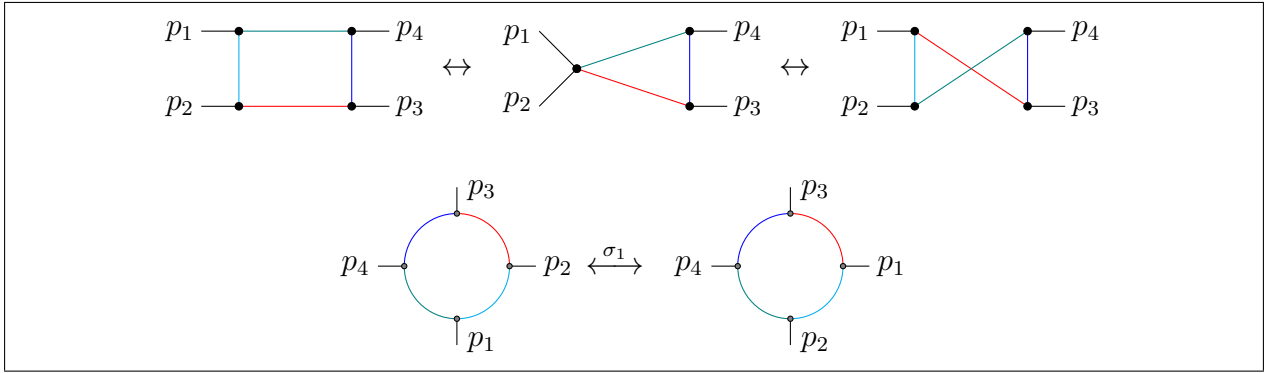
In the one loop case every permutation of legs can be expressed as a sequence of leg-flips. This generates a free $\Sigma_s$-action on $\m {MHG}_{1,s}$.

\begin{prop}
The action of $\Sigma_s$ on (the top-dimensional facets of) $\m {MHG}_{1,s}$ is free with $\frac{1}{2}(s-1)!$ different orbits.
\end{prop}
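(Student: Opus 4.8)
The plan is to make the combinatorial description of the top-dimensional facets fully explicit and then reduce the claim to a standard count of colored necklaces, with freeness falling out directly from the holocoloring hypothesis. First I would record that in top degree $|(G,c)| = s-1$, i.e.\ $|E_G| = s$, every admissible one-loop graph is three-regular: the one-loop condition $h_1(G) = e_G - v_G + 1 = 1$ forces $e_G = v_G$, while summing valences gives $3 v_G = 2 e_G + s$, whence $v_G = e_G = s$ and each vertex carries exactly one leg. Thus a top-dimensional facet is an $s$-cycle on vertices $v_1,\ldots,v_s$ equipped with a bijection $\alpha$ assigning the leg labels $p_1,\ldots,p_s$ to the vertices and a bijection $\beta$ assigning the $s$ colors of $[3(n-1)+s]=[s]$ to the $s$ internal edges, all taken up to graph isomorphism. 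The graph isomorphisms of the $s$-cycle are precisely the dihedral group $D_s$ of order $2s$ (rotations and reflections), acting simultaneously on vertex positions and edge positions; I denote these actions $g_V$ and $g_E$.

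Next I would count the facets. Fixing a reference cyclic labeling of the vertices identifies the set of labeled configurations with $\mathrm{Bij}([s],\{p_i\}) \times \mathrm{Bij}([s],[s])$, a set of cardinality $(s!)^2$. The key point is that $D_s$ acts \emph{freely} on these configurations: if a nontrivial $g \in D_s$ fixed $(\alpha,\beta)$, then $\beta = \beta \circ g_E$, and since $\beta$ is a bijection this forces $g_E = \mathrm{id}$, hence $g = \mathrm{id}$ (the $D_s$-action on edges is faithful). Therefore the number of top-dimensional facets is $(s!)^2/(2s) = s!\,(s-1)!/2$.

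Then I would establish freeness of the $\Sigma_s$-action, which relabels legs by $\tau\cdot[(\alpha,\beta)] = [(\tau\circ\alpha,\beta)]$ (this descends to isomorphism classes because relabeling legs commutes with the $D_s$-symmetries). If $\tau$ fixes a facet, there is $g \in D_s$ with $\tau\circ\alpha = \alpha\circ g_V$ and $\beta = \beta\circ g_E$; exactly as before the second identity forces $g = \mathrm{id}$, whereupon the first collapses to $\tau\circ\alpha = \alpha$, giving $\tau = \mathrm{id}$. With freeness established, orbit--stabilizer makes the count immediate: the number of orbits equals the number of facets divided by $|\Sigma_s| = s!$, namely $\frac{s!\,(s-1)!/2}{s!} = \frac{(s-1)!}{2}$. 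Equivalently, a $\Sigma_s$-orbit is an edge-coloring of the $s$-cycle by $s$ distinct colors with the legs forgotten, i.e.\ a necklace with $s$ distinctly colored edges up to rotation and reflection, whose classical count is $\frac12(s-1)!$. (As a sanity check: $s=3$ gives one orbit of $6$ triangles and $s=4$ gives three orbits of $24$ boxes each, matching the statements around \eqref{eq:tophom} and Ex.(\ref{eg:holotriangle}).)

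The only genuine obstacle here is bookkeeping rather than conceptual: one must verify that a ``facet'' really means an isomorphism class of colored, leg-labeled graphs, that graph isomorphisms realize the \emph{full} dihedral group so that reflections are included (this is precisely what produces the factor $\frac12$), and that the two successive quotients --- by $D_s$ to form facets and by $\Sigma_s$ to form orbits --- are compatible so that the $\Sigma_s$-action is well defined on classes. All three points are controlled by the holocoloring hypothesis, which is exactly what renders every relevant stabilizer trivial and thus simultaneously yields both the freeness and the orbit count.
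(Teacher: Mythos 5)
Your proof is correct and follows essentially the same route as the paper's: both identify the top-dimensional facets with leg-labeled, edge-colored $s$-cycles, invoke the dihedral automorphism group $D_s$ of order $2s$ together with the $s!$ holocolorings to get $\frac{1}{2}(s-1)!$ classes, and derive freeness from the fact that a holocolored, leg-labeled cycle has trivial stabilizer. The only difference is presentational -- you count all $\frac{1}{2}\,s!\,(s-1)!$ facets explicitly and divide by $|\Sigma_s|$, whereas the paper identifies the orbit set directly with the non-isomorphic colorings of $C_s$.
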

\begin{proof}
We use the cyclic representation introduced above. A cycle graph $C_s$ on $s$ vertices has the dihedral group $D_s$ as its group of automorphisms. Since $|D_n|=2s$ and there are $s!$ possible colorings of its edges, we have $\frac{1}{2}(s-1)!$ non-isomorphic colorings. 

Take any such coloring $c$ and consider the colored graph $(C_s,c)$. In addition to the coloring of its edges, the graph has $s$ labeled legs attached to it, which is equivalent to an order on its $s$ vertices. Thus, every edge and every vertex of $(C_s,c)$ is uniquely labeled, so this graph cannot have any automorphisms. In particular, for two non-isomorphic choices of colorings, there is no permutation of its vertices that translates one into the other. Hence, the action is free, and its set of coinvariants consists of the $\frac{1}{2}(s-1)!$ non-isomorphic colorings of $C_s$. 
\end{proof}

These orbits are full $(s-1)$-dimensional subcomplexes of $\m {MHG}_{1,s}$ that intersect each other only in faces of codimension greater than two. Thus, for calculating homology in dimension $s-1$ it suffices to consider each subcomplex individually.
Eq.\eqref{eq:tophom} follows now from the simple observation that in each subcomplex each $(s-2)$-dimensional simplex appears as a codimension one face of exactly two top-dimensional facets, related by a leg-flip. Therefore, the sum over all elements of a $\Sigma_s$-orbit represents a homology class.\footnote{It may be interpreted as the fundamental class of the non-singular part of $\m {MHG}_{1,s}$ that is covered by this orbit.} Moreover, all classes arise in such manner.

\subsubsection{Digression: Homology with integer coefficients}

The result holds also for homology with integer coefficients, that is, there are no torsion elements in $H_*(HG_{1,s};\mb Z) \cong H_*(\m {MHG}_{1,s};\mb Z)$.
To see this we need to introduce the notion of a two-coloring of a $\Delta$-complex.

\begin{defn}
 Let $K$ be a $\Delta$-complex. A \textit{two-coloring} of $K$ is an assignment of labels in $\{+,-\}$ to each of its top-dimensional facets, such that no two facets that are both labeled by $+$ or $-$, share a codimension one face.
 A $\Delta$-complex $K$ is called \textit{two-colorable} if it admits a two-coloring.
\end{defn}

We will deduce \eqref{eq:tophom} with integral coefficients by showing that the complexes $\m {MHG}_{1,s}$ are two-colorable. This, together with the above result for $\mb Z_2$-coefficients, implies that we can orient each simplex in a $\Sigma_s$-orbit in such a way that the (oriented) boundary of the sum of its (oriented) elements vanishes.

By the same reasoning as above, to find a two-coloring of the total complex $\m {MHG}_{1,s}$, it suffices to consider each of its $\frac{1}{2}(s-1)!$ $\Sigma_s$-invariant subcomplexes. For this let us look at the dual graphs of these subcomplexes. 
Here, the \textit{dual graph} of a pure $\Delta$-complex $K$ is the graph $G_K$ defined by
\begin{align*}
V(G_K) & := \{ \Delta \mid \Delta \text{ is a top-dimensional facet of $K$} \}, \\
E(G_K) & := \{ (\Delta,\Delta') \mid \Delta \cap \Delta' \text{ is a codimension one face} \}. 
\end{align*}

\begin{figure}[t]
	\includegraphics[width=5cm]{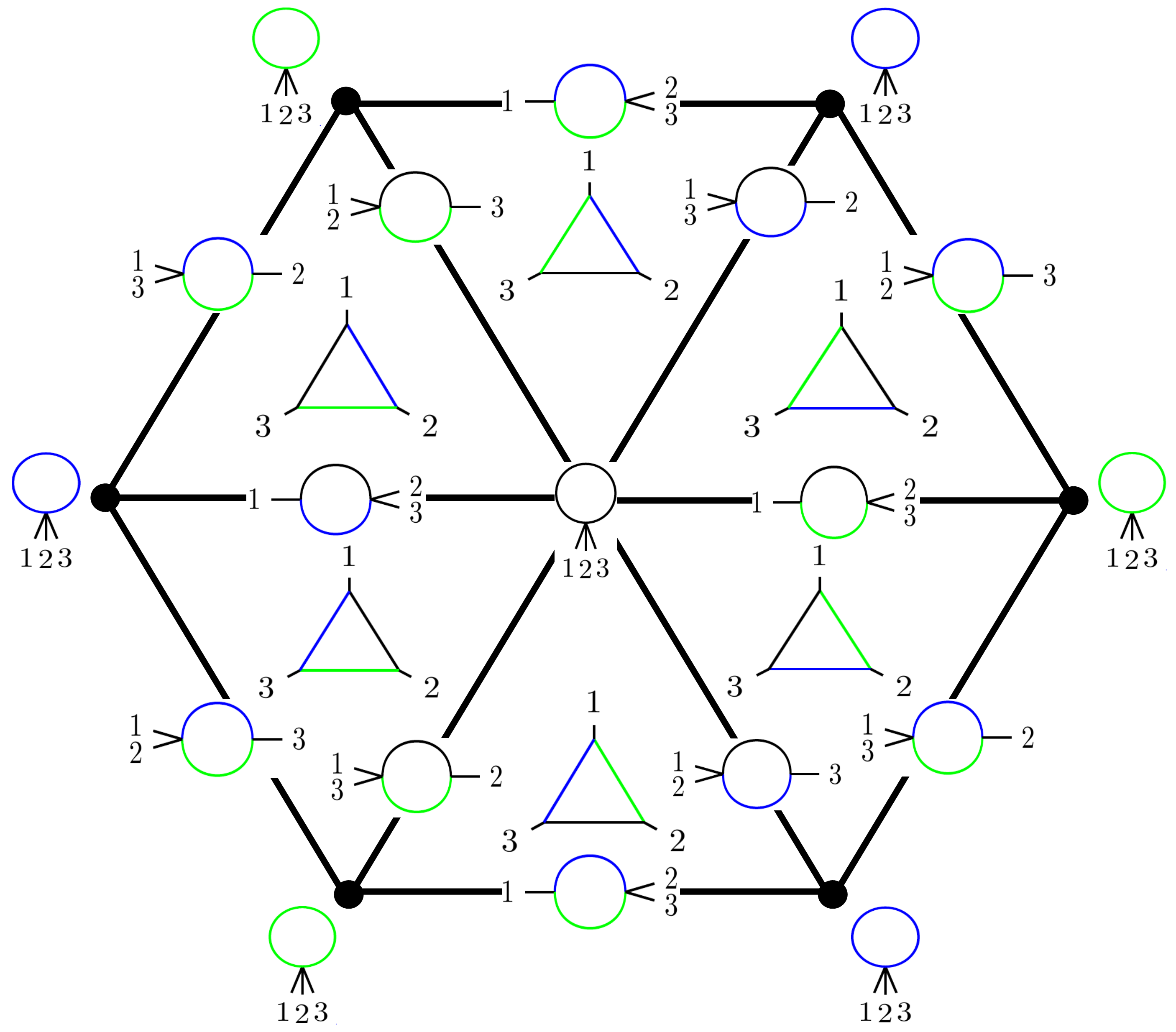}
\begin{tikzpicture}[scale=1]
\coordinate (v0) at (-1.5,-2.6);
 \filldraw[fill=white] (v0) circle;
  \coordinate  (v1) at (0,0); 
   \coordinate  (v2) at (1.5,.6);
   \coordinate (v3) at (3,0); 
   \coordinate (v4) at (3,-1); 
   \coordinate (v5) at (1.5,-1.6); 
   \coordinate (v6) at (0,-1); 
   \draw (v1) -- (v2) node[pos=0.5,above]{$\sigma_1$};
   \draw (v2) -- (v3) node[pos=0.6,above]{$\sigma_1 \sigma_2 \sigma_1$};
   \draw (v3) -- (v4) node[pos=0.5,right]{$\sigma_2$};
   \draw (v4) -- (v5) node[pos=0.5,below]{$\sigma_1$};
   \draw (v5) -- (v6) node[pos=0.5,below]{$\sigma_1 \sigma_2 \sigma_1$};
   \draw (v6) -- (v1) node[pos=0.5,left]{$\sigma_2$};
       \draw (v1) -- (v4) node[pos=0.3]{$\sigma_1 \sigma_2 \sigma_1$};
       \draw (v2) -- (v5) node[pos=0.7]{$\sigma_2$};
       \draw (v3) -- (v6) node[pos=0.23]{$\sigma_1$};
  \filldraw[fill=gray] (v1) circle (0.07);
  \filldraw[fill=gray] (v2) circle (0.07);
  \filldraw[fill=gray] (v3) circle (0.07);
  \filldraw[fill=gray] (v4) circle (0.07);
  \filldraw[fill=gray] (v5) circle (0.07);
  \filldraw[fill=gray] (v6) circle (0.07);
  \end{tikzpicture} 
   \caption{The $\Delta$-complex $\m {MHG}_{1,3}$ and its dual graph with edges labeled by the corresponding leg-flips ($\sigma_i$ flips legs $i$ and $i+1$).}
	\label{fig:s3r}
\end{figure}

In the present case, the dual graph of a $\Sigma_s$-orbit can be described as follows:
Its vertices are given by cyclic graphs with $s$ edges and $s$ legs, the edges colored by a fixed color pattern (there are $\frac{1}{2}(s-1)!$ non-isomorphic choices, corresponding to each orbit/subcomplex), the legs labeled by elements in $\{1, \ldots, s\}$.
Two such vertices are adjacent if and only if the corresponding cyclic graphs are related by a leg-flip. It is therefore a simple graph. The integral version of the formula in Eq.\eqref{eq:tophom} now follows from

\begin{thm}\label{thm:two-colors}
For all $s\geq 1$ the $\Delta$-complex $\m {MHG}_{1,s}$ is two-colorable. 
\end{thm}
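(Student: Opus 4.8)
The plan is to reformulate two-colorability as the existence of a global $\pm$ sign invariant on top facets that flips across every codimension-one face, and then to write such an invariant down explicitly. By the definition of a two-coloring, assigning labels in $\{+,-\}$ to the top-dimensional facets of $\m{MHG}_{1,s}$ so that no two equally labeled facets share a codimension-one face is exactly a proper vertex $2$-coloring of the dual graph, i.e.\ bipartiteness of that graph. Since a leg-flip fixes the edge-coloring and only permutes legs, every edge of the dual graph stays inside a single $\Sigma_s$-orbit, so (as with the orbit decomposition used above) it is enough to color each orbit; in fact the invariant I construct is defined uniformly on all facets at once.

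The key combinatorial input is to pin down adjacency in the dual graph. I would first show that two top facets $G,G'$ share a codimension-one face if and only if that face is $G/e\cong G'/e'$ for edges $e,e'$, where holocoloring forces $e$ and $e'$ to carry the \emph{same} color (otherwise $G/e$ and $G'/e'$ would use different sets of $s-1$ colors). Collapsing such an edge merges two adjacent trivalent vertices into a single $4$-valent vertex carrying two legs; expanding it back respects the cycle structure and hence admits exactly two solutions, which differ precisely by transposing the two legs sitting on the endpoints of the reinserted edge. Thus every edge of the dual graph realizes a \emph{single} transposition of leg labels — an odd permutation — consistent with the labels $\sigma_1,\sigma_2,\sigma_1\sigma_2\sigma_1$ appearing in Fig.(\ref{fig:s3r}) and the leg-flip of Fig.(\ref{fig:cyclerep}).

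With adjacency understood, I would define the coloring. Represent a facet $F$ by a configuration on fixed cyclic positions $1,\dots,s$, namely a leg-labeling $\ell\in\Sigma_s$ (positions $\to$ labels) and an edge-coloring $c\in\Sigma_s$ (positions $\to$ colors), any two representatives of $F$ differing by the $D_s$-action on positions. Set
\[
 \eta(F):=\mathrm{sgn}(\ell)\,\mathrm{sgn}(c)\in\{\pm1\}.
\]
Well-definedness is the crux: applying $\tau\in D_s$ sends $\ell\mapsto\ell\circ\tau^{-1}$ and $c\mapsto c\circ\tau^{-1}$ simultaneously, so $\eta$ is multiplied by $\mathrm{sgn}(\tau)^{2}=1$. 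Finally, a leg-flip multiplies $\ell$ by a transposition (previous paragraph) and leaves $c$ unchanged, so $\eta$ changes sign across every edge of the dual graph. Declaring $F$ to be $+$ when $\eta(F)=+1$ and $-$ otherwise therefore yields a proper two-coloring, establishing the theorem.

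The main obstacle is exactly the well-definedness in the last step. A facet is only a dihedral equivalence class of configurations, and neither $\mathrm{sgn}(\ell)$ nor $\mathrm{sgn}(c)$ alone descends to facets: a rotation contributes a factor $(-1)^{s-1}$ and a reflection its own sign. The decisive point is that the coloring and the leg-labeling transform under the \emph{same} element of $D_s$, so only their \emph{product} is invariant; getting this coupling right, together with the verification that a shared codimension-one face corresponds to exactly one leg-transposition, is where the real content sits. The remaining ingredients — that two-colorability is bipartiteness of the dual graph, and that edges never cross $\Sigma_s$-orbits — are formal given the structure established above.
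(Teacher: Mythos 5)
Your overall strategy --- reduce two-colorability to bipartiteness of the dual graph and identify each dual edge with a single transposition of leg labels --- is the same skeleton as the paper's proof. The genuine gap is in the well-definedness of your explicit invariant $\eta(F)=\mathrm{sgn}(\ell)\,\mathrm{sgn}(c)$. A facet is a $D_s$-orbit of configurations, but an element $\tau\in D_s$ acts on the set of vertex positions (where $\ell$ lives) and on the set of edge positions (where $c$ lives) by \emph{different} permutations of $\{1,\dots,s\}$, and for even $s$ these two permutations can have opposite signs. Concretely, for $s=4$ the reflection through two opposite vertices is a single transposition on vertices (sign $-1$) but a product of two transpositions on edges (sign $+1$); so $\eta$ picks up a factor $\mathrm{sgn}(\tau_V)\,\mathrm{sgn}(\tau_E)=-1$ and is not constant on the $D_s$-orbit representing a facet. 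Your computation $\mathrm{sgn}(\tau)^2=1$ silently identifies the two permutation representations of $D_s$. For odd $s$ the signs do agree (rotations are $s$-cycles on both sets, reflections are products of $(s-1)/2$ transpositions on both), so your invariant works there, but it breaks for every even $s\geq 4$ --- including the case $s=4$ illustrated in Fig.(\ref{fig:cyclerep}). The preliminary steps (two-colorability equals bipartiteness of the dual graph, dual edges are single leg-flips and stay inside a $\Sigma_s$-orbit) are correct and agree with the paper.

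The theorem is of course still true, and the repair is essentially the paper's argument: work one $\Sigma_s$-orbit at a time, use that the $\Sigma_s$-action on the facets of an orbit is free to identify the orbit with a $\Sigma_s$-torsor, and observe that every dual edge acts by a transposition, hence an odd permutation. Any cycle in the dual graph is then a word in transpositions representing the identity and so has even length; there are no odd cycles, and Prop.(\ref{prop:graphbipartite}) together with Prop.(\ref{prop:graphcolor}) finishes the proof. Equivalently, the well-defined coloring is $\mathrm{sgn}$ of the unique permutation carrying a fixed base facet of the orbit to the given one --- i.e., your $\mathrm{sgn}(\ell)$ measured against a chosen reference configuration in each orbit rather than against the raw cyclic positions, and with the factor $\mathrm{sgn}(c)$ discarded.
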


The proof relies on two propositions on the colorability of graphs, which we apply to the dual graphs of the $\Sigma_s$-invariant subcomplexes of $\m {MHG}_{1,s}$. For a definition of the graph-theoretic notions and proofs of the following two statements, see \cite{BondiMurty}.

\begin{prop}\label{prop:graphcolor}
Let $G$ be a finite simple graph. $G$ is two-colorable if and only if it is bipartite. 
\end{prop}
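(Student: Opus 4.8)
The plan is to unwind the two definitions and observe that the notions coincide essentially by construction, the only genuine content being the standard parity argument that reformulates bipartiteness as the absence of odd cycles (which is the form actually applied to the dual graphs $G_K$). Matching the paper's conventions, a \emph{two-coloring} of the simple graph $G$ means a proper vertex two-coloring, i.e.\ a map $c:V(G)\to\{+,-\}$ with $c(u)\neq c(v)$ whenever $uv\in E(G)$ -- this is exactly the notion obtained by dualizing a two-coloring of a $\Delta$-complex, under which top-dimensional facets become vertices and shared codimension-one faces become edges. Recall that $G$ is \emph{bipartite} if $V(G)=A\sqcup B$ with every edge having one endpoint in $A$ and one in $B$.

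First I would prove the biconditional directly. Given a proper two-coloring $c$, set $A:=c^{-1}(+)$ and $B:=c^{-1}(-)$; properness says precisely that no edge lies inside $A$ or inside $B$, so $\{A,B\}$ is a bipartition. Conversely, given a bipartition $V(G)=A\sqcup B$, the map sending every vertex of $A$ to $+$ and every vertex of $B$ to $-$ is a proper two-coloring, since by assumption no edge joins two vertices of the same part. This already yields the statement of Prop.(\ref{prop:graphcolor}).

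For the later application to the dual graphs of the $\Sigma_s$-orbits it is convenient to record the operative criterion that $G$ (which we may take connected, handling components separately) is bipartite if and only if it has no cycle of odd length. The forward direction is immediate, as traversing an edge switches the color, so every closed walk has even length. For the converse I would fix a spanning tree $T$ of $G$, root it at a vertex $r$, and color each $v$ by the parity of its distance to $r$ in $T$; the absence of odd cycles forces every non-tree edge to join vertices of opposite parity, so this coloring is proper.

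I do not expect a genuine obstacle: the assertion is a standard textbook fact, as the citation \cite{BondiMurty} signals, and once the definitions are aligned the forward implication is a tautology while the converse reduces to the elementary spanning-tree/parity argument above. The only point requiring care is the bookkeeping that identifies the paper's $\Delta$-complex two-colorability with proper vertex two-colorability of the dual graph, which is the correspondence already set up in the definition of $G_K$.
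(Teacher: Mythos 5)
Your proof is correct. Note that the paper offers no proof of this proposition at all: it explicitly defers both Prop.(\ref{prop:graphcolor}) and Prop.(\ref{prop:graphbipartite}) to the textbook \cite{BondiMurty}, and your argument --- identifying a two-coloring of the $\Delta$-complex with a proper vertex two-coloring of its dual graph, observing that such a coloring is literally the same data as a bipartition, and then establishing the odd-cycle criterion via a spanning-tree parity argument --- is exactly the standard treatment the citation points to. The only minor bookkeeping remark is that your spanning-tree paragraph is really a proof of the separate Prop.(\ref{prop:graphbipartite}) rather than of the statement at hand, but including it does no harm and is indeed the operative fact used in the proof of Thm.(\ref{thm:two-colors}).
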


\begin{prop}\label{prop:graphbipartite}
Let $G$ be a finite simple graph. $G$ is bipartite if and only if it contains no odd cycles.
\end{prop}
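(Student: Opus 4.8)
The plan is to prove the two implications of this standard characterization separately, after first reducing to the connected case: a graph is bipartite precisely when each of its connected components is bipartite (take the union of the respective parts), and every cycle lies entirely within a single component, so the absence of odd cycles is likewise a componentwise condition. I may therefore assume throughout that $G$ is connected.

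First I would dispatch the easy direction, that a bipartite graph contains no odd cycle. Fix a bipartition $V(G)=A\sqcup B$ in which every edge joins $A$ to $B$. Along any cycle $v_0 v_1 \cdots v_k=v_0$ the two endpoints of each edge lie on opposite sides, so the vertices alternate between $A$ and $B$ as the cycle is traversed. Closing the cycle forces $v_k$ and $v_0$ onto the same side, which is possible only when $k$ is even. Hence every cycle has even length, and in particular there are no odd cycles.

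The substantive direction is the converse: if $G$ has no odd cycle, then it is bipartite. Here I would fix a root $r\in V(G)$, write $d(v)$ for the graph distance from $r$ to $v$, and propose the bipartition determined by the parity of $d$, placing $v$ in $A$ when $d(v)$ is even and in $B$ when $d(v)$ is odd. Everything reduces to checking that this is a proper $2$-coloring, i.e.\ that no edge is monochromatic. The elementary fact I would invoke is that any edge $\{u,v\}$ satisfies $|d(u)-d(v)|\le 1$, since adjoining the edge to a shortest path from $r$ to $u$ gives a walk from $r$ to $v$ of length $d(u)+1$, and symmetrically. A monochromatic edge would thus force $d(u)=d(v)$, and it is precisely at this point that the odd-cycle hypothesis must enter.

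To produce the contradiction I would fix a shortest-path (BFS) tree $T$ rooted at $r$, so that the unique $T$-path from $r$ to each vertex $w$ has length $d(w)$. Given a hypothetical edge $\{u,v\}$ with $d(u)=d(v)$, let $w$ be the last common vertex of the two tree paths from $r$ to $u$ and from $r$ to $v$; below $w$ these branches enter disjoint subtrees, so the segments from $w$ to $u$ and from $w$ to $v$ meet only at $w$ and each has length $d(u)-d(w)=d(v)-d(w)$. Concatenating them with the edge $\{u,v\}$ yields a genuine cycle (here using $u\ne v$ in a simple graph, which also gives $w\ne u,v$) of length $2\big(d(u)-d(w)\big)+1$, which is odd, contradicting the hypothesis. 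Hence no monochromatic edge exists and the parity coloring is a bipartition. The main obstacle is exactly this final step: one must exhibit an honest odd \emph{cycle} rather than merely an odd closed walk, which is why the argument is routed through the internal disjointness of the two branches of $T$; once that is secured, the parity argument closes immediately.
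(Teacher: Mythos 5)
Your proof is correct, and there is nothing in the paper to compare it against: the authors do not prove Prop.(\ref{prop:graphbipartite}) at all, but simply refer the reader to \cite{BondiMurty} for it. Your argument is the canonical one found in such references --- reduce to the connected case, take the parity-of-distance $2$-coloring from a root, and show that a monochromatic edge $\{u,v\}$ (necessarily with $d(u)=d(v)$, by the $|d(u)-d(v)|\le 1$ observation) closes up an odd cycle through the last common vertex $w$ of the two BFS-tree branches. You also handle the one genuinely delicate point correctly, namely upgrading the odd closed walk to an honest odd cycle via the internal disjointness of the tree segments and the checks $u\ne v$ and $w\ne u,v$, so the proof is complete as written.
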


\begin{proof}[Proof of Theorem \ref{thm:two-colors}]
Let $G$ be the dual graph of one of the $\Sigma_s$-invariant subcomplexes of $\m {MHG}_{1,s}$, determined by fixing a color pattern. Since every vertex corresponds to a leg configuration and the adjacency relation in $G$ is given by leg-flips, we have an induced $\Sigma_s$-action on $G$. Therefore, cycles in $G$ are in one-to-one correspondence with closed orbits of the $\Sigma_s$-action. 

Since this action is free (it is induced by the free action of $\Sigma_s$ on $\m {MHG}_{1,s}$), the only way to form a cycle is by a relation in the presentation of $\Sigma_s$ with leg-flips.
 Using the well-known fact that
\begin{equation*}
\Sigma_s = \langle \sigma_1, \ldots, \sigma_{s-1} \mid \sigma_i^2=e, \ \sigma_i\sigma_{i+1} \sigma_i= \sigma_{i+1} \sigma_i \sigma_{i+1} , \ \sigma_i \sigma_j = \sigma_j \sigma_i \text{ for } |i-j|>1 \rangle,
\end{equation*}
we deduce that the only possible cycles in $G$ are of length six (the cycles of length two are trivial). Applying Prop.(\ref{prop:graphcolor}) and Prop.(\ref{prop:graphbipartite}) finishes the proof.
\end{proof}

For the lower degree homology groups of $\m {MHG}_{1,s}$ partial results exist from computer calculations. These, together with lists of generators can be found in \cite{MaxMaster}. 

Recall that in Sec.(\ref{doneloop}) we gave an explicit formula for the variation associated to singularities of one loop graphs, cf.\ Eq.\eqref{Coverbeta}. It shows that graphs with a common boundary term share the location of reduced singularities and the coresponding variation; the latter is expressed by a single function with only its constant coefficients depending on the distribution of colors/masses on the graph.

\subsubsection{Higher loop numbers}\label{sss:higherloops}

For higher loop numbers the homology of $HG_{n,s}$ is not known. If $n>1$, we cannot use the above described connection to a moduli space of colored graphs. This is due to the restrictions on edge-collapses which are not allowed to change the loop number of graphs. The resulting moduli spaces become thus cell complexes with ``missing faces'', also called \textit{faces at infinity}. As a consequence, the interpretation of $(HG,d)$ as the (simplicial) chain complex of a moduli space of graphs breaks down and we cannot use results on the topology of these spaces (which, for example, would guarantee the existence of non-trivial homology classes in certain degrees). 

However, since we are interested in cycles, not homology classes, we can construct cycles in $HG_{n,s}$ from cycles with lower loop numbers. There are two promising approaches: 
\begin{itemize}
 \item via the pre-Lie/operadic/dgla structure on Feynman graphs which by 
 \begin{equation*}
  d[G,H]=[dG,H]+(-1)^{|G|}[G,dH] 
 \end{equation*}
maps cycles to cycles; c.f.\ Thm.(\ref{dCCC}).
 \item via so-called assembly maps, as used in \cite{CoHaKaVo} in the context of Outer space, where new cycles are generated by gluing together graphs along their legs, i.e., by maps
 \begin{equation*}
  HG_{n_1,s_1} \otimes \cdots \otimes HG_{n_k,s_k} \longrightarrow HG_{n,s} \text{ with } n>n_1 + \ldots +n_k, \ s< s_1 + \ldots+ s_k.
 \end{equation*}
 See \cite{CoHaKaVo} for details. In the presence of colors and legs, already the simplest examples become very bulky. For an unphysical example, consider two bubble graphs $B_2$ merged along their legs, forming a cycle with $n=3$, $s=0$.
\end{itemize}
A detailed study of these ideas is left to future work.

\subsection{General colored graphs}

In principle we may set up a similar machine for the case where two or more colors/masses are equal. The only, but severe, complication is that this introduces symmetries via graph automorphisms into the picture. As a consequence, the corresponding graph complex detects too many relations because some graphs may cancel each other by symmetry reasons. 

\begin{eg}\label{eg:bananas}
 Let $B_k$ be the banana or melon graph on $k$-edges, all colored by the same color, 
 \begin{equation*}
  B_k=
 \raisebox{-.5cm}{ \begin{tikzpicture}[scale=0.7]
  \coordinate  (v1) at (0.3,0); 
   \coordinate  (v2) at (2.5,0);
   \coordinate (p1) at (0,0); 
   \coordinate (p2) at (2.8,0); 
   \draw (p1) to (v1) node[xshift=-.45cm]{$p_1$};
   \draw (p2) to (v2);
   \draw (v1) to[out=80,in=100] (v2) node[xshift=.5cm]{$p_2$};
   \draw (v1) to[out=30,in=150] (v2) node[xshift=-.75cm,yshift=0.1cm]{$\vdots$};
   \draw (v1) to[out=-30,in=-150] (v2);
   \draw (v1) to[out=-80,in=-100] (v2);
  \filldraw[fill=black] (v1) circle (0.07);
  \filldraw[fill=black] (v2) circle (0.07);
  \end{tikzpicture}}.
 \end{equation*}
 Then $dB_k=0$ if and only if $k$ is even.
\end{eg}

However, if we consider only homology classes of top degree and $s$ large enough, this problem does not show up. For instance, for one loop graphs we get similar results as in the previous section.

\begin{eg}\label{eg:twomasses}
Let us consider a theory with two particle masses, $a$ and $b$. Using \eqref{eq:dtriangle} in Ex.(\ref{eg:holotriangle}) where $x,y,z \in \{a,b\}$ we see that the element 
\begin{equation*}
 X= \raisebox{-.6cm}{ \begin{tikzpicture}[scale=.8]
   \coordinate  (v1) at (0,0.5); 
   \coordinate  (v2) at (1,0);
   \coordinate (v3) at (1,1);
   \coordinate (p1) at (-.23,0.5); 
   \coordinate (p2) at (1.2,-0.1);
   \coordinate (p3) at (1.2,1.1);
   \draw (p1) to (v1);
   \draw (p2) to (v2);
   \draw (p3) to (v3);
   \draw[red] (v1) -- (v2) node[pos=0.5, below]{$a$};
   \draw[blue] (v2) -- (v3) node[pos=0.5, right]{$b$};
   \draw[red] (v1) -- (v3) node[pos=0.5, above]{$a$};
  \filldraw[fill=black] (v1) circle (0.04) node[xshift=-.33cm]{$p_1$};
  \filldraw[fill=black] (v2) circle (0.04) node[xshift=.4cm,yshift=-.1cm]{$p_2$};
  \filldraw[fill=black] (v3) circle (0.04) node[xshift=.4cm,yshift=.1cm]{$p_3$};
  \end{tikzpicture}}
  +
    \raisebox{-.55cm}{ \begin{tikzpicture}[scale=.8]
   \coordinate  (v1) at (0,0.5); 
   \coordinate  (v2) at (1,0);
   \coordinate (v3) at (1,1);
   \coordinate (p1) at (-.23,0.5); 
   \coordinate (p2) at (1.2,-0.1);
   \coordinate (p3) at (1.2,1.1);
   \draw (p1) to (v1);
   \draw (p2) to (v2);
   \draw (p3) to (v3);
   \draw[red] (v1) -- (v2) node[pos=0.5, below]{$a$};
   \draw[red] (v2) -- (v3) node[pos=0.5, right]{$a$};
   \draw[blue] (v1) -- (v3) node[pos=0.5, above]{$b$};
  \filldraw[fill=black] (v1) circle (0.04) node[xshift=-.33cm]{$p_1$};
  \filldraw[fill=black] (v2) circle (0.04) node[xshift=.4cm,yshift=-.1cm]{$p_2$};
  \filldraw[fill=black] (v3) circle (0.04) node[xshift=.4cm,yshift=.1cm]{$p_3$};
  \end{tikzpicture}}
  +  \raisebox{-.55cm}{ \begin{tikzpicture}[scale=.8]
   \coordinate  (v1) at (0,0.5); 
   \coordinate  (v2) at (1,0);
   \coordinate (v3) at (1,1);
   \coordinate (p1) at (-.23,0.5); 
   \coordinate (p2) at (1.2,-0.1);
   \coordinate (p3) at (1.2,1.1);
   \draw (p1) to (v1);
   \draw (p2) to (v2);
   \draw (p3) to (v3);
   \draw[blue] (v1) -- (v2) node[pos=0.5, below]{$b$};
   \draw[red] (v2) -- (v3) node[pos=0.5, right]{$a$};
   \draw[red] (v1) -- (v3) node[pos=0.5, above]{$a$};
  \filldraw[fill=black] (v1) circle (0.04) node[xshift=-.33cm]{$p_1$};
  \filldraw[fill=black] (v2) circle (0.04) node[xshift=.4cm,yshift=-.1cm]{$p_2$};
  \filldraw[fill=black] (v3) circle (0.04) node[xshift=.4cm,yshift=.1cm]{$p_3$};
  \end{tikzpicture}}
  +
    \raisebox{-.55cm}{ \begin{tikzpicture}[scale=.8]
   \coordinate  (v1) at (0,0.5); 
   \coordinate  (v2) at (1,0);
   \coordinate (v3) at (1,1);
   \coordinate (p1) at (-.23,0.5); 
   \coordinate (p2) at (1.2,-0.1);
   \coordinate (p3) at (1.2,1.1);
   \draw (p1) to (v1);
   \draw (p2) to (v2);
   \draw (p3) to (v3);
   \draw[red] (v1) -- (v2) node[pos=0.5, below]{$a$};
   \draw[red] (v2) -- (v3) node[pos=0.5, right]{$a$};
   \draw[red] (v1) -- (v3) node[pos=0.5, above]{$a$};
  \filldraw[fill=black] (v1) circle (0.04) node[xshift=-.33cm]{$p_1$};
  \filldraw[fill=black] (v2) circle (0.04) node[xshift=.4cm,yshift=-.1cm]{$p_2$};
  \filldraw[fill=black] (v3) circle (0.04) node[xshift=.4cm,yshift=.1cm]{$p_3$};
  \end{tikzpicture}}
\end{equation*}
is $d$-closed. Inspecting Landau's equations for the first three graphs in the linear combination $X=G_1 + \ldots + G_4$ we find for the analytic function $\Phi(G_1 + G_2 + G_3)$ reduced singularities at $p_i^2=4a^2$, $p_i^2=0$ as well as $p_i^2=(a \pm b)^2$, $i=1,2,3$. The element $G_1+G_2+G_3$ is not $d$-closed, indicating that this sum is not ``complete'' with respect to this set of singularities. Indeed, we can add $G_4$ which has reduced singularities also along $p_i^2=4a^2$ and $p_i^2=0$. 
 \end{eg}

The preceding example holds in fact more generally. If we consider only one loop graphs with $s\geq 4$ legs and homology in degree greater than two, there are no automorphisms (no multi-edges and each vertex carries at least one leg-label). In this case we may mimic the constructions and arguments of the previous section.
In the general case, one has to keep an eye on possible symmetry-cancellations as in Ex.(\ref{eg:bananas}); see the discussion below.

We now introduce a variant of $(HG,d)$ that allows for general edge-colorings by elements of $\set m$ for $m \in \mb N$.

\begin{defn}\label{defn:cg}
 For $m,n,s \in \mb N$ define a chain complex $(CG,d)=(CG^m_{n,s},d)$ of $m$-colored graphs by
\begin{equation*}
CG=CG^m_{n,s}:= \mb Z_2 \big\langle (G,c) \mid G \in \mb G_{n,s},c:E_G \to \set{m} \big\rangle,
\end{equation*}
graded by $|(G,c)|:=|E_G|-1$, and equipped with the same differential $d$ as in Def.(\ref{defn:hg}),
\begin{equation*}
d(G,c):= \sum_{ e \in  E_G } (G/e,c_e).
\end{equation*}
\end{defn}

The basic results of the previous section, Lem.(\ref{lem:dsquared}) and Thm.(\ref{prop:cycles}), as well as all the points made thereafter, apply verbatim to the complexes $CG$. 

Moreover, for $n=1$ we have a similar interpretation of $(CG,d)$ as in the holocolored case. If $s\geq 4$ and we restrict attention to degree at least three, then this complex computes the corresponding homology groups of $\m{MCG}^m_{1,s}$, the \textit{moduli space of $m$-colored one loop graphs with $s$ legs}. For a detailed account of these moduli spaces we refer to \cite{MaxMarko}. In this case the results of \cite{MaxMarko} on the homology of $\m{MCG}^m_{1,s}$ (in degree greater or equal to three) may be used to find linear combinations of Feynman integrals that are $d$-closed, thus satisfy the property given in Thm.(\ref{prop:cycles}). 

Note that in regard to the connection to moduli spaces of graphs (or tropical curves) we retain for $m=1$ the classical (uncolored) cases of the latter spaces which are well studied in the mathematical literature \cite{ChanGalatiusPayne,V,rational}.

\begin{eg}
The computation in Ex.\ (\ref{eg:twomasses}) shows the existence of non-zero classes in $H_2(CG_{1,3}^m)$ for every $m\geq 2$. Furthermore, it implies that
\begin{equation*}
H_2(CG_{1,3}^m) \geq \mb Z_2^{m(m-1)}.
\end{equation*}
For $m=2$ this is an equality, $H_2(CG_{1,3}^2) \cong \mb Z_2^2$. For $m\geq 3$ it is a strict inclusion, because classes of the form constructed in Ex.\ (\ref{eg:holotriangle}) appear as well.
\end{eg}

For $m>1$ there exist only partial results on the homology of the moduli spaces of $m$-colored graphs $\m {MCG}_{1,s}^m$. Tab.(\ref{t:homdimc}) lists the known homology groups with rational coefficients, calculated with computer assistance (a list of generators can be found in \cite{MaxMaster} -- recall, that only for $s\geq 4$ and in degree greater than two this relates to the homology of the above defined complexes $CG$ (with $\mb Q$ replaced by $\mb Z_2$)).

\begin{table}[h!]
\centering
\begin{tabular}{ c || c | c | c | c | c  }
   & $H_0$ & $H_1$ & $H_2$ & $H_3$ & $H_4$ \\
	\hline  \\[-0.36cm]
	$\m {MCG}_{1,1}^2$ & 2 & - & - & - & - \\
$\m {MCG}_{1,2}^2$ & 1 & 0 & - & - & - \\
$\m {MCG}_{1,3}^2$ & 1 & 0 & 6 & - & - \\
$\m {MCG}_{1,4}^2$ & 1 & 0 & 3 & 9 & - \\
$\m {MCG}_{1,5}^2$ & 1 & 0 & 6 & 0 & 84 \\
\end{tabular}
\par\bigskip
\begin{tabular}{ c || c | c | c | c  }
   & $H_0$ & $H_1$ & $H_2$ & $H_3$ \\
	\hline \\[-0.36cm]
$\m {MCG}_{1,1}^3$ & 3 & - & - & - \\
$\m {MCG}_{1,2}^3$ & 1 & 1 & - & - \\
$\m {MCG}_{1,3}^3$ & 1 & 0 & 20 & - \\
$\m {MCG}_{1,4}^3$ & 1 & 0 & 3 & 103 \\
\end{tabular}
\quad
\begin{tabular}{ c || c | c | c | c  }
   & $H_0$ & $H_1$ & $H_2$ & $H_3$\\
	\hline \\[-0.36cm]
$\m {MCG}_{1,1}^4$ & 4 & - & - & - \\
$\m {MCG}_{1,2}^4$ & 1 & 3 & - & - \\
$\m {MCG}_{1,3}^4$ & 1 & 0 & 49 & - \\
$\m {MCG}_{1,4}^4$ & 1 & 0 & 3 & 426 \\
\end{tabular}
\par\bigskip
\begin{tabular}{ c || c | c | c   }
   & $H_0$ & $H_1$ & $H_2$  \\
	\hline \\[-0.36cm]
$\m {MCG}_{1,1}^5$ & 5 & - & -  \\
$\m {MCG}_{1,2}^5$ & 1 & 6 & -  \\
$\m {MCG}_{1,3}^5$ & 1 & 0 & 99  
\end{tabular}
\
\begin{tabular}{ c || c | c | c  }
   & $H_0$ & $H_1$ & $H_2$  \\
	\hline \\[-0.36cm]
$\m {MCG}_{1,1}^6$ & 6 & - & - \\
$\m {MCG}_{1,2}^6$ & 1 & 10 & - \\
$\m {MCG}_{1,3}^6$ & 1 & 0 & 176 
\end{tabular}
\
\begin{tabular}{ c || c | c | c }
   & $H_0$ & $H_1$ & $H_2$\\
	\hline \\[-0.36cm]
$\m {MCG}_{1,1}^7$ & 7 & - & - \\
$\m {MCG}_{1,2}^7$ & 1 & 15 & - \\
$\m {MCG}_{1,3}^7$ & 1 & 0 & 286
\end{tabular}
\vspace{0.5cm}
\caption{The dimensions of the homology groups $H_k(\m {MCG}_{1,s}^m;\mathbb{Q})$ for up to seven colors and various numbers of legs $s$.}
\label{t:homdimc}
\end{table}

Two interesting observations from \cite{MaxMarko}:
\begin{itemize}
 \item The top degree Betti numbers of $\m {MCG}_{1,s}^m$ (and hence also the number of classes in $H_{s-1}({CG_{1,s}^m})$) grow polynomially of degree $s$ as functions of the number of colors $m$ (see Theorem 4.13 in \cite{MaxMarko}).
 \item Conjecturally, only the top degree homology of $\m {MCG}_{1,s}^m$, or equivalently $CG_{1,s}^m$ (if $s\geq4$), depends on the number of colors, all other homology groups are independent of $m$ (see Conjecture 4.4 in \cite{MaxMarko}).
 On the level of Feynman integrals, with our interpretation given here, this appears less surprising. Introducing additional masses, while keeping the number of loops and legs fixed, changes only the constants in the corresponding Feynman integrals. Thus, this only recolors known cycles, except in the highest nontrivial degree where it generates new patterns of mass distributions in a Feynman graph. These patterns may give new homology classes (their number growing polynomially with $m$), while all new classes in lower degree come from reduced graphs, hence are exact.
\end{itemize}

For higher loop numbers the machinery introduced here may still be applied to the study of Feynman integrals, albeit with some restrictions. We find families exhausting a set of common singularities by computing the homology of $CG$, then checking which classes have representatives free of (color-respecting) automorphisms. 
However, it is important to note that for $m>1$ the homology of the complex $(CG^m_{n,s},d)$ or the space $\m{MCG}^m_{n,s}$ (as well as their relationship) is unknown so far.

\begin{rem}\label{twocomplexes}
 The results discussed here and in Sec.(\ref{prelieandccc}) relate two different chain complexes to the analytic structure of Feynman integrals, a cubical chain complex of graphs and spanning forests, and a ``simplicial" graph complex. Heuristically speaking, our results show that the former encodes more information about the analytic structure of Feynman integrals than the latter.
One is thereby led to wonder whether this fact is also reflected on the topological or homological level. 

For one loop graphs it is easy to see that the cubical chain complex arises from a cubical subdivision of the moduli space of (holo- or $m$-)colored graphs, hence it is indeed a finer structure.

This connection does not hold for higher loops, though. Here the cubical chain complex comes from a subdivision of a \emph{subspace} of the moduli space of colored graphs, a deformation retract, called the \textit{spine} in the context of Outer space (the uncolored case).\footnote{If one interprets Feynman integrals as volume forms on moduli spaces of graphs as in \cite{Marko}, then the results of Sec.(\ref{partialfractions}) show how the operation of deformation retracting gets balanced out by a simultaneous change of volume forms: When passing to the deformation retract, each cell, indexed by a graph $G$ is replaced by a cube complex, indexed by pairs $(G,T)$, where $T$ runs over all spanning trees of $G$, which is generally of lower dimension. However, Thm.(\ref{phiGT}) shows that an appropriate change of the associated volume form assures the equivalence of both constructions, that is, both give the same amplitude. See \cite{marko-ltd}.} 
In contrast, as we have seen in the proof of Thm.(\ref{prop:cycles}), the graph complex introduced here computes certain relative homology groups of a larger space\footnote{This space is one of two natural choices for compactifying moduli spaces of graphs. It is obtained by adding all simplices at infinity. The other choice is more intricate, a type of Borel-Serre compactification which is specifically suited to renormalization. See \cite{Marko} for the details.} that \emph{contains} the moduli space of colored graphs as a subspace.
It is thus not clear if and how the cubical chain complex can be understood as a refinement of the graph complex. 

This seems to be another instance of the well-known fact that there is a considerable jump in complexity when passing from the one loop case to higher loop numbers. However, at least on the topological level, this appears to be the only threshold.
Remarkably, the same is true ``in'' Outer space: Understanding the homology of the moduli spaces of one and two loop graphs (with legs) allows to construct (potential) classes in $H_*(\mathrm{Out}(F_n))$ for arbitrary large $n \in \mb N$; see \cite{CoHaKaVo}. 
\end{rem}

\subsection{Partitioning the one loop Green's function}\label{ss:partitiongreensfctn}

Recall that for the special\footnote{Actually, it plays quite a general role for Yang-Mills theories as was shown in \cite{KSvS}.} case of a theory with cubic interaction the graphs contributing to the $s$-point function (1PI) are the maximal degree elements of $HG_{n,s}$ or $CG_{n,s}^m$ (all vertices three-valent).

If $n=1$, then the maximal degree is $s$, so these elements are represented by colored cyclic graphs on $s$ edges.

For the holocolored case we immediately deduce from Thm.(\ref{prop:cycles}) and Eq.\eqref{eq:tophom} that the top degree homology classes in $HG_{1,s}$ form a partition of the set of graphs making up the one loop Green's function.

For general colorings we find this also to be true for $s=3$ and $m=2$: One class in $H_2(CG_{1,3}^2)$ is generated by the element $X$ in Ex.(\ref{eg:twomasses}), another by the same element with $a$ and $b$ interchanged. The graphs in these classes make up all of the graphs in $\mb G_{1,3}$ with two colors. Moreover, a simple calculation confirms that there are no other classes, hence $H_2(CG_{1,3}^2)\cong \mb Z_2^2$. Thus, the two cycles describe a partition of the one loop Green's function, 
\begin{equation*}
  \m A_{1,s}= a_1 + a_2  
\end{equation*}
with $a_1$ and $a_2$ as well as their singularities related by a $\Sigma_2$-symmetry, exchanging the colors $a$ and $b$.
If $m> 2$, then we find a partition of the degree two part of $CG_{1,3}^m$ by taking all classes $X$ as above for $a,b \in \set m$, $a \neq b$, together with the generator of $H_2(HG_{1,3})$ from Ex.(\ref{eg:holotriangle}) with $x,y,z \in \set m$, $x\neq y\neq z$ .
Note, however, that it is not clear whether this exhausts all homology classes. 

The case $s>3$ needs further study -- a starting point would be to use the list of the generators from \cite{MaxMaster} -- as does the question whether this holds for higher loop numbers as well.

\end{document}